\theoremstyle{plain}
\newtheorem{theorem}{Theorem}[section]
\newtheorem{lemma}[theorem]{Lemma}
\newtheorem{fact}[theorem]{Fact}
\newtheorem{corollary}[theorem]{Corollary}
\newtheorem{claim}[theorem]{Claim}
\newtheorem{question}[theorem]{Question}
\theoremstyle{definition}
\newtheorem{definition}[theorem]{Definition}
\newcommand{\zo}{\{0,1\}}
\newcommand{\pmone}{\{\pm 1\}}
\newcommand{\vone}{\vec 1}
\newcommand{\eps}{\varepsilon}
\newcommand{\set}[1]{\left\{#1\right\}}
\DeclareMathOperator*{\EX}{\mathbb E}
\newcommand{\Expp}[2]{\EX_{#1} #2}
\newcommand{\N}{\mathbb N}
\newcommand{\R}{\mathbb R}
\newcommand{\Dc}{\mathcal{D}}
\DeclareMathOperator{\DTEP}{DTEP}
\DeclareMathOperator{\EMD}{{EMD}}
\DeclareMathOperator{\sk}{{\bf sk}}
\providecommand{\card}[1]{\lvert#1\rvert}
\providecommand{\norm}[1]{\lVert#1\rVert}
\providecommand{\tuple}[1]{{\langle{#1}\rangle}}
\newcommand{\indic}[1]{ \mathbf{1}_{\{#1\}} }
\newcommand{\Poincare}{Poincar\'{e}\xspace}
\newcommand{\Patrascu}{P{\v{a}}tra\c{s}cu\xspace}
\newcommand{\Rbb}{\mathbb{R}}
\newcommand{\Kc}{\mathcal{K}}
\newcommand{\Lc}{\mathcal{L}}
\DeclareMathOperator{\supp}{supp}
\DeclareMathOperator{\poly}{poly}
\def\compactify{\itemsep=0pt \topsep=0pt \partopsep=0pt \parsep=0pt}
\newcommand{\aanote}[1]{}
\newcommand{\rnote}[1]{}
\newcommand{\irnote}[1]{}
\newcommand{\myparagraph}[1]{\vspace{1ex}\noindent{\bf #1}}
\title{Sketching and Embedding are Equivalent for Norms%
\thanks{An extended abstract appeared in proceedings of the 47th ACM Symposium on the Theory of Computing (STOC~'2015).
This work was done in part while all authors were at Microsoft
Research Silicon Valley, as well as when the first author was at the
Simons Institute for the Theory of Computing at UC Berkeley.}
}
\author{Alexandr Andoni\footnote{
Columbia University, \texttt{andoni@cs.columbia.edu}}
\and
Robert Krauthgamer\footnote{%
Weizmann Institute of Science, \texttt{robert.krauthgamer@weizmann.ac.il}.
Work supported in part by a US-Israel BSF grant \#2010418
and an Israel Science Foundation grant \#897/13.}
\and
Ilya Razenshteyn\footnote{MIT CSAIL, \texttt{ilyaraz@mit.edu}}
}
\begin{document}

\maketitle

\begin{abstract}
An outstanding open question posed by Guha and Indyk in 2006 
asks to characterize metric spaces in which distances can be estimated
using \emph{efficient sketches}.  Specifically, we say that a~sketching
algorithm is efficient if it achieves constant approximation
using constant sketch size.  A well-known result of Indyk (J.~ACM,
2006) implies that a metric that admits a constant-distortion
embedding into $\ell_p$ for $p\in(0,2]$ also admits an efficient
  sketching scheme.  But is the converse true, i.e., is embedding into
  $\ell_p$ the only way to achieve efficient sketching?

We address these questions for the important special case of \emph{normed spaces},
by providing an almost complete characterization of sketching
in terms of embeddings.
In particular, we prove that a finite-dimensional
normed space allows efficient sketches
if and only if it embeds (linearly) into $\ell_{1-\eps}$ with constant
distortion.
We further prove that for norms that are closed under sum-product,
efficient sketching is equivalent to embedding into $\ell_1$ with constant distortion.
Examples of such norms include the \emph{Earth Mover's Distance}
(specifically its norm variant, called Kantorovich--Rubinstein norm),
and the \emph{trace norm} (a.k.a. Schatten $1$-norm or the nuclear norm).
Using known non-embeddability theorems for these norms by
Naor and Schechtman (SICOMP, 2007) and by Pisier (Compositio.\ Math., 1978),
we then conclude that these spaces do not admit efficient sketches either,
making progress towards answering another open question
posed by Indyk in 2006.

Finally, we observe that resolving whether
``sketching is equivalent to embedding into $\ell_1$ for general norms''
(i.e., without the above restriction)
is \emph{equivalent} to resolving a well-known open problem in Functional Analysis posed by Kwapien in 1969.
\end{abstract}

\onehalfspacing

\section{Introduction}
\label{sec:intro}

One of the most exciting notions in the modern algorithm design
is that of \emph{sketching}, where an~input is summarized into 
a small data structure.
Perhaps the most prominent use of sketching is to estimate 
\emph{distances between points}, one of the workhorses of similarity search. 
For example, some early uses of sketches have been designed
for detecting duplicates and estimating resemblance between documents
\cite{Bro-resemblance, BGMZ, Char}. Another example is Nearest
Neighbor Search, where many algorithms rely heavily on
sketches, under the labels of dimension reduction (like the
Johnson-Lindenstrauss Lemma~\cite{jl-elmhs-84}) 
or
Locality-Sensitive Hashing (see e.g.~\cite{IM,KOR00,AI-CACM}).
Sketches see widespread use in streaming algorithms,
for instance when the input implicitly defines a high-dimensional vector 
(via say frequencies of items in the stream), and a sketch is used to estimate
the vector's $\ell_p$ norm. 
The situation is similar in compressive sensing, where
acquisition of a signal can be viewed as sketching. 
Sketching---especially of distances such as $\ell_p$ norms---%
was even used to achieve improvements for \emph{classical} computational tasks:
see e.g.\ recent progress on 
numerical linear algebra algorithms \cite{Woodruff-sketchBook}, 
or dynamic graph algorithms \cite{agm12-linearGraphs, KKM13-dynamicGraph}.
Since sketching is a crucial primitive that can lead to many
algorithmic advances, it is important to understand its power and limitations.

A primary use of sketches is for {\em distance estimation} between
points in a metric space $(X,d_X)$, such as the Hamming space.  The
basic setup here asks to design a \emph{sketching} function $\sk:X\to \zo^s$,
so that the distance $d_X(x,y)$ can be estimated given only the
sketches $\sk(x),\sk(y)$.  
In the decision version of this problem, the goal is to determine 
whether the inputs $x$ and $y$ are ``close'' or ``far'',
as formalized by the {\em Distance Threshold Estimation Problem}
\cite{SS02}, denoted $\DTEP_r(X,D)$,
where, for a threshold $r>0$ and approximation $D\ge 1$ given as parameters in advance, 
the goal is to decide whether $d_X(x,y)\le r$ or $d_X(x,y)> D r$.
Throughout, it will be convenient to omit $r$ from the subscript.%
\footnote{When $X$ is a normed space it suffices to consider $r=1$
by simply scaling the inputs $x,y$.
}
Efficient sketches $\sk$
almost always need to be randomized, and hence we allow randomization,
requiring (say) 90\% success probability.

The diversity of applications gives rise to a variety of natural
and important metrics $M$ for which we want to solve DTEP: Hamming
space, Euclidean space, other $\ell_p$ norms, the Earth Mover's Distance,
edit distance, and so forth.
Sketches for Hamming and Euclidean distances are now classic and
well-understood \cite{IM,KOR00}.  In particular, both are
``efficiently sketchable'': one can achieve approximation
$D=O(1)$ using sketch size $s=O(1)$ (most importantly,
independent of the dimension of $X$).  Indyk~\cite{I00b} extended
these results to efficient sketches for every $\ell_p$ norm for
$p\in(0,2]$.
In contrast, for $\ell_p$-spaces with $p>2$, 
efficient sketching (constant $D$ and $s$) was proved impossible 
using information-theoretic arguments \cite{SS02,BJKS}.
Extensive subsequent work investigated sketching of other important metric spaces,%
\footnote{Other metric spaces include
edit distance \cite{BJKK04,BES06,OR-edit,AK07} 
and its variants \cite{CPSV,MS00,CMS01,CM07,CK-Ulam,AIK-product},
the Earth Mover's Distance in the plane or in hypercubes \cite{Char,IT,Char-EMD1,KN,AIK,ADIW-EMD},
cascaded norms of matrices \cite{JW-cascaded},
and the trace norm of matrices \cite{lnw14-schatten}.
}
or refined bounds (like a trade-off between $D$ and $s$) 
for ``known'' spaces.%
\footnote{These refinements include 
the Gap-Hamming-Distance problem
\cite{W04,JKS08-ow,BC-GapHam,BCRVdW10,CR12,Sherstov12,Vidick12},
and LSH in $\ell_1$ and $\ell_2$ spaces \cite{MNP,OWZ14}.
}

These efforts provided beautiful results and techniques 
for many specific settings. 
Seeking a~broader perspective, a foundational question 
has emerged \cite[Question \#5]{streaming-open}:
\begin{question} \label{q1}
Characterize metric spaces which admit efficient sketching.
\end{question}
To focus the question, efficient sketching will mean constant $D$ and
$s$ for us.  Since its formulation circa 2006, progress on this
question has been limited.  The only known characterization is by
\cite{GIM-sketch} for distances that are decomposable by coordinates,
i.e., $d_X(x,y)=\sum_{i=1}^{n} \varphi(x_i,y_i)$ for
some~$\varphi$. In particular, they show a number of general
conditions on $\varphi$ which imply an $\Omega(n)$ sketching
complexity for $d_X$.

\subsection{The embedding approach}

To address DTEP in various metric spaces more systematically,
researchers have undertaken the approach of metric embeddings.  A
\emph{metric embedding} of $X$ is a map $f:X\to Y$ into another
metric space $(Y,d_Y)$. The \emph{distortion} of $f$ is the
smallest $D'\ge 1$ for which there exists a scaling factor $t>0$ such
that
\begin{equation*}
  \forall x,y\in X, \qquad
  d_{Y}(f(x),f(y)) \leq t \cdot d_X(x,y) \leq D'\cdot d_{Y}(f(x),f(y)).
\end{equation*}
If the target metric $Y$ admits sketching with parameters $D$ and $s$,
then $X$ admits sketching with parameters $D\cdot D'$ and $s$,
by the simple composition $\sk': x\mapsto \sk(f(x))$.
This approach of ``reducing'' sketching to embedding has been very successful, 
including for variants of the Earth Mover's Distance \cite{Char,IT,Char-EMD1,NS,AIK}, and
for variants of edit distance
\cite{BES06,OR-edit,CK-Ulam,AIK-product,CPSV,MS00,CMS01,CM07}.  The
approach is obviously most useful when $Y$ itself is efficiently
sketchable, which holds for all $Y=\ell_p$, $p\in(0,2]$ \cite{I00b}
(we note that $\ell_p$ for $0 < p < 1$ is not a metric space,
but rather a \emph{quasimetric} space;
the above definitions of embedding and distortion make sense 
even when $Y$ is a quasimetric, 
and we will use this extended definition liberally).
In fact, the embeddings mentioned above are all into $\ell_1$,
except for \cite{AIK-product} which employs a more complicated
target space. We remark that in many cases the distortion $D'$
achieved in the current literature is not constant and depends on 
the ``dimension'' of $X$. 

Extensive research on embeddability into $\ell_1$ has resulted 
in several important distortion lower bounds.
Some address the aforementioned metrics \cite{KN, NS, KR09, AK07}, 
while others deal with metric spaces arising in rather different contexts
such as Functional Analysis \cite{Pisier78,CK10,CKN11-heisenberg}, 
or Approximation Algorithms \cite{LLR,AR98,KV05,KS09}.
Nevertheless, obtaining (optimal) distortion bounds 
for $\ell_1$-embeddability of several metric spaces of interest,
are still well-known open questions \cite{Mat-open}.

Yet sketching is a more general notion, and one may hope to achieve
better approximation by bypassing embeddings into $\ell_1$. 
As mentioned above, some limited success in bypassing an $\ell_1$-embedding 
has been obtained for a variant of edit distance \cite{AIK-product}, 
albeit with a sketch size depending mildly on the dimension of $X$.
Our results disparage these hopes,
at least for the case of \emph{normed spaces}.

\subsection{Our results}

Our main contribution is to show that efficient sketchability of norms
is {\em equivalent} to embeddability into $\ell_{1-\eps}$ with constant distortion. 
Below we only assert the ``sketching $\implies$ embedding'' direction,
as the reverse direction follows from \cite{I00b}, as discussed above.

\begin{theorem} \label{thm:l_1eps}
Let $X$ be a finite-dimensional normed space, and suppose that $0 < \eps < 1/3$.  If
$X$ admits a sketching algorithm for $\DTEP(X,D)$ for 
approximation $D > 1$ with sketch size $s$, then $X$ linearly embeds
into $\ell_{1-\eps}$ with distortion $D'=O(sD / \eps)$.
\end{theorem}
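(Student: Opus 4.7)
The plan is to use the sketching algorithm to construct a conditionally negative definite (CND) kernel on $X$ quantitatively comparable to $\|x-y\|^{1-\eps}$, and then invoke the Bretagnolle--Dacunha-Castelle--Krivine (BDCK) characterisation: a finite-dimensional normed space $X$ embeds linearly into $L_p$ for $0<p\le 2$ iff $\|x-y\|^p$ is CND on $X$, equivalently iff $\exp(-t\|x-y\|^p)$ is positive definite for every $t > 0$.

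First I would extract from the DTEP sketch a positive definite (PD) kernel $K(x,y)$ with $K(x,x)=1$ and a quantitative gap: $K(x,y) \ge \alpha$ when $\|x-y\| \le 1$ and $K(x,y) \le \beta$ when $\|x-y\| > D$, with $\alpha - \beta$ at least a fixed polynomial in $1/s$. The natural candidate is the equality kernel $K(x,y) = \Pr_\omega[\sk_\omega(x) = \sk_\omega(y)]$, which is manifestly PD since it is the inner product of the feature maps $x \mapsto (\mathbf{1}\{\sk_\omega(x) = \sigma\})_{\omega,\sigma}$ in $\ell_2$. An $s$-bit decoder that outputs the correct close/far bit with constant probability forces a polynomial-in-$s$ separation in the equality probability, which can be amplified by standard PD-preserving operations.

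Second, by homogeneity of $\|\cdot\|$, for each scale $r > 0$ the rescaled kernel $K_r(x,y) := K(x/r, y/r)$ is PD with $K_r(x,x) = 1$ and distinguishes the radii $r$ and $Dr$; hence $1 - K_r$ is CND at every scale. Third, form the scale-averaged kernel
\begin{equation*}
    \Psi(x,y) \;=\; (1-\eps) \int_0^\infty \bigl(1 - K_r(x,y)\bigr)\, r^{-\eps}\, dr,
\end{equation*}
which is CND as a positive integral of CND kernels. A change of variables yields $\Psi(\lambda x, \lambda y) = |\lambda|^{1-\eps} \Psi(x,y)$, and with an idealised $K_r = \mathbf{1}\{\|x-y\| \le r\}$ the integral evaluates to exactly $\|x-y\|^{1-\eps}$. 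The actual $K_r$ transitions between $\alpha$ and $\beta$ across a window of width $D$ rather than sharply, so $\Psi$ is pinched between constants times $\|x-y\|^{1-\eps}$ with ratio $O(sD/\eps)$, where the factor $1/\eps$ arises from the tails of the $r$-integral.

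Finally, set $\rho(v) := \Psi(0,v)^{1/(1-\eps)}$. By construction $\rho$ is $1$-homogeneous, equivalent to $\|\cdot\|$ with ratio $O(sD/\eps)$, and $\rho^{1-\eps} = \Psi$ is CND on $X$. BDCK then supplies a linear map $\phi : X \to L_{1-\eps}$ via a $(1-\eps)$-stable process indexed linearly over $X$, with $\|\phi(x) - \phi(y)\|_{L_{1-\eps}} = \rho(x-y)$; composing with the equivalence of $\rho$ and $\|\cdot\|$ delivers the desired linear embedding $X \hookrightarrow L_{1-\eps}$ of distortion $O(sD/\eps)$.

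The main obstacle is Step~1: converting an $s$-bit decoder guarantee into an equality-based PD kernel with an explicit polynomial gap. The decoder can be an arbitrary non-bilinear function of $\sk(x)$ and $\sk(y)$, so one must argue that if the decoder succeeds with constant probability then the equality probability $\Pr_\omega[\sk_\omega(x) = \sk_\omega(y)]$ already exhibits a quantitative gap, or else manufacture a slightly different PD kernel from the protocol (for instance via a second-moment of the decoder's output). A subsidiary subtlety is the $1/\eps$ factor in the distortion: pushing $\eps\to 0$ would target $L_1$ itself, which as the introduction notes would resolve Kwapień's long-standing open problem, so the approach genuinely requires $\eps > 0$.
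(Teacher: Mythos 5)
Your overall plan---extract a positive-definite kernel with a quantitative gap from the protocol, homogenize it, and invoke a BDCK/stable-process representation for $L_{1-\eps}$---is genuinely different in structure from the paper's route (which passes through $\ell_\infty^k(X)$ protocols, the Andoni--Jayram--P\u{a}tra\c{s}cu information-complexity reduction and Poincar\'e inequalities, SDP duality to a threshold map, H\"older extension, AMM symmetrization, Schoenberg, and Bochner). The scale-integration step and the BDCK endgame are an appealing alternative packaging of the analytic part. However, the proposal has two independently fatal gaps.

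\textbf{The collision kernel does not exhibit a quantitative gap.} You propose $K(x,y)=\Pr_\omega[\sk_\omega(x)=\sk_\omega(y)]$ and claim a polynomial-in-$s$ separation between close and far pairs. This is false: adjoin $s$ fresh independent random bits to every sketch and leave the decision function unchanged. The new protocol is still a valid $s'=2s$-bit sketch with the same correctness, but the equality probability of every pair of points is multiplied by $2^{-s}$, so both $\alpha$ and $\beta$ collapse to $o(1)$ and the additive gap becomes exponentially small. The decoder is an arbitrary, generally non-bilinear function of $\sk(x)$ and $\sk(y)$, and its success does not force the collision probability to be informative. You half-acknowledge this, but the ``second-moment of the decoder'' idea does not obviously produce a positive-definite kernel either: $\EX_\omega[d(\sk_\omega(x),\sk_\omega(y))]$ has the gap but is not PD, and bilinearizing it is exactly the nontrivial step. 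Converting a general multi-round protocol into a quantitative PD/Hilbertian object is precisely what the paper spends Lemma~\ref{lem:fold}, Theorem~\ref{thm:AJP} (information complexity), and the duality argument of Lemma~\ref{lem:Poincare2Thresh} on; no shortcut via collision probabilities is available.

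\textbf{The scale integral diverges.} Even granting a (translation-invariant) PD kernel $K$ with $K(0)=1$, $K(u)\ge\alpha$ for $\|u\|\le 1$, and $K(u)\le\beta$ for $\|u\|>D$, the integral
\begin{equation*}
\Psi(u)=(1-\eps)\int_0^\infty\bigl(1-K(u/r)\bigr)\,r^{-\eps}\,dr
\end{equation*}
does not converge. For $r>\|u\|$ the threshold hypothesis only gives $1-K(u/r)\le 1-\alpha$, a constant bounded away from $0$ whenever $\alpha<1$, and $\int_{\|u\|}^\infty r^{-\eps}\,dr=\infty$ for $0<\eps<1$. This is not merely the source of a $1/\eps$ factor; it makes $\Psi\equiv+\infty$. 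To make the integral converge one needs $1-K(w)=O(\|w\|)$ as $w\to 0$, i.e., a quantitative modulus of continuity at the origin, which the threshold hypothesis does not provide. This is exactly what the paper works to obtain: the $1/2$-H\"older upper bound in Theorem~\ref{main_quant}, secured via the Wells--Williams extension theorem (Theorem~\ref{extension}) on a $1$-net and the AMM symmetrization lemma (Lemma~\ref{symmetrization}), yields a kernel $\varphi(u)\ge e^{-\widetilde K\|u\|}$, so $1-\varphi(u)\le\widetilde K\|u\|$ and the large-$r$ tail contributes $O(\|u\|/\eps)$. Without that upgrade your $\Psi$ cannot be ``pinched between constants times $\|x-y\|^{1-\eps}$.''

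A smaller but real omission: translation-invariance of $K$ is assumed silently (you write $K_r(x,y)=K((x-y)/r)$ implicitly when you evaluate $\Psi(0,v)$), but the collision kernel of a sketch is not a priori a function of $x-y$; a symmetrization over the additive group of $X$ (the paper's Lemma~\ref{symmetrization_dot}/\ref{symmetrization}, applied via a Banach limit over translates) is required. Once these three points---kernel extraction, near-origin modulus, and translation-invariance---are supplied, your scale-integration + BDCK endgame is a legitimate and arguably cleaner alternative to the paper's Bochner-plus-tail-estimates argument in Section~\ref{quant_sect}; but as stated, the proposal does not prove the theorem.
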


One can ask whether it is possible to improve Theorem~\ref{thm:l_1eps}
by showing that $X$, in fact, embeds into $\ell_1$.  Since many
non-embeddability theorems are proved for $\ell_1$, such a statement
would ``upgrade'' such results to lower bounds for sketches. Indeed, we show
results in this direction too. First of all, the above theorem also
yields the following statement.

\begin{theorem}
  \label{thm:l_1log}
  Under the conditions of Theorem~\ref{thm:l_1eps}, $X$ linearly embeds into $\ell_1$
  with distortion $O(sD \cdot \log (\dim X))$.
\end{theorem}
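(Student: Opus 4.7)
\medskip
\noindent\textbf{Proof proposal.} The plan is to invoke Theorem~\ref{thm:l_1eps} with $\eps$ chosen as a small function of $n := \dim X$, and then close the remaining gap from $\ell_{1-\eps}$ to $\ell_1$ using the fact that on a finite-dimensional target these two (quasi)norms are pointwise close.

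First I would set $\eps := c/\log n$ for a small absolute constant $c>0$ and apply Theorem~\ref{thm:l_1eps}, which produces a linear embedding $f : X \to \ell_{1-\eps}$ of distortion $D' = O(sD/\eps) = O(sD \log n)$. This already matches the target distortion up to constants, so what remains is to convert $f$ into a linear map into $\ell_1$ without incurring more than a constant additional blow-up.

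Next I would reduce to a finite-dimensional target. Since $X$ is $n$-dimensional, a standard sampling argument (approximating $\|f(\cdot)\|_{1-\eps}$ via $(1-\eps)$-stable random variables, or via truncation followed by a low-moment estimator, combined with a union bound over an $O(1)$-net of the unit sphere of $f(X)$) should yield, at the cost of an $O(1)$ factor in the distortion, a factorisation of $f$ through $\ell_{1-\eps}^N$ for some $N = n^{O(1)}$. With a finite-dimensional target in hand, the power-mean inequality gives
\[
  \|x\|_1 \leq \|x\|_{1-\eps} \leq N^{\eps/(1-\eps)}\,\|x\|_1 \qquad (x \in \mathbb{R}^N),
\]
and for $\eps = c/\log n$, $N = n^{O(1)}$, with $c$ chosen small enough, the factor $N^{\eps/(1-\eps)}$ is bounded by an absolute constant. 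Hence the identity map $\ell_{1-\eps}^N \to \ell_1^N$ is an $O(1)$-distortion linear isomorphism, and composing it with $f$ delivers the desired linear embedding $X \to \ell_1$ of distortion $O(sD \log n)$.

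The main technical point will be the finite-dimensional reduction: for $0<1-\eps<1$ the target $\ell_{1-\eps}$ is only a quasi-Banach space, so the classical Banach-space sampling tools for $L_p$-embeddings need to be checked in this regime (this seems routine but not entirely trivial). If, alternatively, the construction in the proof of Theorem~\ref{thm:l_1eps} is already finite-dimensional with polynomial ambient dimension, this reduction is essentially free and only the norm-comparison step is needed.
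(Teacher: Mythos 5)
Your proposal is correct and follows exactly the paper's route: pick $\eps = \Theta(1/\log\dim X)$, embed into $\ell_{1-\eps}$ via Theorem~\ref{thm:l_1eps} with distortion $O(sD\log\dim X)$, reduce the ambient dimension to $\poly(\dim X)$, and compare $\|\cdot\|_{1-\eps}$ with $\|\cdot\|_1$ on the finite-dimensional target, which for this choice of $\eps$ costs only an $O(1)$ factor. The dimension-reduction step you flag as the delicate point is handled in the paper by citing a theorem of Zvavitch~\cite{Z00} (Lemma~\ref{lem:Z00}), the $p<1$ analogue of the Bourgain--Lindenstrauss--Milman/Talagrand dimension-reduction results, which embeds any $d$-dimensional subspace of $L_{1-\eps}$ into $\ell_{1-\eps}^{d\cdot\poly(\log d)}$ with $O(1)$ distortion; note that your alternative ``free'' route does not apply, as the construction in Theorem~\ref{thm:l_1eps}/Theorem~\ref{thm:unifL1quant} lands in $L_{1-\eps}(\mu)$ for a general probability measure $\mu$ rather than in a low-dimensional $\ell_{1-\eps}^N$.
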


Ideally, we would like an even stronger statement: efficient
sketchability for norms is equivalent to embeddability into $\ell_1$
with constant distortion (i.e., independent of the dimension of $X$ as
above).  Such a stronger statement in fact requires the resolution of
an open problem posed by Kwapien in 1969 (see \cite{Kalton85, BL00}).
To be precise, Kwapien asks whether every finite-dimensional normed
space $X$ that embeds into $\ell_{1 - \eps}$ for $0 < \eps < 1$ with
distortion $D_0\ge 1$ must also embed into $\ell_1$ with distortion
$D_1$ that depends only on $D_0$ and $\eps$ but not on the dimension
of $X$ (this is a reformulation of the finite-dimensional version of
the original Kwapien's question).  In fact, by
Theorem~\ref{thm:l_1eps}, the ``efficient sketching $\implies$
embedding into $\ell_1$ with constant distortion'' statement is {\em
  equivalent} to a positive resolution of the Kwapien's problem.
Indeed, for the other direction, consider a potential counter-example
to the Kwapien's problem, i.e., a normed space $X$ that embeds
into $\ell_{1-\eps}$ with a constant distortion $D_0\ge1$, but every embedding of
$X$ into $\ell_1$ incurs a distortion $D_1=\omega(1)$,
where the asymptotics is with the dimension of $X$
(it is really a sequence of normed spaces). 
Hence, $X$ admits an efficient sketch obtained by combining 
the embedding into $\ell_{1-\eps}$ with the sketch of~\cite{I00b}, 
but does not embed into $\ell_1$ with constant distortion. 
Thus, if the answer to Kwapien's question is negative,
then our desired stronger statement is false.

To bypass the resolution of the Kwapien's problem, we prove the
following variant of the theorem using a result of
Kalton~\cite{Kalton85}: efficient sketchability is equivalent to
$\ell_1$-embeddability with constant distortion for norms that are
\emph{``closed'' under sum-products}.  A sum-product of two normed
spaces $X$~and~$Y$, denoted $X \oplus_{\ell_1} Y$, is the normed space
$X \times Y$ endowed with $\|(x, y)\| = \|x\| + \|y\|$.  It is easy to
verify that $\ell_1$, the Earth Mover's Distance, and the trace norm
are all closed under taking sum-products (potentially with an increase
in the dimension).  Again, we only need to show the ``sketching
$\implies$ embedding'' direction, as the reverse direction follows
from the arguments above ---
if a normed space $X$ embed into $\ell_1$ with constant distortion, 
we can combine it with the $\ell_1$ sketch of \cite{I00b} 
and obtain an efficient sketch for $X$.  
We discuss the
application of this theorem to the Earth Mover's Distance in
Section~\ref{sec:applications}.

\begin{theorem} \label{thm:l_1}
Let $(X_n)_{n=1}^{\infty}$ be a sequence of finite-dimensional normed
spaces.  Suppose that for every $i_1, i_2 \geq 1$ there exists
$m = m(i_1, i_2) \geq 1$ such that $X_{i_1} \oplus_{\ell_1} X_{i_2}$ embeds isometrically into~$X_m$.
Assume that every $X_n$ admits a sketching algorithm for
$\DTEP(X_n,D)$ for fixed approximation
$D > 1$ with fixed sketch size $s$ (both independent of $n$).
Then, every $X_n$ linearly embeds into $\ell_1$ with bounded distortion (independent of $n$).
\end{theorem}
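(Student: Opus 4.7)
The plan is to combine Theorem~\ref{thm:l_1eps} with a self-improvement argument enabled by the sum-product hypothesis and a classical theorem of Kalton~\cite{Kalton85}. Directly applying Theorem~\ref{thm:l_1eps} to each $X_n$ with a small parameter $\eps \in (0, 1/3)$ yields a linear embedding into $\ell_{1-\eps}$ with distortion $D_\eps = O(sD/\eps)$. Unfortunately this distortion diverges as $\eps \to 0$, so one cannot simply pass to an $\ell_1$-embedding by taking a limit; instead we must leverage the structural information about the sequence $(X_n)$.

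The second step amplifies the embedding via sum-product closure. By iterating the hypothesis, for every $n$ and every $k \geq 1$ there exists an index $m = m(n, k)$ such that the $k$-fold $\ell_1$-sum $X_n \oplus_{\ell_1} \cdots \oplus_{\ell_1} X_n$ embeds isometrically into $X_m$. Since the sketching parameters $s$ and $D$ are uniform across the sequence, $X_m$ admits a $\DTEP(X_m,D)$ sketch of size $s$, so Theorem~\ref{thm:l_1eps} applies to $X_m$ with the same $s, D$. Composing with the isometric embedding, we conclude that the $k$-fold $\ell_1$-sum of $X_n$ embeds into $\ell_{1-\eps}$ with distortion at most $D_\eps$, crucially \emph{independent of $k$} (and of $n$).

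The final step invokes Kalton's theorem from~\cite{Kalton85}, which in the finite-dimensional form relevant here asserts that a normed space $Y$ whose $\ell_1$-sums $\ell_1^k(Y)$ embed into $\ell_{1-\eps}$ with distortion uniform in $k$ must itself linearly embed into $\ell_1$ with distortion depending only on $\eps$ and that uniform bound. Applying this with $Y = X_n$ yields an $\ell_1$-embedding of each $X_n$ whose distortion is a function only of $s$, $D$, and $\eps$, and hence is independent of $n$. The principal obstacle, and the reason sum-product closure is essential, is precisely the removal of the $\eps$-dependence in Theorem~\ref{thm:l_1eps}: without amplification through direct sums the passage from $\ell_{1-\eps}$ to $\ell_1$ would resolve the finite-dimensional Kwapien problem in full generality, whereas Kalton's theorem supplies exactly the needed translation under the structural hypothesis that every $\ell_1$-sum of the spaces in the sequence embeds uniformly into $\ell_{1-\eps}$.
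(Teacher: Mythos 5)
Your overall strategy is sound and quite close in spirit to the paper's, but there is a genuine gap in the third step. The paper's Theorem~\ref{thm:kalton} (Kalton's theorem combined with Aharoni--Maurey--Mityagin) states that a Banach space $Y$ embeds linearly into $L_1$ iff the \emph{infinite} $\ell_1$-sum $\ell_1(Y)$ embeds \emph{uniformly} into a Hilbert space. What you invoke --- that uniform $\ell_{1-\eps}$-embeddability of the \emph{finite} sums $\ell_1^k(Y)$, with a constant independent of $k$, already implies $Y$ embeds into $\ell_1$ --- is not a statement Kalton proved, and calling it ``Kalton's theorem in the finite-dimensional form relevant here'' papers over exactly the step that requires work. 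To bridge the gap you would need two things: (a) observe that $\ell_{1-\eps}$, being $\ell_p$ with $p<2$, embeds uniformly into Hilbert via the standard $\|\cdot\|_p^{p/2}$-snowflake, so your distortion-$D_\eps$ embeddings of $\ell_1^k(X_n)$ into $\ell_{1-\eps}$ become uniform Hilbert embeddings with moduli independent of $k$ and $n$; and (b) a compactness/ultrafilter argument (this is Proposition 8.12 of \cite{BL00}, quoted as Theorem~\ref{ultrafilters} in the paper) to pass from all finite sums $\ell_1^k(X_n)$ to the infinite sum $\ell_1(X_n)$, after which the actual Kalton--AMM theorem applies. The paper packages (a) and (b) as its Lemma~\ref{lem:L1Embedding}, the ``finitary version of Kalton'', and proves it precisely because it is not off-the-shelf.

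On the route itself: you start from Theorem~\ref{thm:l_1eps} (the $\ell_{1-\eps}$-embedding), whereas the paper's proof starts directly from Theorem~\ref{main_quant} (the uniform Hilbert embedding) and never revisits $\ell_{1-\eps}$. Your path is mildly roundabout --- Theorem~\ref{thm:l_1eps} is itself proved by first producing a uniform Hilbert embedding and then applying AMM--Nikishin, so you would produce a Hilbert embedding, convert it to $\ell_{1-\eps}$, and then convert it back to a Hilbert embedding before applying Kalton. That back-and-forth is harmless but unnecessary. The amplification step you describe --- iterating the sum-product closure so that all $\ell_1^k$-powers of $X_n$ sit inside some $X_m$ to which the uniform sketching bound applies --- is exactly the key structural idea also driving the paper's Lemma~\ref{lem:L1Embedding}, and your articulation of why sum-product closure lets one side-step the Kwapien problem is accurate.
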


Overall, we almost completely characterize the norms that are
efficiently sketchable,
thereby making a significant progress on Question~\ref{q1}.  
In particular, our results suggest that the embedding approach (embed into
$\ell_p$ for some $p\in(0,2]$, and use the sketch from \cite{I00b}) 
is essentially unavoidable for norms. 
It is interesting to note that for general metrics (not norms) 
the implication ``efficient sketching$\implies$embedding into $\ell_1$ with constant distortion'' is false: 
for example the Heisenberg group embeds into $\ell_2$-squared 
(with bounded distortion) and hence is efficiently sketchable,
but it is not embeddable into~$\ell_1$~\cite{LN06-heisenberg,CK10,CKN11-heisenberg}
(another example of this sort is provided by Khot and Vishnoi~\cite{KV05}). 
At the same time, we are not aware of any counter-example to
the generalization of Theorem~\ref{thm:l_1eps} to general metrics.

\subsection{Applications}
\label{sec:applications}

To demonstrate the applicability of our results to concrete questions
of interest, we consider two well-known families of normed spaces, for
which we obtain the first non-trivial lower bounds on the sketching
complexity.

\myparagraph{Trace norm.}
Let $\mathcal{T}_n$ be the vector space $\Rbb^{n\times n}$ 
(all real square $n \times n$ matrices) equipped with the trace norm 
(also known as the nuclear norm and Schatten $1$-norm), which is defined to be the sum of singular values.
It is well-known that $\mathcal{T}_n$ embeds into $\ell_2$ (and thus also 
into $\ell_1$) with distortion $\sqrt{n}$ 
(observe that the trace norm is within $\sqrt{n}$ from the Frobenius norm, which
embeds isometrically into $\ell_2$).
Pisier~\cite{Pisier78} proved a matching lower bound of $\Omega(\sqrt{n})$ 
for the distortion of any embedding of~$\mathcal{T}_n$ into $\ell_1$.

This non-embeddability result, combined with our Theorem~\ref{thm:l_1log}, implies a sketching lower bound for the trace norm. 
Before, only lower bounds for specific types of sketches (linear and bilinear) were known~\cite{lnw14-schatten}.

\begin{corollary}
For any sketching algorithm for $\DTEP(\mathcal{T}_n, D)$ with sketch size $s$ the following bound must hold:
$$
sD = \Omega\left(\frac{\sqrt{n}}{\log n}\right).
$$
\end{corollary}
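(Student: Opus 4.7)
The plan is to combine Theorem~\ref{thm:l_1log} with Pisier's non-embeddability bound for the trace norm in a direct chain of implications, treating the corollary as a contrapositive-style deduction.

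First I would assume for contradiction (or just directly, since the statement is quantitative) that there is a sketching algorithm for $\DTEP(\mathcal{T}_n, D)$ of sketch size $s$. Since $\mathcal{T}_n$ is a finite-dimensional normed space of dimension $\dim \mathcal{T}_n = n^2$, the hypothesis of Theorem~\ref{thm:l_1log} is satisfied. Applying that theorem yields a linear embedding of $\mathcal{T}_n$ into $\ell_1$ with distortion
\[
  D' = O\bigl(sD \cdot \log(\dim \mathcal{T}_n)\bigr) = O(sD \cdot \log n),
\]
using $\log(n^2) = 2\log n$.

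Next I would invoke Pisier's theorem~\cite{Pisier78}, which asserts that any embedding of $\mathcal{T}_n$ into $\ell_1$ requires distortion $\Omega(\sqrt{n})$. Combining the two bounds, we obtain
\[
  sD \cdot \log n \;=\; \Omega(\sqrt{n}),
\]
which after rearranging gives $sD = \Omega(\sqrt{n}/\log n)$, as claimed.

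There is essentially no obstacle here: the conceptual work is entirely contained in Theorem~\ref{thm:l_1log} (on our side) and Pisier's lower bound (on the geometric analysis side), and the corollary is simply the product of plugging the dimension $n^2$ into the logarithmic factor of Theorem~\ref{thm:l_1log} and comparing against $\Omega(\sqrt{n})$. The only minor point worth noting is that Theorem~\ref{thm:l_1log} produces a \emph{linear} embedding, which is exactly the regime in which Pisier's lower bound applies (in fact Pisier's bound holds even for non-linear embeddings, so this is not a concern).
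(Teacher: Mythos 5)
Your proof is correct and is exactly the argument the paper intends: apply Theorem~\ref{thm:l_1log} to obtain a linear embedding of $\mathcal{T}_n$ into $\ell_1$ with distortion $O(sD\log n)$ (using $\dim\mathcal{T}_n = n^2$), then compare against Pisier's $\Omega(\sqrt{n})$ lower bound for $\ell_1$-embeddings of the trace-norm space. The only small caveat is your parenthetical claim that Pisier's bound holds for non-linear embeddings; the result as cited is a Banach-space (linear) non-embeddability theorem, but since Theorem~\ref{thm:l_1log} does produce a linear embedding, this is immaterial to the argument.
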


\myparagraph{Earth Mover's Distance.}
The (planar) Earth Mover's Distance (also known as the transportation distance, 
Wasserstein-$1$ distance, and Monge-Kantorovich distance) 
is the vector space $\EMD_n = \{p\in \Rbb^{[n]^2}:\ \sum_i p_i=0\}$ 
endowed with the norm $\|p\|_{\EMD}$ defined as
the minimum cost needed to transport the ``positive part'' of $p$ 
to the ``negative part'' of $p$, 
where the transportation cost per unit between two points in the grid $[n]^2$ is their $\ell_1$-distance 
(for a formal definition see~\cite{NS}).
It is known that this norm embeds into $\ell_1$ 
with distortion $O(\log n)$~\cite{IT,Char,NS},
and that any $\ell_1$-embedding requires distortion $\Omega(\sqrt{\log n})$
\cite{NS}.

We obtain the first sketching lower bound for $\EMD_n$, which in
particular addresses a well-known open question 
\cite[Question \#7]{streaming-open}. 
Its proof is a direct application of Theorem~\ref{thm:l_1} 
(which we \emph{can} apply, since $\EMD_n$ is obviously
closed under taking sum-products), 
to essentially ``upgrade'' the
known non-embeddability into $\ell_1$ \cite{NS} to non-sketchability. 

\begin{corollary}
No sketching algorithm for $\DTEP(\EMD_n, D)$ can achieve approximation $D$ 
and sketch size $s$ that are constant (independent of $n$).
\end{corollary}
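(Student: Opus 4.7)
The plan is to derive the corollary as a contrapositive: if $\EMD_n$ were efficiently sketchable uniformly in $n$, then by Theorem~\ref{thm:l_1} each $\EMD_n$ would embed into $\ell_1$ with distortion independent of $n$, which contradicts the $\Omega(\sqrt{\log n})$ lower bound of Naor--Schechtman~\cite{NS}. To apply Theorem~\ref{thm:l_1} to the sequence $(\EMD_n)_{n=1}^\infty$, the only nontrivial task is to exhibit, for every $i_1,i_2\ge 1$, an index $m=m(i_1,i_2)$ and an isometric linear embedding of $\EMD_{i_1}\oplus_{\ell_1}\EMD_{i_2}$ into $\EMD_m$.

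First I would construct the embedding explicitly. Take $m=i_1+i_2$ (or any $m$ large enough to contain two disjoint axis-aligned copies of $[i_1]^2$ and $[i_2]^2$ inside $[m]^2$) and pick disjoint subgrids $G_1,G_2\subseteq[m]^2$ isometric (as subsets of the $\ell_1$ ground metric) to $[i_1]^2$ and $[i_2]^2$ respectively. Map $(x,y)\in\EMD_{i_1}\oplus_{\ell_1}\EMD_{i_2}$ to the vector on $[m]^2$ supported on $G_1\cup G_2$ that agrees with $x$ on $G_1$ and with $y$ on $G_2$; this is linear and sends zero-sum vectors to zero-sum vectors. The nontrivial step is to verify that this map is an isometry, i.e., $\|(x,y)\|_{\EMD_{i_1}\oplus_{\ell_1}\EMD_{i_2}}=\|x\|_{\EMD}+\|y\|_{\EMD}$ equals the EMD of the image vector in $[m]^2$.

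To check this equality I would use Kantorovich--Rubinstein duality:
\[
\|p\|_{\EMD}=\sup\Bigl\{\sum_i p_i f(i):\ f\ \text{is $1$-Lipschitz on the ground metric}\Bigr\}.
\]
The inequality $\le$ is immediate by transporting within each subgrid separately. For $\ge$, take near-optimal $1$-Lipschitz test functions $f_1$ on $G_1$ and $f_2$ on $G_2$ witnessing $\|x\|_{\EMD}$ and $\|y\|_{\EMD}$; using that $x$ and $y$ are individually zero-sum, one may shift $f_2$ by an arbitrary additive constant without changing $\sum y_i f_2(i)$, choose the shift so that the concatenation admits a $1$-Lipschitz extension to $[m]^2$ (existence of such an extension is McShane's lemma), and conclude that the image vector has $\EMD$-norm at least $\|x\|_{\EMD}+\|y\|_{\EMD}$.

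With the isometric sum-product embedding in hand, the deduction is immediate: assuming for contradiction a sketch of size $s$ and approximation $D$ for $\DTEP(\EMD_n,D)$ uniform in $n$, Theorem~\ref{thm:l_1} yields a constant $D_0$ and linear embeddings $\EMD_n\hookrightarrow\ell_1$ with distortion at most $D_0$ for all $n$, contradicting \cite{NS}. I expect the main subtlety to lie in the isometry verification above, specifically in justifying that one can align the two Lipschitz test functions across the gap between $G_1$ and $G_2$ without loss; everything else is a mechanical chaining of Theorem~\ref{thm:l_1} with the Naor--Schechtman nonembeddability theorem.
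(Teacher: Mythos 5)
Your overall route -- invoke Theorem~\ref{thm:l_1} on the sequence $(\EMD_n)_n$ once you know it is closed under sum-products, and contradict the Naor--Schechtman $\Omega(\sqrt{\log n})$ non-embeddability into $\ell_1$ -- is exactly the paper's argument, which states the corollary as a ``direct application of Theorem~\ref{thm:l_1}'' and dismisses the sum-product closure of $\EMD_n$ as ``obvious.'' You spell out that closure, which is fine, and the use of Kantorovich--Rubinstein duality plus McShane extension is the natural way to verify the isometry.

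One point in your verification needs to be tightened: you write that one may choose $m=i_1+i_2$ ``or any $m$ large enough to contain two disjoint axis-aligned copies'' and then ``pick disjoint subgrids''. The isometry is \emph{not} automatic for an arbitrary pair of disjoint axis-aligned blocks; it hinges on a four-point inequality $d(u,u')+d(v,v')\le d(u,v)+d(u',v')$ for all $u,u'\in G_1$, $v,v'\in G_2$, which is exactly what guarantees the existence of the constant shift making $f_1\cup(f_2+c)$ $1$-Lipschitz on $G_1\cup G_2$. This inequality holds for \emph{diagonally} placed blocks $G_1=\{1,\dots,i_1\}^2$, $G_2=\{i_1+1,\dots,m\}^2$ (it decouples into two one-coordinate inequalities, each of which uses that all of $G_1$ lies strictly on one side of $G_2$ in that coordinate). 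It \emph{fails} for side-by-side adjacent blocks such as $\{1,\dots,i_1\}^2$ and $\{i_1+1,\dots,i_1+i_2\}\times\{1,\dots,i_2\}$: taking $x$ with $+1$ at $(i_1,1)$, $-1$ at $(i_1,3)$ and $y$ with $+1$ at $(i_1+1,3)$, $-1$ at $(i_1+1,1)$ gives $\|x\|_{\EMD}+\|y\|_{\EMD}=4$, while the combined vector in $[m]^2$ has $\EMD$ cost $2$ by cross-block transport, so that placement is genuinely contractive and no additive shift rescues the dual argument. So you must fix the diagonal placement (or, more generally, a placement where the two blocks are separated in \emph{both} coordinates); with that modification your verification is correct and the deduction from Theorem~\ref{thm:l_1} and~\cite{NS} goes through exactly as in the paper.
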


The reason we can not apply Theorem~\ref{thm:l_1log} and get a clean quantitative
lower bound for sketches of $\EMD_n$ is the factor $\log(\dim X)$ in 
Theorem~\ref{thm:l_1log}. Indeed, the lower bound on the distortion of an embedding of
$\EMD_n$ into $\ell_1$ proved in~\cite{NS} is $\Omega(\sqrt{\log n})$, which is smaller than
$\log(\dim X) = \Theta(\log n)$.

We note that $\EMD_n$ is a (slight) generalization of the $\EMD$
metric version commonly used in computer science applications. 
In the latter, given two weighted sets $A,B\subset [n]^2$ of the same 
total weight, one has to solve, using only their sketches $\sk(A), \sk(B)$, 
the $\DTEP(\EMD, D)$ problem where the $\EMD$ distance is the min-cost
matching between $A$ and $B$. 
Observe that the weights used in the sets $A,B\subset [n]^2$ are all positive.
The slight difference is that in $DTEP(\EMD_n,D)$, 
which asks analogously to estimate $\|p-q\|_{\EMD}$,
each of $p,q\in \Rbb^{[n]^2}$ has both ``positive'' and ``negative'' parts.
Nevertheless, we show in Appendix~\ref{apx:emdReduction} that
efficient sketching of $\EMD$ on weighted sets implies efficient
sketching of the $\EMD_n$ norm.  Hence, the non-sketchability of
$\EMD_n$ norm applies to $\EMD$ on weighted sets as well.

\subsection{Other related work}

Another direction for ``characterizing tractable metrics'' 
is in the context of streaming algorithms, 
where the input is an implicit vector~$x\in \Rbb^n$ given in the form 
of updates $(i,\delta)$, with the semantics that coordinate $i$ has to
be increased by $\delta\in \Rbb$. 

There are two known results in this vein. First,
\cite{BO-zeroonelaw10} characterized the streaming complexity of
computing the sum $\sum_i \varphi(x_i)$, for some fixed $\varphi$ (e.g.,
$\varphi(x)=x^2$ for $\ell_2$ norm), when the updates are positive. They
gave a precise property of $\varphi$ that determines whether the
complexity of the problem is small. Second, \cite{lnw14-linear} showed
that, in certain settings, streaming algorithms may as well be {\em
  linear}, i.e., maintain a sketch $f(x)=Ax$ for a matrix $A$, and the size of
the sketch is increased by a factor logarithmic in the dimension of $x$.

Furthermore, after the appearance of the conference version of
the current article, there has been another characterization result that
significantly generalizes and extends \cite{BO-zeroonelaw10}. 
Specifically, for every \emph{symmetric} norm $\|\cdot\|_X$, 
it is proved in \cite{BBCKY15} that the sketching (and streaming)
complexity of computing $\|x\|_X$ is characterized by the norm's (maximum)
modulus of concentration, up to polylogarithmic factors in the
dimension of $X$.
Finally, we mention a related work~\cite{ANRW17}, where an efficient
data structure for the Approximate Nearest Neighbor search (ANN) is constructed
for every \emph{symmetric} norm. It is known~\cite{IM, KOR00} that efficient sketches
imply good data structures for ANN, however, the result of~\cite{ANRW17} shows
that having efficient ANN data structure is a way more general property of
an underlying norm.

\subsection{Proof overview}
\label{sec:techniques}

Following common practice, we think of sketching as a communication protocol. 
In fact, our results hold for protocols with an \emph{arbitrary} number of rounds 
(and access to public randomness). 

Our proof of Theorem~\ref{thm:l_1eps} can be divided into two parts:
\emph{information-theoretic} and \emph{analytic}.  First, we use
information-theoretic tools to convert an efficient \emph{protocol} for 
$\DTEP(X, D)$ into a~so-called \emph{threshold map} from $X$ to a Hilbert
space.  Our notion of a threshold map can be viewed as a very
weak definition of embeddability (see Definition~\ref{def_th} for
details).  Second, we use techniques from nonlinear functional
analysis to convert a threshold map to a \emph{linear map} into
$\ell_{1 - \eps}$.

\myparagraph{Information-theoretic part.} To get a threshold map from a protocol for $\DTEP(X, D)$,
we proceed in three steps. First, using the fact that $X$ is a \emph{normed space}, we are able to
give a good protocol for $\DTEP(\ell_{\infty}^k(X), D k)$ (Lemma~\ref{lem:fold}).
The space $\ell_{\infty}^k(X)$ is a product of $k$ copies of $X$ equipped 
with the norm 
$\|(x_1, \ldots, x_k)\| = \max_{i} \|x_i\|$.
Then, invoking the main result from~\cite{AJP-sketch}, we conclude non-existence
of certain Poincar\'{e}-type inequalities for $X$ 
(Theorem~\ref{thm:AJP}, in the contrapositive).

Finally, we use convex duality together with a compactness argument to conclude the existence of a desired
threshold map from $X$ to a Hilbert space 
(Lemma~\ref{lem:Poincare2Thresh}, again in the contrapositive).

\myparagraph{Analytic part.} 
We proceed from a threshold map by upgrading it to a \emph{uniform embedding}
(see Definition~\ref{uniform_def})
of $X$ into a Hilbert space (Theorem~\ref{thm:Threshold2Uniform}).
For this we adapt arguments from~\cite{JR06, R06}. 
We use two tools from nonlinear functional analysis: 
an extension theorem for $1/2$-H\"{o}lder maps from a (general)
metric space to a Hilbert space~\cite{M70} (Theorem~\ref{extension}), 
and a symmetrization lemma for maps
from metric abelian groups to Hilbert spaces~\cite{AMM85} (Lemma~\ref{symmetrization}).

Then we convert a uniform embedding of $X$ into a Hilbert space to a
\emph{linear} embedding into~$\ell_{1 - \eps}$ by applying the result
of Aharoni, Maurey and Mityagin~\cite{AMM85} together with the result
of Nikishin~\cite{Nikishin72}. A similar argument has been used
in~\cite{NS}.

To prove a quantitative version of this step, we examine the proofs
from~\cite{AMM85} and~\cite{Nikishin72}, and obtain
explicit bounds on the distortion of the resulting map. We accomplish
this in Section~\ref{quant_sect}.

\myparagraph{Embeddings into $\ell_1$.}  To prove
Theorem~\ref{thm:l_1log} (which has dependence on the dimension of
$X$), we note that it is a simple corollary of Theorem~\ref{thm:l_1eps} and
a result of Zvavitch~\cite{Z00}, which gives a dimension reduction for
subspaces of $\ell_{1 - \eps}$.

\myparagraph{Norms closed under sum-product.}  Finally, we prove
Theorem~\ref{thm:l_1} --- embeddability into $\ell_1$ for norms closed
under sum-product --- by proving and using a finitary version of the
theorem of Kalton~\cite{Kalton85} (Lemma~\ref{lem:L1Embedding}),
instead of invoking Nikishin's theorem as above. We prove the finitary
version by reducing it to the original statement of Kalton's theorem
via a compactness argument.

Let us point out that Naor and Schechtman~\cite{NS} showed how to use
(the original) Kalton's theorem to upgrade a uniform embedding of
$\EMD_n$ into a Hilbert space to a linear embedding into~$\ell_1$
(they used this reduction to exclude uniform embeddability of
$\EMD_n$). Their proof used certain specifics of $\EMD$. In contrast,
to get Theorem~\ref{thm:l_1} for general norms, we seem to need a~%
finitary version of Kalton's theorem.

We also note that in Theorems~\ref{thm:l_1eps}, \ref{thm:l_1log} 
and~\ref{thm:l_1}, we can conclude
embeddability into $\ell_{1 - \eps}^d$ and $\ell_1^d$ respectively,
where $d$ is \emph{near-linear} in the dimension of the original
space. This conclusion uses the known dimension reduction theorems for
subspaces from~\cite{T90,Z00}.

\section{Preliminaries on functional analysis}

We remind a few definitions and standard facts from functional analysis that
will be useful for our proofs.
A central notion in our proofs is the notion of \emph{uniform
  embeddings}, which is a weaker version of embeddability.

\begin{definition}
    \label{uniform_def}
    For two \emph{metric spaces} $X$ and $Y$, we say that a map $f \colon X \to Y$
    is a \emph{uniform embedding}, if there exist two non-decreasing functions $L, U \colon \Rbb_+ \to \Rbb_+$
    such that for every $x_1, x_2 \in X$ one has
    $L(d_X(x_1, x_2)) \leq d_Y(f(x_1), f(x_2)) \leq U(d_X(x_1, x_2))$,
    $U(t) \to 0$ as $t \to 0$ and $L(t) > 0$ for every $t > 0$.
    The functions $L(\cdot)$ and $U(\cdot)$ are called \emph{moduli} of the embedding.
\end{definition}

\begin{definition}
  An \emph{inner-product space} is a real vector space $X$ together with an \emph{inner product}
  $\langle \cdot, \cdot \rangle \colon X \times X \to \Rbb$, which is a symmetric positive-definite bilinear form.
  A \emph{Hilbert space} is an inner-product space $X$ that is \emph{complete} as a metric space.
\end{definition}

Every inner-product space is a normed space: we can set $\|x\| =
\sqrt{\langle x, x \rangle}$. For a normed space $X$ we denote by $B_X$
its closed unit ball.
The main example of a Hilbert space is $\ell_2$, the space of all real sequences $\{x_n\}$ with $\sum_i x_i^2 < \infty$,
where the inner product is defined as
$$
\langle x, y \rangle = \sum_i x_i y_i.
$$

\begin{definition}
\label{def_kernel}
  For a set $S$, a function $K \colon S \times S \to \Rbb$ is called a \emph{kernel} if $K(s_1, s_2) = K(s_2, s_1)$ for every
  $s_1, s_2 \in S$.
\label{def_pos_neg_kernel}
  We say that the kernel $K$ is \emph{positive-definite} if for every $\alpha_1, \alpha_2, \ldots,
  \alpha_n \in \Rbb$ and $s_1, s_2, \ldots, s_n \in S$, one has
  $$
  \sum_{i,j=1}^n \alpha_i \alpha_j K(s_i, s_j) \geq 0.
  $$
  We say that $K$ is \emph{negative-definite} if for every $\alpha_1, \ldots, \alpha_n \in \Rbb$
  \emph{with $\alpha_1 + \alpha_2 + \ldots + \alpha_n = 0$}
  and $s_1, s_2, \ldots, s_n \in S$, one has
  $$
  \sum_{i,j=1}^n \alpha_i \alpha_j K(s_i, s_j) \leq 0.
  $$
\end{definition}

The following are standard facts about positive- and negative-definite kernels.

\begin{fact}[\cite{S35}]
  \label{char_dot}
  For a kernel $K \colon S \times S \to \Rbb$, there exists an embedding $f \colon S \to H$, where $H$ is a Hilbert space,
  such that $K(s_1, s_2) = \langle f(s_1), f(s_2)\rangle_H$ for every $s_1, s_2 \in S$, iff $K$ is positive-definite.
\end{fact}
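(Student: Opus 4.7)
The statement to prove is the standard characterization of positive-definite kernels as Gram matrices of maps into Hilbert space, so I will set out the well-known two-directional plan.

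The easy direction is ``if $f$ exists, then $K$ is positive-definite.'' Given $\alpha_1,\ldots,\alpha_n\in\Rbb$ and $s_1,\ldots,s_n\in S$, I would just expand
\[
\sum_{i,j=1}^n \alpha_i\alpha_j K(s_i,s_j) \;=\; \sum_{i,j=1}^n \alpha_i\alpha_j \langle f(s_i),f(s_j)\rangle_H \;=\; \Bigl\|\sum_{i=1}^n \alpha_i f(s_i)\Bigr\|_H^2 \;\ge\; 0,
\]
which immediately gives positive-definiteness and also shows $K$ is symmetric (since $H$-inner products are).

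For the harder direction, the plan is the classical Moore--Aronszajn / GNS-type construction: manufacture the Hilbert space out of $S$ itself. Let $V$ be the real vector space of finitely supported functions $S\to\Rbb$ (equivalently, formal finite linear combinations $\sum_i \alpha_i\,\delta_{s_i}$). Define a symmetric bilinear form on $V$ by
\[
\Bigl\langle \sum_i \alpha_i\,\delta_{s_i},\ \sum_j \beta_j\,\delta_{t_j}\Bigr\rangle_V \;:=\; \sum_{i,j} \alpha_i \beta_j\, K(s_i,t_j).
\]
Positive-definiteness of $K$ says exactly that this form is positive semi-definite on $V$. Apply Cauchy--Schwarz for semi-inner products to verify that the set $N = \{v\in V : \langle v,v\rangle_V = 0\}$ is a linear subspace, and descend the form to an honest inner product on the quotient $V/N$. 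Let $H$ be the metric completion of $V/N$; this is a Hilbert space. Finally define $f \colon S \to H$ by $f(s) := [\delta_s]$. Then
\[
\langle f(s_1),f(s_2)\rangle_H \;=\; \langle \delta_{s_1},\delta_{s_2}\rangle_V \;=\; K(s_1,s_2),
\]
as desired.

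Essentially none of the steps should be a serious obstacle; the only subtlety worth highlighting is that the natural bilinear form on $V$ is only positive \emph{semi-}definite, so one cannot take $H = V$ directly and must pass to the quotient by $N$ before completing. The completion step is free (and in fact unnecessary if $|S|<\infty$), and the whole construction is functorial in the sense that if one only wanted $f$ into an inner product space rather than a Hilbert space, the quotient $V/N$ already suffices.
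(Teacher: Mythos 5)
The paper does not actually prove Fact~\ref{char_dot}; it simply lists it as one of ``the following standard facts about positive- and negative-definite kernels'' and cites no argument. Your proposal supplies exactly the standard argument that the paper has in mind: the easy direction via the Gram-matrix expansion $\sum_{i,j}\alpha_i\alpha_j\langle f(s_i),f(s_j)\rangle = \bigl\|\sum_i\alpha_i f(s_i)\bigr\|^2\ge 0$, and the harder direction via the Moore--Aronszajn construction (free vector space on $S$, the induced positive semi-definite form, quotient by the null space, complete). Both directions are correct, and your remark about the quotient being necessary before completion is the only genuine subtlety. This is consistent with the paper's treatment of the statement as a textbook result.
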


\begin{fact}[\cite{Sch38}]
  \label{char_dist}
  For a kernel $K \colon S \times S \to \Rbb$, there exists an embedding $f \colon S \to H$, where $H$ is a Hilbert space,
  such that $K(s_1, s_2) = \|f(s_1) - f(s_2)\|_H^2$ for every $s_1, s_2 \in S$, iff $K(s, s) = 0$ for every $s \in S$ and
  $K$ is negative-definite.
\end{fact}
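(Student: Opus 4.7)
The plan is to prove Fact~\ref{char_dist} by a standard Schoenberg-style reduction to Fact~\ref{char_dot}, via a basepoint construction turning the squared-distance kernel $K$ into an inner-product kernel.

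For the easy direction (``only if''), I would start from an embedding $f \colon S \to H$ with $K(s_1,s_2)=\|f(s_1)-f(s_2)\|_H^2$. Clearly $K(s,s)=0$. For negative definiteness, take $\alpha_1,\ldots,\alpha_n \in \Rbb$ with $\sum_i \alpha_i=0$ and $s_1,\ldots,s_n \in S$, and expand
\[
\sum_{i,j} \alpha_i \alpha_j \|f(s_i)-f(s_j)\|_H^2
= \sum_{i,j} \alpha_i \alpha_j \bigl(\|f(s_i)\|^2 + \|f(s_j)\|^2 - 2\langle f(s_i), f(s_j)\rangle\bigr).
\]
The first two sums vanish because they each factor through $\sum_i \alpha_i=0$, while the inner-product sum equals $-2\|\sum_i \alpha_i f(s_i)\|_H^2 \le 0$.

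For the nontrivial direction (``if''), assume $K(s,s)=0$ for all $s$ and $K$ is negative-definite. Fix an arbitrary basepoint $s_0 \in S$ and define the auxiliary kernel
\[
\tilde K(s_1, s_2) \;=\; \tfrac{1}{2}\bigl(K(s_1, s_0) + K(s_2, s_0) - K(s_1, s_2)\bigr).
\]
The plan is to show $\tilde K$ is positive-definite, then apply Fact~\ref{char_dot} to obtain $f \colon S \to H$ with $\langle f(s_1),f(s_2)\rangle_H = \tilde K(s_1,s_2)$, and finally verify $\|f(s_1)-f(s_2)\|_H^2 = \tilde K(s_1,s_1) + \tilde K(s_2,s_2) - 2\tilde K(s_1,s_2) = K(s_1,s_2)$, where the identity $K(s_0,s_0)=0$ is used to simplify the diagonal terms.

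To establish positive-definiteness of $\tilde K$, given $\alpha_1,\ldots,\alpha_n \in \Rbb$ and $s_1,\ldots,s_n \in S$, I would introduce the ``phantom'' coefficient $\alpha_0 := -\sum_{i=1}^n \alpha_i$ at the basepoint $s_0$, so that the augmented list $(\alpha_0,\alpha_1,\ldots,\alpha_n)$ sums to zero and negative-definiteness of $K$ applies. Expanding $\sum_{i,j=0}^n \alpha_i \alpha_j K(s_i,s_j) \le 0$, separating out the terms involving index $0$, symmetrizing in $i,j$, and using $K(s_0,s_0)=0$, the expression collapses exactly to $-2 \sum_{i,j=1}^n \alpha_i \alpha_j \tilde K(s_i,s_j)$, which therefore must be $\ge 0$. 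The only subtle step is this bookkeeping with the phantom coefficient; the rest is formal. Once $\tilde K$ is shown positive-definite, the two facts combine to give the embedding, completing the proof.
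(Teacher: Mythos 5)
Your proof is correct and is the standard Schoenberg argument; the paper itself does not prove Fact~\ref{char_dist}, but merely records it (together with Fact~\ref{char_dot}) as a standard fact about positive- and negative-definite kernels, so there is no in-paper argument to compare against. Both directions of your argument check out: the ``only if'' expansion is exactly right, and in the ``if'' direction the augmented-list computation with $\alpha_0 = -\sum_{i=1}^n\alpha_i$ does collapse to $-2\sum_{i,j=1}^n\alpha_i\alpha_j\tilde K(s_i,s_j) \le 0$ once the $\alpha_0^2 K(s_0,s_0)$ term is dropped, and the final verification gives $\tilde K(s_1,s_1)+\tilde K(s_2,s_2)-2\tilde K(s_1,s_2)=K(s_1,s_2)$ as claimed. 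One small imprecision worth noting: in that last simplification the diagonal identities actually used are $K(s_1,s_1)=0$ and $K(s_2,s_2)=0$, not $K(s_0,s_0)=0$ (the latter is what you needed earlier, in the positive-definiteness step); this is a cosmetic misattribution, not a gap.
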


\begin{definition}
  For an abelian group $G$, we say that a function $f \colon G \to \Rbb$ is \emph{positive-definite} if a kernel $K(g_1, g_2) = f(g_1 - g_2)$
  is positive-definite.
  Similarly, $f$ is said to be \emph{negative-definite} if $K(g_1, g_2) = f(g_1 - g_2)$ is negative-definite.
\end{definition}

The following lemma essentially says that an embedding of an abelian group $G$
into a Hilbert space can be made translation-invariant.

\begin{lemma}[see the proof of Lemma~3.5 in~\cite{AMM85}]
  \label{symmetrization_dot}
  Suppose that $f$ is a map from an abelian group $G$ to a Hilbert space such that
  for every $g \in G$ we have $\sup_{g_1 - g_2 = g} \langle f(g_1), f(g_2)\rangle < +\infty$.
  Then, there exists a map $f'$ from $G$ to a Hilbert space such that $\langle f'(g_1), f'(g_2) \rangle$
  depends only on $g_1 - g_2$ and for every $g_1, g_2 \in G$ we have
  $$
  \inf_{g_1' - g_2' = g_1 - g_2} \langle f(g_1'), f(g_2')\rangle \leq \langle f'(g_1), f'(g_2) \rangle
  \leq \sup_{g_1' - g_2' = g_1 - g_2} \langle f(g_1'), f(g_2') \rangle.
  $$
\end{lemma}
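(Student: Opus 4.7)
The plan is to symmetrize the given map $f$ by averaging the kernel $(g_1, g_2) \mapsto \langle f(g_1), f(g_2)\rangle$ over all translations of $G$ using an invariant mean, thereby producing a translation-invariant positive-definite kernel which by Fact~\ref{char_dot} is realized by a new map $f'$ into a Hilbert space.

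First I would observe that specializing the hypothesis to $g_1 = g_2$ (i.e.\ the case $g = 0$) yields $\sup_{g_1 \in G} \|f(g_1)\|^2 < +\infty$, so $f$ is uniformly bounded; by Cauchy--Schwarz the kernel $K_0(g_1, g_2) := \langle f(g_1), f(g_2)\rangle$ is then globally bounded on $G \times G$. Since every abelian group is amenable, there exists a translation-invariant mean $\mu \colon \ell^\infty(G) \to \Rbb$: a positive linear functional satisfying $\mu(\mathbf{1}) = 1$ and $\mu(\phi(\cdot + t)) = \mu(\phi)$ for every $t \in G$ (obtained by applying the Markov--Kakutani fixed point theorem to the weak-$*$ compact convex set of means in $\ell^\infty(G)^*$).

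Next, I would define the averaged kernel
$$K(g_1, g_2) := \mu_h\bigl[\langle f(g_1 + h), f(g_2 + h)\rangle\bigr],$$
which is well-defined since the function $h \mapsto \langle f(g_1+h), f(g_2+h)\rangle$ is bounded. Translation invariance of $\mu$ gives $K(g_1 + t, g_2 + t) = K(g_1, g_2)$ via the change of variables $h \mapsto h - t$, so $K$ depends only on $g_1 - g_2$. For positive-definiteness, for any $\alpha_1, \ldots, \alpha_n \in \Rbb$ and $s_1, \ldots, s_n \in G$, positivity and linearity of $\mu$ give
$$\sum_{i,j} \alpha_i \alpha_j K(s_i, s_j) = \mu_h\Bigl[\bigl\| {\textstyle\sum_i} \alpha_i f(s_i + h) \bigr\|^2\Bigr] \geq 0.$$
Applying Fact~\ref{char_dot} then produces $f' \colon G \to H'$ with $\langle f'(g_1), f'(g_2)\rangle = K(g_1, g_2)$, which has the desired translation-invariance property.

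The two-sided bound is immediate from the elementary fact that $\inf \phi \leq \mu(\phi) \leq \sup \phi$ for every bounded $\phi \in \ell^\infty(G)$: the values of $h \mapsto \langle f(g_1 + h), f(g_2 + h)\rangle$ all lie in the set $\{\langle f(g_1'), f(g_2')\rangle : g_1' - g_2' = g_1 - g_2\}$, so its mean is sandwiched between the inf and sup over this set. The only nontrivial ingredient is the existence of the invariant mean $\mu$, but this is standard given amenability of abelian groups; once this is in hand the proof is a routine verification. The one subtlety worth flagging is that the first step---extracting uniform boundedness of $f$ from the hypothesis---is essential, since otherwise the function being averaged would not even lie in $\ell^\infty(G)$ and $\mu$ could not be applied.
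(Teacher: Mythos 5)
Your proof is correct and follows the standard invariant-mean argument that is the substance of the cited reference \cite{AMM85} (which the paper invokes without reproducing). You correctly extract uniform boundedness from the $g=0$ case of the hypothesis, average the kernel over a translation-invariant mean on the amenable group $G$, verify translation-invariance and positive-definiteness, and realize the resulting kernel via Fact~\ref{char_dot}; the sandwich bound follows since the image of $h \mapsto \langle f(g_1+h), f(g_2+h)\rangle$ is exactly the set over which the inf and sup are taken.
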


Finally, let $\dim X$ denote the dimension of a finite-dimensional vector space $X$.

\section{Preliminaries on communication complexity}
\label{sec:prelim_cc}

Let $X$ be a metric space, on which we would like to solve $\DTEP_r(X,
D)$ defined as follows for some $r>0$ and $D\ge 1$.  Alice has a point
$x \in X$, Bob has a point $y \in X$, and they would like to decide
between the two cases: $d_X(x, y) \leq r$ and $d_X(x, y) > Dr$. To
accomplish this goal, Alice and Bob exchange at most $s$ bits of
communication. 

There are several types of communication protocols that we consider,
depending on the randomness used, which we present below in the order
of their power. Our main result applies to the most powerful type. We
will later show some connections between the protocols of different
types.

\begin{itemize}
\item \textbf{Deterministic protocols.}  This is a simple two-way
  communication protocol with no randomness.  First, Alice sends a bit
  to Bob that depends only on $x$.  Then, Bob sends a bit to Alice
  that depends on Alice's first communication bit and on $y$.  Then, Alice sends
  a bit to Bob that depends on $x$ and the two previous communication bits, etc.
  Finally, whoever sends the $s$-th bit must decide the answer to the
  $\DTEP$ problem.  We define a \emph{transcript} $\Pi_{x, y}$ to be
  the sequence of $s$ bits sent by the two parties for a given pair of
  inputs $x$ and $y$.

\item \textbf{Private-coin protocols with bounded number of coins.}
This is a randomized version of the previous definition.
Alice and Bob each have access to an \emph{independent} random string,
denoted $a\in \{0, 1\}^R$ and $b\in \{0, 1\}^R$, respectively. 
Communication bits sent by Alice are allowed to depend on $a$, and 
those sent by Bob may depend on $b$.  
We require that for every pair of inputs, the probability (over the
random coins $a,b$) of the answer being correct is at least, say,
$2/3$.  Whenever we allow randomness, the
transcript $\Pi_{x, y}$ becomes a random variable (depending on $x$
and $y$). For fixed $a,b \in \{0, 1\}^R$, we denote by $\Pi_{x, y}(a,b)$
the (deterministic) transcript for inputs $x$ and $y$ when the
random strings are set to be $a$ and $b$, respectively.

\item \textbf{Public-coin protocols with finitely many coins.}
  This is a variant of the previous definition, where Alice and Bob
  have access to a \emph{common} random string sampled uniformly from
  $\{0, 1\}^R$, and the bits sent by both Alice and Bob can depend on
  this random string.  Again, we require the probability (over the
  public coins) of the answer being correct to be at least, say, $2/3$
  for every pair of inputs. Also, we denote by $\Pi_{x, y}(u)$ the
  deterministic transcript for fixed inputs and public coins $u$.

Clearly, a public-coin protocol with $2R$ public coins can emulate a
private-coin protocol with $R$ random bits for each of Alice and Bob.

\item \textbf{Public-coin protocols with countably many coins.}  The
  protocols defined above are standard in
  the communication complexity literature.  However, we need a
  definition that is \emph{stronger:} we allow \emph{countably many
    public coins.} The reason to consider the stronger notion is that
  the known protocols for $\DTEP$ based on~\cite{I00b} fall into this
  category.  Since we allow infinitely many coins, we need to be
  careful when defining a class of allowed protocols.  A sequence of
  coin tosses $u$ can be identified with a point in the Cantor space
  $\Omega = \{0, 1\}^\omega$ equipped with the standard Lebesgue
  measure. We require that for every pair of inputs $x, y \in X$, the
  function $u \mapsto \Pi_{x,y}(u)$ is measurable. This restriction
  allows us to consider probabilities of the form
  $\mathrm{Pr}[\Pi_{x,y} \in A]$, where $A \subseteq \{0, 1\}^s$ is an
  arbitrary set of possible transcripts.  In particular, 
  the probability of success is well-defined, and we require it, 
  as before, to be at least $2/3$.

\end{itemize}

The results in this paper apply to the most general protocols:
public-coin protocols with countably many coins.

We now show some connections between these notions.  A crucial tool in
our result is a theorem of \cite{AJP-sketch}, which is itself based on
the tools from \cite{BJKS-infoTheory}. The latter shows a lower bound
for private-coin protocols with finitely many coins. We 
show next how the lower bounds from \cite{AJP-sketch, BJKS-infoTheory}
extend to the most general type, public-coin protocols with countably
many coins.

\subsection{Information complexity: private-coins vs public-coins}

In general, a lower bound for private-coins protocols does not
imply a lower bound for public-coins protocols (without a loss in
the parameters). However, such an implication does hold for the particular
lower bound technique that we are employing. In particular, we use and
exploit the notion of information complexity
from~\cite{BJKS-infoTheory}, defined as follows. Let $(x, y, \lambda)$
be distributed according to a distribution $\Dc$ over $X \times X
\times \Lambda$, where $\Lambda$ is an auxiliary set. We will assume
that the support of $\Dc$ is \emph{finite}.  Then, we can define the
information complexity with respect to $\mathcal{D}$, denoted
$\mathrm{IC}_{\mathcal{D}}(\DTEP_r(X, D))$, to be the infimum of
$I(x, y : \Pi_{x, y} \mid \lambda)$ over all private-coin protocols
for $\DTEP_r(X, D)$, which succeed on every valid input with
probability at least $2/3$, where $I(\cdot : \cdot \mid \cdot)$ is the
(conditional) mutual information.

It is a standard fact that $\mathrm{IC}_{\mathcal{D}}(\DTEP_r(X, D))$
is a lower bound on the communication complexity of $\DTEP_r(X, D)$
with \emph{private-coin protocols} since
$$
I(x, y : \Pi_{x,y} \mid \lambda) \leq \sup_{x,y,a,b} |\Pi_{x,y}(a, b)|.
$$

However, we are interested in using the information complexity (as defined above) to lower bound
the communication complexity of $\DTEP_r(X, D)$ for \emph{public-coin protocols with finite number of coins}.
It turns out that $\mathrm{IC}_{\mathcal{D}}(\DTEP_r(X, D))$ is
a valid lower bound for this case as well, as argued in the claim below.

\begin{lemma}
\label{public_vs_private}
The communication complexity of $\DTEP_r(X, D)$ for public-coin protocols with finite number of coins
is at least $\mathrm{IC}_{\mathcal{D}}(\DTEP_r(X, D))$.
\end{lemma}
\begin{proof}
Consider any protocol with public randomness, denoted $\Pi_{x,y}(u)$,
where $x,y$ are the two inputs and $u$ is the public random string. 
Then 
$$ 
\sup_{x,y,u}\card{\Pi_{x, y}(u))} \ge H(\Pi_{x,y}(u) \mid \lambda, u) \ge I(x,y : \Pi_{x,y}(u) \mid \lambda, u).
$$

Now consider the following private-coins protocol $\Pi'_{x,y}(
a, b)$, where $a,b$ are the two private random strings of Alice and
Bob, respectively.  In the first round, Alice sends $a$ to Bob to be
used as public randomness $u=a$.  Then they run $\Pi_{x,y}(u)$. In other
words, the transcript of $\Pi'_{x,y}(a,b)$ is $\langle a,
\Pi_{x,y}(a)\rangle$.
We claim that 
$$I(x,y : \Pi'_{x,y}(a,b) \mid \lambda) = I(x,y : \Pi_{x,y}(u) \mid \lambda, u).$$
Indeed, by definition of $\Pi'$,
$$I(x,y : \Pi'_{x,y}(a,b) \mid \lambda) = I(x,y : a, \Pi_{x,y}(a) \mid \lambda),$$
and using the chain rule for mutual information,
$$I(x,y : a, \Pi_{x,y}(a) \mid \lambda) = 
  I(x,y : a \mid \lambda) +
  I(x,y : \Pi_{x,y}(a) \mid \lambda, a).
$$ 
The first term is exactly zero since $x,y$ and $a$ are independent
  (conditioned on $\lambda$).  The remaining term gives the equality
we are looking for, and proves the lemma. 
In particular, we see that the length of a public-coin
protocol is at least the information complexity $I(x,y :
\Pi'_{x,y}(a,b) \mid \lambda)$ of any private-coin protocol $\Pi'$.
\end{proof}

\subsection{From countable to finite number of coins}

We now observe that if we focus only on a {\em finite} number of
possible inputs to our $\DTEP$ problem, then the existence of a
protocol with countably-many coins implies the existence of a protocol
with bounded number of coins. This claim will be sufficient to
generalize our theorem to the most general type of
protocols---public-coin protocols with countably many coins:
see the remark after Theorem~\ref{thm:AJP}.

\begin{claim}
\label{infinite_to_finite}
Fix a public-coin protocol with \emph{countably many coins} and $s$
bits of communication.  Let $(x^{(1)}, y^{(1)}), (x^{(2)},
y^{(2)}), \ldots, (x^{(N)}, y^{(N)})$ be $N$ fixed pairs of
inputs for the $\DTEP$ problem, and let $\eps > 0$ be a positive
parameter.  
Then there exists a public-coin protocol 
\emph{with $R = R(s, N, \eps)<\infty$ coins} and $s$ bits of communication, 
such that for every $1 \leq i \leq N$, the success probabilities of
the original and the new protocols on $(x^{(i)}, y^{(i)})$ 
differ by at most $\eps$.
\end{claim}
\begin{proof}
Since we care about the correctness of the protocol only on the inputs
$(x^{(i)}, y^{(i)})$, we can think of the protocol as a distribution
over a \emph{bounded} (as a function of $s$ and $N$) number of
deterministic protocols (there's only a finite number of distinct
protocol transcripts).  Then, we can approximate this distribution
within a statistical distance $\eps$ using a bounded number of public
coins.
\end{proof}


\section{From sketches to uniform embeddings}

Our main technical result shows that, for a finite-dimensional normed
space $X$, good sketches for $\DTEP(X, D)$ imply a good uniform
embedding of $X$ into a Hilbert space (Definition~\ref{uniform_def}). Below
is the formal statement.

    \begin{theorem}
        \label{main_quant}
        Suppose a finite-dimensional normed space $X$ admits a public-coin randomized communication
        protocol
        for $\DTEP(X, D)$ of size $s$ for approximation $D > 1$.
        Then, there exists a map $f \colon X \to H$ to a Hilbert
        space 
        such that for all $x_1, x_2 \in X$,
        $$
            \min\set{1, \frac{\|x_1 - x_2\|_X}{s \cdot D}} \leq \|f(x_1) - f(x_2)\|_H
            \leq K \cdot \|x_1 - x_2\|_X^{1/2},
        $$
        where $K > 1$ is an absolute constant.
    \end{theorem}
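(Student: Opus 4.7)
The plan is to follow the two-phase roadmap laid out in the Proof Overview: first pass from a sketching protocol to a \emph{threshold map} into Hilbert space using information-theoretic tools, and then upgrade the threshold map into a genuine uniform embedding by nonlinear-analysis techniques, reading off the claimed moduli along the way.

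For the information-theoretic phase, I would first apply a \emph{folding} step: using that $X$ is a normed space, the condition $\|(x_1,\dots,x_k)\|_\infty > Dk$ in $\ell_\infty^k(X)$ forces some coordinate to have $\|\cdot\|_X > D$, while $\|\cdot\|_\infty \le 1$ forces every coordinate to have $\|\cdot\|_X \le 1$; with public randomness one can then amplify the original protocol to a sketch of size roughly $s$ for $\DTEP(\ell_\infty^k(X), Dk)$, establishing Lemma~\ref{lem:fold}. Next, I would feed this into the Andoni--Jayram--\Patrascu theorem (Theorem~\ref{thm:AJP}) in contrapositive: the existence of such protocols for every $k$ rules out certain Hilbert-valued Poincar\'e-type inequalities on $X$. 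Finally, I would convert the \emph{absence} of Poincar\'e inequalities into the \emph{existence} of a threshold map $\phi \colon X \to H$ by a convex-duality/minimax argument (Lemma~\ref{lem:Poincare2Thresh}), where the finite-dimensionality of $X$ enters precisely through compactness of the unit ball, letting me extract a single map rather than a sequence.

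For the analytic phase, I would promote $\phi$ to the desired $f$ using the two nonlinear tools imported in the overview. The upper modulus $\|f(x_1)-f(x_2)\|_H \leq K\|x_1-x_2\|_X^{1/2}$ is the natural H\"older-$\tfrac12$ regularity arising from Hilbert targets (cf.\ Fact~\ref{char_dist}, since squared Hilbert distances correspond to negative-definite kernels); this control is produced by the Wells--Williams extension theorem (Theorem~\ref{extension}) applied to a suitable restriction of $\phi$, allowing one to extend from a coarse set to all of $X$ without losing H\"older continuity. To obtain the lower modulus, I would then symmetrize using Lemma~\ref{symmetrization} on the abelian group $(X,+)$, which makes the inner product depend only on the difference $x_1-x_2$; combining translation invariance with the threshold separation from $\phi$ and rescaling by the threshold level gives the lower bound $\min\{1,\|x_1-x_2\|_X/(sD)\}$.

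The step I expect to be the main obstacle is the quantitative bookkeeping: the approximation parameter is inflated by the folding lemma from $D$ to $Dk$, and one has to choose $k$ so that the $k$-dependence cancels against the gain from AJP while still producing a threshold map that, after symmetrization, yields a lower modulus linear in $\|x_1-x_2\|_X$ with the precise constant $1/(sD)$ and cutoff at $1$. A second subtlety is making sure $\dim X$ appears only through compactness and never leaks into the moduli, since the theorem is dimension-free; this forces the duality argument in Lemma~\ref{lem:Poincare2Thresh} to be performed uniformly in the scale, and forces the H\"older extension and symmetrization to be carried out with constants depending only on $s$ and $D$.
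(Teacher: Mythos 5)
Your proposal follows essentially the same route as the paper: fold a size-$s$ protocol for $X$ into a sketch for $\ell_\infty^{Cs}(X)$, apply the Andoni--Jayram--\Patrascu theorem in contrapositive to rule out \Poincare inequalities, dualize and use a compactness (Tychonoff) argument to extract a threshold map, then upgrade it via Schoenberg snowflaking, the Wells--Williams $1/2$-H\"older extension on a net, and AMM symmetrization, with the choice $k=\Theta(s)$ producing exactly the $sD$ scale and the $>8$ threshold gap needed for Theorem~\ref{thm:Threshold2Uniform}. You correctly identify both the quantitative bookkeeping around $k$ and the dimension-free moduli as the delicate points, and the paper resolves them exactly as you anticipate.
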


Theorem~\ref{main_quant} implies a \emph{qualitative} version of
Theorem~\ref{thm:l_1eps} using the results of Aharoni, Maurey, and
Mityagin~\cite{AMM85} and Nikishin~\cite{Nikishin72} (see Theorem
\ref{thm:amm_nik}).  

    \begin{theorem}[\cite{AMM85, Nikishin72}] \label{thm:amm_nik}
      For every fixed $0 < \eps < 1$, any finite-dimensional normed space $X$ that is uniformly embeddable into a Hilbert space 
      is linearly embeddable into $\ell_{1 - \eps}$ with a
      distortion that depends only on $\eps$ and the moduli of the assumed uniform embedding.
    \end{theorem}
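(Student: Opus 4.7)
The plan is to invoke the two black-box results in sequence: the Aharoni--Maurey--Mityagin theorem to convert a uniform embedding into a Hilbert space into a \emph{linear} embedding into $L_0(\mu)$ for some probability measure, and then Nikishin's theorem to upgrade that $L_0$-embedding into a linear embedding into $L_{1-\eps}(\mu')$. Since $X$ is finite-dimensional, the image then lies in a separable subspace, giving an embedding into $\ell_{1-\eps}$. Throughout, one must track how the parameters depend only on $\eps$ and on the moduli $L,U$ of the assumed uniform embedding.

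For the first (AMM) step, start from the hypothesized uniform embedding $f \colon X \to H$ and apply Lemma~\ref{symmetrization_dot} to replace $f$ by a map $f'$ whose inner-product kernel depends only on $x_1 - x_2$, i.e.\ is translation-invariant on the abelian group $(X,+)$. Then $\phi(x) := \|f'(x) - f'(0)\|_H^2$ is a continuous negative-definite function on $X$ (Fact~\ref{char_dist}). By Schoenberg's theorem combined with Bochner's theorem, $e^{-t\phi}$ is, for each $t>0$, the Fourier transform of a probability measure on the dual $X^*$. Fixing such a measure $\mu$ yields a linear map $T \colon X \to L_0(X^*,\mu)$ defined by $T(x)(\xi) := \langle x,\xi\rangle$. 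The upper modulus $U$ of $f$ forces $T$ to be continuous in measure, while the lower modulus $L$ ensures that $T(x)$ does not concentrate near $0$ in measure for any fixed $x \neq 0$, with quantitative control inherited entirely from $L$ and $U$.

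For the second (Nikishin) step, apply Nikishin's factorization theorem to the continuous linear operator $T \colon X \to L_0(X^*,\mu)$. Since $X$ is finite-dimensional (in particular of cotype $2$), Nikishin yields a subset $A \subseteq X^*$ with $\mu(X^*\setminus A) < \delta(\eps)$ and a constant $C(\eps)$ such that the restriction of $T$ is a bounded linear operator $X \to L_{1-\eps}(A,\mu|_A)$ of norm at most $C(\eps)$. Provided $\delta(\eps)$ is sufficiently small relative to the lower-tail bound on $T$ inherited from $L$, the removed sliver is too thin to annihilate any $T(x)$, so one also retains a matching lower bound on $\|T(x)\|_{L_{1-\eps}(A)}$. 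Finally, because $X$ is finite-dimensional, $T(X)$ lies in a separable subspace of $L_{1-\eps}(A,\mu|_A)$, which embeds linearly and isometrically into $\ell_{1-\eps}$.

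The main obstacle is the quantitative balancing act in the Nikishin step: the exceptional set $A$ must be small enough that the lower bound inherited from $L$ survives the restriction, and simultaneously large enough that $T$ becomes bounded into $L_{1-\eps}(A)$. This is precisely what forces the distortion to depend jointly on $\eps$ and on both moduli $L,U$, rather than on either piece alone, and it is also why one cannot push $\eps$ all the way to $0$ with this argument---the Nikishin constant $C(\eps)$ blows up, which is the analytic reason the authors separately invoke Kalton's theorem (rather than Nikishin) to reach $\ell_1$ itself in Theorem~\ref{thm:l_1}.
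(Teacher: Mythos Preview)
Your outline is correct and follows the classical route that the citations \cite{AMM85,Nikishin72} suggest: symmetrize, pass to a positive-definite function via Schoenberg, apply Bochner to get a probability measure $\mu$ and the linear map $T\colon x\mapsto \langle x,\cdot\rangle$ into $L_0(\mu)$, then invoke Nikishin's factorization to land in $L_{1-\eps}$ after discarding a small exceptional set. This is a valid proof of the qualitative statement.

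The paper, however, does not actually prove Theorem~\ref{thm:amm_nik} as stated; it is cited as a black box. What the paper \emph{does} prove is the quantitative Theorem~\ref{thm:unifL1quant} in Section~\ref{quant_sect}, and there the second step is handled differently from your plan. Rather than invoking Nikishin's theorem and managing an exceptional set $A$, the paper exploits the specific form of the characteristic function it has built: from the modulus $U$ it obtains $\varphi(x)\ge e^{-\widetilde K\|x\|_X}$, which via Lemma~\ref{char_f} yields the one-sided tail bound $\mu(\{v:|\langle x,v\rangle|\ge t\|x\|_X\})\le O(1/t)$. That tail bound is already light enough to integrate $|\langle x,v\rangle|^{1-\eps}$ over the \emph{entire} measure $\mu$, with no restriction needed. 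The lower bound comes from the other modulus via $\varphi(x)\le 1/e$ for $\|x\|_X\ge\Delta$.

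The upshot: your Nikishin step is correct but introduces an artifact (the exceptional set and the balancing act you describe) that the paper's direct computation avoids. The paper's approach gives cleaner quantitative control, namely distortion $O(\Delta/\eps)$, whereas extracting a comparable bound from the Nikishin black box would require tracking the constants inside that theorem as well. Both routes reach the same qualitative endpoint.
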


To prove the full (quantitative) versions of
Theorems~\ref{thm:l_1eps} and~\ref{thm:l_1log}, we adapt the proofs
from~\cite{AMM85, Nikishin72} in Section~\ref{quant_sect} to get an explicit bound
on the distortion.

In the rest of this section, we prove Theorem~\ref{main_quant} according to 
the outline in Section~\ref{sec:techniques},
putting the pieces together in Section~\ref{sec:together}.

\subsection{Sketching implies the absence of Poincar\'{e} inequalities}
\label{sec:AJP}

Sketching is often viewed from the perspective of 
a two-party communication complexity.
Alice receives input $x$, Bob receives $y$, and they need to communicate 
to solve the $\DTEP$ problem. 
In particular, a sketch of size $s$ implies a communication protocol 
that transmits $s$ bits: 
Alice just sends her sketch $\sk(x)$ to Bob, who computes the 
output of $\DTEP$ (based on that message and his sketch $\sk(y)$). 
We assume here a public-coins model, i.e., 
Alice and Bob have access to a common (public) random string 
that determines the sketch function $\sk$.

To characterize sketching protocols, we build on results of 
Andoni, Jayram and \Patrascu~\cite[Sections 3 and 4]{AJP-sketch}.
This works in two steps: 
first, we show that a protocol for $\DTEP(X, D)$ 
implies a sketching algorithm for $\DTEP(\ell_\infty^k(X), kD)$, 
with a loss of factor $k$ in approximation (Lemma~\ref{lem:fold}, see the proof in the end of the section).
As usual, $\ell_\infty^k(X)$ is a normed space derived from $X$
by taking the vector space $X^k$ and letting the norm of a vector 
$(x_1,\ldots x_k)\in X^k$ be the maximum of the norms of its $k$ components.
The second step is to apply a result from \cite{AJP-sketch} (Theorem~\ref{thm:AJP}), 
which asserts that sketching for $\ell_\infty^k(X)$ 
precludes certain \Poincare inequalities for the space $X$. 

\begin{lemma} \label{lem:fold} 
Let $X$ be a finite-dimensional normed space that for some $D \geq 1$ admits a communication protocol
for $\DTEP(X, D)$ of size $s$.
Then for every integer $k$, the space $\ell_\infty^k(X)$ admits sketching 
with approximation $kD$ and sketch size $s'=O(s)$.
\end{lemma}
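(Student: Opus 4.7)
The plan is to produce a sketching protocol for $\ell_\infty^k(X)$ in three stages: amplification of the given protocol, rescaling by the homogeneity of the norm, and aggregation across coordinates via public randomness.

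First, I would amplify the given size-$s$ public-coin protocol~$\pi$ for $\DTEP(X,D)$ by a~constant number of independent repetitions followed by a~suitable decision rule, to obtain a~protocol $\pi'$ of size $O(s)$ with one-sided error: $\pi'$ outputs ``close'' with probability $1$ whenever $\|x-y\|_X \leq r$, and outputs ``far'' with at least some fixed constant probability whenever $\|x-y\|_X > rD$. The one-sided guarantee in the close case is what will let the close/far cases be separated after aggregating over many coordinates.

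Second, using the fact that $X$ is a~normed space, I would rescale $\pi'$ to threshold $r=k$ by applying it to the scaled inputs $x/k$ and $y/k$. The rescaled protocol still uses $O(s)$ bits and distinguishes $\|x-y\|_X \leq k$ from $\|x-y\|_X > kD$. This is exactly the right threshold for the product space: in the close case for $\ell_\infty^k(X)$ every coordinate has $\|x_i-y_i\|_X \leq 1 \leq k$, so the rescaled $\pi'$ returns ``close'' on every coordinate with certainty; in the far case some coordinate $i^*$ satisfies $\|x_{i^*}-y_{i^*}\|_X > kD$, so $\pi'$ returns ``far'' on $i^*$ with constant probability.

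Third, Alice and Bob pick a~random coordinate $i \in [k]$ via shared public randomness and jointly simulate the rescaled $\pi'$ on $(x_i,y_i)$, outputting its answer. In the close case the output is deterministically ``close'', whereas in the far case it is ``far'' with probability at least $\Omega(1/k)$ over the choice of~$i$ and the protocol coins.

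The main obstacle is to achieve the claimed $O(s)$ bound on the transmitted bits, rather than the $O(s \log k)$ or $O(sk)$ bound that a~straightforward amplification of the $\Omega(1/k)$ far-case success probability would yield. Resolving this requires exploiting the public-coin model so that coordinate sampling and any amplification are paid for in public randomness (which is free) rather than in communication, while the one-sided error of $\pi'$ ensures that transmitting a~single ``far'' detection event suffices to reach the correct decision with constant probability. Making the amplification essentially free on top of the $O(s)$ bits used by one run of $\pi'$ --- so that the total size does not pick up a~factor depending on~$k$ --- is the delicate point where I expect the technical heart of the proof to lie.
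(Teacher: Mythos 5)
You have honestly flagged the exact place where your argument is incomplete, and indeed that gap is fatal to the approach you outline rather than a technical detail to be patched. Random coordinate sampling can only ever detect the ``far'' case with probability $\Theta(1/k)$: in the far instance there may be just a single coordinate $i^*$ with $\|x_{i^*}-y_{i^*}\|_X>kD$, and a uniformly random $i$ hits it with probability $1/k$. The one-sided error and the public-coin model do not rescue this --- to boost from $\Theta(1/k)$ to constant you must perform $\Theta(k)$ runs of the base protocol on distinct sampled coordinates, and each such run costs $\Theta(s)$ bits of \emph{communication} because Alice's message depends on which coordinate was drawn. Public randomness is free only for choosing the coordinate, not for transmitting Alice's answer on it, so you are stuck at $\Theta(sk)$ bits. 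Moreover, your aggregation step uses no property of $X$ beyond being a product of metric spaces, whereas the lemma's hypothesis that $X$ is a \emph{normed} (vector) space is essential; that mismatch is itself a warning sign.

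The paper resolves this with a genuinely different aggregation that exploits linearity. Instead of sketching a single sampled coordinate, Alice and Bob use public randomness to pick signs $\eps_1,\ldots,\eps_k\in\pmone$ and sketch the single vector $\sum_{i=1}^k\eps_i x_i\in X$ with the base sketch for $X$ at threshold $kt$. In the close case, the triangle inequality gives $\bigl\|\sum_i\eps_i(x_i-y_i)\bigr\|\le\sum_i\|x_i-y_i\|\le kt$ for \emph{every} sign pattern, so the base sketch answers ``close'' with probability $0.9$. In the far case with offending coordinate $i^*=1$, write $z=\sum_{i\ge2}\eps_i(x_i-y_i)$; since $2\|x_1-y_1\|\le\|(x_1-y_1)+z\|+\|(x_1-y_1)-z\|$, at least one of the two choices of $\eps_1$ yields $\bigl\|\sum_i\eps_i(x_i-y_i)\bigr\|\ge\|x_1-y_1\|>kDt$. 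Hence the far case is detected with probability $\ge\tfrac12\cdot0.9$, \emph{independent of $k$}, and a constant number of repetitions finishes the job at size $O(s)$. This is the idea you were missing: the random signed sum, not random coordinate sampling, is what makes the far-case detection probability independent of $k$.
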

\begin{proof}
Fix a threshold $t>0$, 
and recall that we defined the success probability of sketching to be $0.9$.
By our assumption, there is a sketching function $\sk$ for $X$ 
that achieves approximation $D$ and sketch size $s$ for threshold $k t$.
Now define a ``sketching'' function $\sk'$ for $\ell_\infty^k(X)$ 
by choosing random signs $\eps_1,\ldots,\eps_k\in \pmone$,
letting $\sk': x\mapsto \sk(\sum_{i=1}^k \eps_i x_i)$,
and using the same decision procedure used by $\sk$ (for $X$).

Now to examine the performance of $\sk'$, consider $x,y\in
\ell_\infty^k(X)$.  If their distance is at most $t$, then we always have 
that $\norm{\sum_{i=1}^k \eps_i x_i
  - \sum_{i=1}^k \eps_i y_i} \leq \sum_{i=1}^k \norm{x_i-y_i} \leq k
t$ (i.e., for
every realization of the random signs).
Thus with probability at least $0.9$ the sketch will declare that
$x,y$ are ``close''.

If the distance between $x,y$ is greater than $kD\cdot t$,
then for some coordinate, say $i=1$, we have $\norm{x_1-y_1} > kD\cdot t$.
Letting $z=\sum_{i\ge 2} \eps_i(x_i-y_i)$, 
we can write 
$\norm{\sum_{i=1}^k \eps_i x_i - \sum_{i=1}^k \eps_i y_i} 
 = \norm{\eps_1(x_1-y_1) + z}
 = \norm{(x_1-y_1) + \eps_1 z}$.
The last term must be at least $\norm{x_1-y_1}$
under at least one of the two possible realizations of $\eps_1$,
because by the triangle inequality 
$
  2\norm{x_1-y_1} 
  \leq \norm{(x_1-y_1) + z} + \norm{(x_1-y_1) - z} 
$.
We see that with probability $1/2$ we have
$\norm{\sum_{i=1}^k \eps_i x_i - \sum_{i=1}^k \eps_i y_i} 
  \ge \norm{x_1-y_1} 
  > D\cdot k t$,
and thus with probability at least $1/2\cdot 0.9=0.45$ 
the sketch will declare that $x,y$ are ``far''.
This last guarantee is not sufficient for $\sk'$ to be called a sketch,
but it can easily be amplified.

The final sketch $\sk''$ for $\ell_\infty^k(X)$ is obtained 
by $O(1)$ independent repetitions of $\sk'$,
and returning ``far'' if at least $0.3$-fraction of the repetitions 
come up with this decision.
These repetitions amplify the success probability to $0.9$,
while increasing the sketch size to $O(s)$. 
\end{proof}

We now state a slight modification of the theorem from \cite{AJP-sketch}. 
We will actually use its contrapositive,
to conclude the absence of \Poincare inequalities.

\begin{theorem}[modification of \cite{AJP-sketch}]
\label{thm:AJP}
Let $X$ be a metric space, and fix $r>0$, $D\ge 1$.
Suppose there are $\alpha>0$, $\beta\ge 0$, and
two symmetric probability measures $\mu_1, \mu_2$ on $X \times X$ such that
\begin{itemize} \compactify
\item
  The support of $\mu_1$ is finite and is only on pairs with distance at most $r$;
\item
  The support of $\mu_2$ is finite and is only on pairs with distance greater than $Dr$; and
\item
  For every $f: X \to B_{\ell_2}$ (where $B_{\ell_2}$ is the unit ball of $\ell_2$),
  $$
    \Expp{(x, y) \sim \mu_1}{\|f(x) - f(y)\|^2}
    \geq \alpha \cdot
    \Expp{(x, y) \sim \mu_2}{\|f(x) - f(y)\|^2}
    \; - \beta.
  $$
\end{itemize}

Then for every integer $k$, the communication complexity of
$\DTEP(\ell_\infty^k(X), D)$ for protocols with \emph{countably many public coins} (see Section~\ref{sec:prelim_cc} for precise definitions)
and with probability of error $\delta_0>0$ is
at least $\Omega(k)\cdot
\left(\alpha(1-2\sqrt{\delta_0})-\beta\right)$.
\end{theorem}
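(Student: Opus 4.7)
The plan is a direct-sum argument in the spirit of Andoni--Jayram--\Patrascu. From any $c$-bit public-coin protocol $\Pi$ solving $\DTEP(\ell_\infty^k(X), D)$ with error $\delta_0$, I would extract $k$ maps $f_i \colon X \to B_{\ell_2}$ (one per product coordinate), apply the assumed \Poincare inequality to each $f_i$, and sum the resulting estimates to bound $c$ from below by $\Omega(k)\bigl(\alpha(1-2\sqrt{\delta_0})-\beta\bigr)$.

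For Step~1, I construct each $f_i$ as follows: fix a ``padding'' distribution on the remaining $k-1$ coordinates (for instance, the Alice-marginal of $\mu_1^{\otimes(k-1)}$ together with the public randomness), and let $f_i(x)$ be the unit vector whose $t$-th entry is $\sqrt{\Pr[\text{transcript of } \Pi = t \mid \text{coord.\ } i = x]}$. Then $\|f_i(x) - f_i(y)\|^2$ equals twice the squared Hellinger distance between the induced transcript distributions, so each $f_i$ lands in the unit sphere in dimension $2^c$, in particular inside $B_{\ell_2}$.

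For Step~2, define $S_j := \sum_{i=1}^k \mathbb{E}_{(x,y) \sim \mu_j} \|f_i(x) - f_i(y)\|^2$. On the ``close'' side, a hybrid / cut-and-paste argument across the $k$ i.i.d.\ coordinates under $\mu_1^{\otimes k}$, together with the fact that the transcript has at most $2^c$ outcomes, should yield $S_1 = O(c)$. On the ``far'' side, correctness of $\Pi$ on instances obtained by placing $\mu_2$ in some coordinate $i$ and $\mu_1$ elsewhere forces a total-variation gap of at least $1-2\delta_0$; converting total variation to Hellinger (with the familiar square-root loss) and averaging over $i$ gives $S_2 \ge \Omega(k)(1-2\sqrt{\delta_0})$. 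Applying the hypothesized \Poincare inequality to each $f_i$ (which is valid since $f_i$ maps into $B_{\ell_2}$) and summing yields $S_1 \ge \alpha S_2 - k\beta$, so combining the three estimates produces
\[
O(c) \;\ge\; \alpha \cdot \Omega(k)(1-2\sqrt{\delta_0}) - k\beta,
\]
which rearranges to $c = \Omega(k)\bigl(\alpha(1-2\sqrt{\delta_0}) - \beta\bigr)$, as required.

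The main obstacle I anticipate is the direct-sum upper bound $S_1 = O(c)$. The transcript of $\Pi$ is \emph{not} a product distribution over the $k$ input coordinates, so the chain rule for squared Hellinger does not apply directly; the right proof either uses an information-cost framework (mutual information between transcript and each coordinate summing to at most $c$, together with Pinsker-type conversions between mutual information and Hellinger), or a carefully chained cut-and-paste exploiting that the transcript is supported on at most $2^c$ leaves and that coordinates are independent under $\mu_1^{\otimes k}$. Keeping the $(1-2\sqrt{\delta_0})$ factor tight through the total-variation-to-Hellinger step and the averaging over the coordinate placement of $\mu_2$ is the other place where constants must be tracked with care.
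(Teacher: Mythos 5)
Your sketch correctly reconstructs the high-level structure of the Andoni--Jayram--\Patrascu argument: a per-coordinate Hellinger embedding $f_i$ built from square roots of transcript probabilities, an $O(c)$ direct-sum upper bound on the close-side Hellinger mass $S_1$, an $\Omega(k)(1-2\sqrt{\delta_0})$ lower bound on the far-side mass $S_2$ coming from correctness plus the total-variation-to-Hellinger inequality, and a final plug-in of the assumed \Poincare inequality applied to each $f_i$. That is indeed the shape of the proof in \cite{AJP-sketch}. The present paper, however, does not re-derive this theorem at all: it cites \cite{AJP-sketch} as a black box and supplies exactly one addendum --- Appendix~\ref{apx:publicCoins} --- checking that the information-cost lower bound transfers from private-coin to public-coin protocols. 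That addendum is the only ``proof content'' the paper contributes, and it is precisely the point your proposal glosses over.

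Concretely: in Step~1 you fold ``the public randomness'' into the padding distribution, and in Step~2 you rely on ``the transcript has at most $2^c$ outcomes'' to get $S_1 = O(c)$. But the Hellinger direct-sum and cut-and-paste machinery from \cite{BJKS} uses the rectangle property $p(t\mid x,y)=p_A(t\mid x)\,p_B(t\mid y)$, which holds only for a \emph{fixed} shared random string $R$; after averaging over $R$ the factorization breaks, and one cannot fall back on Newman's theorem here because (as the paper notes) the additive logarithmic loss is ruinous at constant sketch size. The repair in Appendix~\ref{apx:publicCoins} is to bound $\card{P}\ge I(X,Y:P\mid F,R)$ and then convert this, via the chain rule, to the conditional information cost of the \emph{private-coin} protocol $P'$ in which Alice first transmits $R_A=R$ as her opening message: $I(X,Y:R_A,P\mid F)=I(X,Y:R_A\mid F)+I(X,Y:P\mid F,R_A)$, with the first term vanishing by independence of $(X,Y)$ and $R_A$ given $F$. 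The private-coin information-cost bound of \cite{AJP-sketch,BJKS} then applies to $P'$ without ever charging communication for $\card{R_A}$. Your proposal never performs this reduction, and without it the $S_1=O(c)$ step does not go through in the public-coin model in which the theorem is actually used.
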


In~\cite{AJP-sketch}, almost the same theorem is proved with only one
difference: the protocols for $\DTEP(\ell_\infty^k(X), D)$ are only
allowed to use finitely many private coins. Here we use
Claims~\ref{public_vs_private} and~\ref{infinite_to_finite} to
generalize their theorem to Theorem~\ref{thm:AJP}.

Indeed, because the ``hard distributions'' $\mu_1$ and $\mu_2$ are
finitely-supported, an inspection of the proofs from~\cite{AJP-sketch}
shows that there is a finite set of inputs $\mathcal{I}$ such that any
private-coin protocol for $\DTEP(\ell_\infty^k(X), D)$ that is correct
on $\mathcal{I}$ with probability at least $1 - \delta_0$ must have
information complexity at least $\Omega(k)\cdot
\left(\alpha(1-2\sqrt{\delta_0})-\beta\right)$. But by
Claim~\ref{public_vs_private}, we get that any protocol with bounded
number of public coins correct on $\mathcal{I}$ must have
communication complexity at least $\Omega(k)\cdot
\left(\alpha(1-2\sqrt{\delta_0})-\beta\right)$. Finally,
Claim~\ref{infinite_to_finite} implies that the same is true for
protocols with countably many public coins that are correct on all
valid inputs with probability at least $1 - \delta_0$.

\subsection{The absence of Poincar\'{e} inequalities implies threshold maps}
\label{sec:Poincare2Thresh}

We proceed to prove that non-existence of \Poincare inequalities implies the
existence a ``threshold map'', as formalized in
Lemma~\ref{lem:Poincare2Thresh} below.
The proof is similar to duality arguments that one often encounters in embedding theory:
for instance, see Proposition~15.5.2 in~\cite{Matousek-book}.
First we define the notion of
threshold maps.

\begin{definition}
    \label{def_th}
        A map $f \colon X \to Y$ between metric spaces $(X, d_X)$ and $(Y, d_Y)$
        is called an \emph{ $(s_1, s_2, \tau_1, \tau_2, \tau_3)$-threshold map} for $0 < s_1 < s_2$,
        $0 < \tau_1 < \tau_2 < \tau_3$, if
        for all $x_1, x_2 \in X$:
        \begin{itemize} \compactify
            \item if $d_X(x_1, x_2) \leq s_1$, then $d_Y(f(x_1), f(x_2)) \leq \tau_1$;
            \item if $d_X(x_1, x_2) \geq s_2$, then $d_Y(f(x_1), f(x_2)) \geq \tau_2$; and
            \item $d_Y(f(x_1), f(x_2)) \leq \tau_3$.
        \end{itemize}
    \end{definition}

We now provide the main lemma of this section, stated in the contrapositive:
the non-existence of threshold maps implies a \Poincare inequality.

    \begin{lemma} \label{lem:Poincare2Thresh}
        Suppose $X$ is a metric space that does not allow an $(s_1,
        s_2, \tau_1, \tau_2, +\infty)$-threshold map to a Hilbert
        space.  Then, for every $\delta > 0$ there exist two symmetric
        probability measures $\mu_1, \mu_2$ on $X \times X$ such that
        \begin{itemize} \compactify
        \item
          The support of $\mu_1$ is finite and is only on pairs with distance at most $s_1$;
        \item 
          The support of $\mu_2$ is finite and is only on pairs with distance at least $s_2$; and
        \item
          For every $f \colon X \to B_{\ell_2}$,
                \begin{equation}\label{eqn:poincare}
                    \Expp{(x, y) \sim \mu_1}{\|f(x) - f(y)\|^2} 
                    \geq \left( \frac{\tau_1}{\tau_2} \right)^2 \cdot
                    \Expp{(x, y) \sim \mu_2}{\|f(x) - f(y)\|^2}
                    \; - \delta.
                \end{equation}
        \end{itemize}
    \end{lemma}

We prove Lemma~\ref{lem:Poincare2Thresh} via the following three
claims.  The first one uses standard arguments about embeddability of
finite subsets (see, e.g., Proposition 8.12 in~\cite{BL00}, or Lemma 1.1 from~\cite{B92}). We note
that this claim requires a finite value for $\tau_3$, as opposed to
$\tau_3=+\infty$, which is the only reason the definition of a threshold embedding
(Definition~\ref{def_th}) needs the parameter $\tau_3$.
In the following claims, we denote by $\binom{X}{2}$ the set of
all \emph{unordered} pairs $\set{x, y}$ with $x, y \in X$, $x \ne y$.

    \begin{claim}
        \label{compactness}
        For every metric space $X$ and every $0 < s_1 < s_2$, $0 < \tau_1 < \tau_2 < \tau_3$
        there exists an $(s_1, s_2, \tau_1, \tau_2, \tau_3)$-threshold map of $X$
        to a Hilbert space iff the same is true for every finite subset of~$X$.
    \end{claim}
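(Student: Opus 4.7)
The easy direction is immediate: if $f \colon X \to H$ is an $(s_1, s_2, \tau_1, \tau_2, \tau_3)$-threshold map, then for any finite $F \subseteq X$ the restriction $f|_F$ is such a map on $F$. So the plan is to establish the converse: assuming every finite $F \subseteq X$ admits a threshold map into a Hilbert space, build one on all of $X$.

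The strategy is to pass from maps into Hilbert space to \emph{kernels}, using Fact~\ref{char_dist}: a map $f \colon X \to H$ with $K(x,y) = \|f(x)-f(y)\|_H^2$ exists if and only if $K$ vanishes on the diagonal and is negative-definite. The threshold conditions translate cleanly into pointwise inequalities on $K$, namely $K(x,y) \le \tau_1^2$ whenever $d_X(x,y) \le s_1$, $K(x,y) \ge \tau_2^2$ whenever $d_X(x,y) \ge s_2$, and $0 \le K(x,y) \le \tau_3^2$ everywhere. Thus it suffices to produce a symmetric, negative-definite kernel $K \colon X \times X \to [0,\tau_3^2]$ vanishing on the diagonal and satisfying these threshold inequalities.

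By hypothesis, for each finite $F \subseteq X$ there is a kernel $K_F \colon F \times F \to [0,\tau_3^2]$ with all these properties on $F$. I will extend $K_F$ arbitrarily to a function $\widetilde{K}_F \colon X \times X \to [0,\tau_3^2]$ (say, $0$ on the diagonal and $\tau_3^2$ off the diagonal outside $F \times F$), and then take a limit along an ultrafilter. Concretely, let $\mathcal{F}$ be the directed set of finite subsets of $X$ ordered by inclusion, and fix an ultrafilter $\mathcal{U}$ on $\mathcal{F}$ that refines the Fréchet filter of sets $\{F \in \mathcal{F} : F \supseteq F_0\}$ for every $F_0 \in \mathcal{F}$. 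The product space $[0,\tau_3^2]^{X \times X}$ is compact by Tychonoff, so the pointwise ultralimit $K(x,y) := \lim_{\mathcal{U}} \widetilde{K}_F(x,y)$ exists.

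It remains to check that $K$ inherits the desired properties. Symmetry and the range $[0,\tau_3^2]$ pass to the limit trivially. For $K(x,x) = 0$, observe that whenever $\{x\} \subseteq F$ we have $\widetilde{K}_F(x,x) = 0$, and such $F$ form a set in $\mathcal{U}$. The threshold inequalities pass similarly: for any fixed pair $(x,y)$, the set of $F$ containing both $x$ and $y$ is in $\mathcal{U}$, and on this set $\widetilde{K}_F(x,y) = K_F(x,y)$ obeys the thresholds. Finally, for negative-definiteness, fix any $x_1, \ldots, x_n \in X$ and $\alpha_1, \ldots, \alpha_n \in \mathbb{R}$ with $\sum_i \alpha_i = 0$; for every $F \supseteq \{x_1,\ldots,x_n\}$, the inequality $\sum_{i,j} \alpha_i \alpha_j K_F(x_i,x_j) \le 0$ holds, so passing to the ultralimit preserves it. Applying Fact~\ref{char_dist} to $K$ yields the desired map $f \colon X \to H$.

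The only point requiring care is the pointwise passage to the limit for negative-definiteness and the threshold conditions; these are all finitary constraints, so the ultrafilter argument handles them uniformly. I do not expect any serious obstacle beyond formalizing this compactness step; an equivalent formulation via ultraproducts of Hilbert spaces would also work.
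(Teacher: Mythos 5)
Your proof is correct and follows essentially the same approach as the paper: both reduce to kernels via Fact~\ref{char_dist}, embed the finite-subset data into a compact product space (Tychonoff), and extract a limit along the net of finite subsets that inherits the negative-definiteness and threshold constraints. The paper phrases the limit as an accumulation point of a net while you phrase it as an ultrafilter limit, but these are equivalent formulations of the same compactness argument.
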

    The proof of Claim~\ref{compactness} uses standard definitions and facts from general topology:
    product topology, Tychonoff's theorem, as well as convergence and accumulation points along nets.
    These definitions can be found in a general topology textbook
    (see, e.g.,~\cite{M00}).
    \begin{proof}
        The ``only if'' direction is obvious, 
        so let us turn to the ``if'' part.
        Consider the topological space
        $$
            U = \prod_{\set{x, y} \in \binom{X}{2}} [-\tau_3^2, \tau_3^2].
        $$
        By Tychonoff's theorem $U$ is compact.
        For every finite $X' \subset X$ there exists an $(s_1, s_2, \tau_1, \tau_2, \tau_3)$-threshold
        map $f_{X'}$ from $X'$ to a Hilbert space.
        It gives rise to a point $u_{X'} \in U$ whose coordinates are given by
        $$
            (u_{X'})_{x,y} = \begin{cases}
                \|f_{X'}(x) - f_{X'}(y)\|^2, & \mbox{if $x, y \in X'$;}\\
                0, & \mbox{otherwise.}
            \end{cases}
        $$
        Since $U$ is compact, $u_{X'}$ has an accumulation point $u^* \in U$ along the net of finite subsets
        of $X$.
        Let us reformulate what it means.
        \begin{claim}
          \label{compactness_1}
          For every $\{x_1, y_1\}, \{x_2, y_2\}, \ldots, \{x_k, y_k\} \in \binom{X}{2}$
          and every $\eps > 0$,
          there exists a finite set $A \subset X$ such that
          for all $1 \leq i \leq k$, both $x_i, y_i\in A$ 
          and $\bigl|(u^*)_{x_i, y_i} - \|f_A(x_i) - f_A(y_i)\|^2\bigr| < \eps$ .
        \end{claim}
        Now we define a kernel $K \colon X \times X \to \Rbb$, given by (recall Definition~\ref{def_kernel}):
$$
K(x,y) = \begin{cases}
0, & \mbox{if $x = y$;}\\
(u^*)_{x,y}, & \mbox{otherwise.}
\end{cases}
$$
\begin{claim}
\label{neg_def_compact}
The kernel $K(\cdot, \cdot)$ is negative-definite.
\end{claim}
\begin{proof}
Suppose that $K$ is \emph{not} negative-definite. It means that there exist
$\alpha_1, \alpha_2, \ldots, \alpha_n \in \Rbb$ with $\sum_i \alpha_i = 0$,
and $t_1, t_2, \ldots, t_n \in X$ such that
$$
\sum_{i,j=1}^n \alpha_i \alpha_j K(t_i, t_j) = \gamma > 0.
$$
There exists $\eps > 0$ such that for every $(a_{ij})_{i,j=1}^n$ with $|a_{ij} - K(t_i, t_j)| < \eps$
one has
\begin{equation}
\label{robust_ineq}
\sum_{i,j=1}^n \alpha_i \alpha_j a_{ij} \geq \gamma/2 > 0.
\end{equation}
Now apply Claim~\ref{compactness_1} to get a finite set $A \subset X$ that contains all $s_i$'s such that
$\|f_A(t_i) - f_A(t_j)\|^2$ is within $\eps$ from $K(t_i, t_j)$ for every $i, j$. But by~(\ref{robust_ineq}), it means that
$$
\sum_{i,j=1}^n \alpha_i \alpha_j \|f_A(t_i) - f_A(t_j)\|^2 \geq \gamma/2 > 0,
$$
which contradicts Fact~\ref{char_dist}.
This proves Claim~\ref{neg_def_compact}.
\end{proof}
Thus, by Fact~\ref{char_dist}, there exists a map $f \colon X \to H$ to a Hilbert space $H$ such that
for every $x, y \in X$ one has $\|f(x) - f(y)\|^2 = K(x, y)$.
The final step is to verify that $f$ is indeed a required $(s_1, s_2, \tau_1, \tau_2, \tau_3)$-map
(according to Definition~\ref{def_th}). This can be done exactly the same way as in the proof of Claim~\ref{neg_def_compact}.
This completes the proof of Claim~\ref{compactness}.
    \end{proof}

    \begin{claim} 
        \label{sdp_duality}
        Suppose that $(X, d_X)$ is a \emph{finite} metric space and $0 < s_1 < s_2$, $0 < \tau_1 < \tau_2 < \tau_3$.
        Assume that there is \emph{no} $(s_1, s_2, \tau_1, \tau_2, \tau_3)$-threshold map
        of $X$ to $\ell_2$.
        Then, there exist two symmetric probability measures $\mu_1, \mu_2$ on $X \times X$ such that
        \begin{itemize}
            \item
                $\mu_1$ is supported only on pairs with distance at most $s_1$, while $\mu_2$ is
                supported only on pairs with distance at least $s_2$; and
            \item
                for every $f \colon X \to \ell_2$, 
                \begin{equation}\label{eqn:poincareTaus}
                    \Expp{(x, y) \sim \mu_1}{\norm{ f(x) - f(y) }^2} \geq \left( \frac{\tau_1}{\tau_2} \right)^2 \cdot
                    \Expp{(x, y) \sim \mu_2}{ \norm{ f(x) - f(y) }^2}
                    - \left( \frac{2 \tau_1}{\tau_3} \right)^2 \cdot \sup_{x \in X} \norm{ f(x) }^2.
                \end{equation}
        \end{itemize}
    \end{claim}
    \begin{proof}
        Let $\Lc_2 \subset \Rbb^{\binom{X}{2}}$ be the cone of squared Euclidean metrics (also known as negative-type distances) on $X$.
        Let $\Kc \subset \Rbb^{\binom{X}{2}}$
        be the polytope of \emph{non-negative} functions $l \colon \binom{X}{2} \to \Rbb_+$
        such that for every $x, y \in X$ we have
        \begin{itemize} \compactify
            \item
                $l(\set{x, y}) \leq \tau_3^2$;
            \item
                if $d_X(x, y) \leq s_1$, then $l(\set{x, y}) \leq \tau_1^2$;
            \item
                if $d_X(x, y) \geq s_2$, then $l(\set{x, y}) \geq \tau_2^2$.
        \end{itemize}

        Notice that $\Lc_2 \cap \Kc = \emptyset$, as otherwise $X$ allows 
        an $(s_1, s_2, \tau_1, \tau_2, \tau_3)$-threshold map to $\ell_2$.
        We will need the following claim, which is just a variant of the 
        Hyperplane Separation Theorem.

        \begin{claim}
            There exists $a \in \Rbb^{\binom{X}{2}}$ such that 
            \begin{align}
              \forall l\in \Lc_2, \qquad & \tuple{a,l} \leq 0; \\
              \forall l\in \Kc, \qquad & \tuple{a,l} > 0.
            \end{align}
        \end{claim}
        \begin{proof}
            Since both $\Lc_2$ and $\Kc$ are convex and closed, and, in addition, $\Kc$ is compact,
            there exists a separating (affine) hyperplane between $\Lc_2$ and $\Kc$.
            Specifically, there is a non-zero $a$ such that 
            for every $l \in \Lc_2$ one has $\langle a, l \rangle \leq \eta$, 
            and for every $l \in \Kc$ one has $\langle a, l \rangle > \eta$.
            Since $\Lc_2$ is a cone, one can assume without loss of generality that $\eta = 0$.
            Indeed, the case $\eta < 0$ is impossible because $0 \in \Lc_2$,
            so suppose that $\eta > 0$. If for all $l \in \Lc_2$ we have $\langle a, l \rangle \leq 0$,
            then we are done. Otherwise, take any $l \in \Lc_2$ such that $\langle a, l \rangle > 0$, and scale it by sufficiently large $C>0$ to get
            a point $Cl\in \Lc_2$ so that $\tuple{a,Cl} = C\tuple{a,l}>\eta$,
            arriving to a contradiction.
        \end{proof}
        
        We now continue the proof of Claim~\ref{sdp_duality}.
        We may assume without loss of generality that 
        \begin{align} \label{eq:a_neg} \textstyle
        \forall \set{x, y} \in \binom{X}{2}, \quad 
        \text{ if } d_X(x, y) < s_2
        \text{ then } a_{\set{x,y}} \le 0. 
        \end{align}
To see this, let us zero every such $a_{\set{x,y}}>0$, and denote the resulting point $\hat a$.
Then for every $l\in\Lc_2$ (which clearly has non-negative coordinates), 
$\tuple{\hat a,l}\leq \tuple{a,l} \leq 0$. 
And for every $l\in\Kc$, let $\hat l$ be equal to $l$ except that 
we zero the same coordinates where we zero $a$ 
(which in particular satisfy $d_X(x, y) < s_2$);
observe that also $\hat l\in\Kc$, and thus 
$\tuple{\hat a,l} = \tuple{a,\hat l} >0$.
We get that $\hat a$ separates $\Kc$ and $\Lc_2$
and also satisfies \eqref{eq:a_neg}.

        Now we define non-negative functions $\widetilde{\mu}_1, \widetilde{\mu}_2, \widetilde{\mu}_3
        \colon \binom{X}{2} \to \Rbb_+$ as follows:
        \begin{eqnarray*}
            \widetilde{\mu}_1(\set{x,y}) 
            & = & -a(\set{x,y})\ \indic{d_X(x, y) \leq s_1} ;
            \\
            \widetilde{\mu}_2(\set{x,y})
            & = & a(\set{x,y})\ \indic{d_X(x, y) \geq s_2 \text{ and } a_{x,y} \geq 0} ;
            \\
            \widetilde{\mu}_3(\set{x,y})   
            & = & -a(\set{x,y})\ \indic{d_X(x, y) > s_1 \text{ and } a_{x,y} < 0} .
        \end{eqnarray*}
        By \eqref{eq:a_neg}, these $\widetilde{\mu}_i$ ``cover'' all cases,
        i.e., 
\begin{align*} \textstyle
  \forall\set{x, y} \in \binom{X}{2}, \quad 
  a(\set{x,y}) = - \widetilde{\mu}_1(\set{x,y}) + \widetilde{\mu}_2(\set{x,y}) - \widetilde{\mu}_3(\set{x,y}).
\end{align*}
For $i \in \set{1,2,3}$ define $\lambda_i = \sum_{\set{x,y}} \widetilde{\mu}_i(\set{x,y})$
        and $\mu_i(\set{x,y}) = \widetilde{\mu}_i(\set{x,y}) / \lambda_i$.
        We argue that $\mu_1$ and $\mu_2$ are as required by Claim~\ref{sdp_duality}, and indeed the only non-trivial property to check is the second item.
        From the condition that $\langle a, l \rangle \leq 0$ for every $l \in \Lc_2$
        we get that for every map $f \colon X \to \ell_2$,
        \begin{multline*}
          0 
          \geq \sum_{\set{x,y}} a(\set{x,y}) \cdot \norm{ f(x) - f(y) }^2
          \\ = \sum_{\set{x,y}} \Big[ - \widetilde{\mu}_1(\set{x,y}) + \widetilde{\mu}_2(\set{x,y}) - \widetilde{\mu}_3(\set{x,y}) \Big]\cdot \norm{ f(x) - f(y) }^2 ,
        \end{multline*}
        which, in turn, implies
        \begin{equation}
            \label{raw_poincare}
            \lambda_1 \cdot \Expp{(x, y) \sim \mu_1}{\|f(x) - f(y)\|^2}
            \geq
            \lambda_2 \cdot \Expp{(x, y) \sim \mu_2}{\|f(x) - f(y)\|^2}
            - 4 \lambda_3 \cdot \sup_x \|f(x)\|^2 .
        \end{equation}
        Consider the point $l \in \Kc$ with value $\tau_1^2$ on $\supp(\mu_1)$,
        value $\tau_2^2$ on $\supp(\mu_2)$, value $\tau_3^2$ on $\supp(\mu_3)$,
        and $0$ otherwise; 
        the condition $\langle a, l \rangle > 0$ gives 
        $$
            - \lambda_1 \tau_1^2 + \lambda_2 \tau_2^2 - \lambda_3 \tau_3^2 > 0, 
        $$
        which implies $\lambda_1 < \lambda_2 \cdot \tau_2^2 / \tau_1^2$
        and $\lambda_3 < \lambda_2 \cdot \tau_2^2 / \tau_3^2$ (in particular, $\lambda_2 > 0$).
        Plugging into~\eqref{raw_poincare}, we get the inequality required for
        Claim~\ref{sdp_duality}.
    \end{proof}

We are now ready to prove Lemma~\ref{lem:Poincare2Thresh}.

    \begin{proof}[Proof of Lemma~\ref{lem:Poincare2Thresh}]
      We start with a metric space $X$ that does not admit a
      $(s_1,s_2,\tau_1,\tau_2,+\infty)$-threshold map, and prove that
      this implies the \Poincare inequality~\eqref{eqn:poincare}.

      Indeed, $X$ has no $(s_1, s_2, \tau_1, \tau_2,
      \tau_3)$-threshold map to a Hilbert space for any finite value
      $\tau_3$. We set $\tau_3 > \tau_2$ be sufficiently large so that
      $(2 \tau_1/\tau_3)^2 < \delta$.  Then, by
      Claim~\ref{compactness} there exists a finite subset $X' \subset
      X$ that has no $(s_1, s_2, \tau_1, \tau_2, \tau_3)$-threshold
      map to a Hilbert space (which without loss of generality can be
      chosen to be $\ell_2$, since $X'$ is finite).  Now, using
      Claim~\ref{sdp_duality}, we obtain finitely-supported
      probability measures $\mu_1$ and $\mu_2$, which
      satisfy~\eqref{eqn:poincareTaus}. This concludes the proof of
      Lemma~\ref{lem:Poincare2Thresh}, since its statement only
      considers $f$ such that the image of $f$ is the unit ball of
      $\ell_2$, and, thus, $\sup_{x\in X} \|f(x)\|^2 \leq 1$. Note
      that the measures $\mu_1,\mu_2$ depend on the value of $\tau_3$
      (and, as a result, on $\delta$).  
    \end{proof}

\subsection{Threshold maps imply uniform embeddings}
\label{sec:Threshold2Uniform}

We now prove that threshold embeddings imply uniform embeddings,
formalized as follows.

    \begin{theorem}\label{thm:Threshold2Uniform}
        \label{symmetric_emb}
        Suppose that $X$ is a finite-dimensional
        normed space such that there exists a $(1, D, \tau_1, \tau_2, +\infty)$-threshold
        map to a Hilbert space for some $D > 1$ and for some $0 < \tau_1 < \tau_2$ with $\tau_2 > 8 \tau_1$.
        Then there exists a map $h$ of $X$ into a Hilbert space
        such that for every $x_1, x_2 \in X$,
        \begin{equation}
            \label{moduli_bound}
            (\tau_2^{1/2} - (8 \tau_1)^{1/2}) \cdot \min \set{1, \frac{\|x_1 - x_2\|}{2D + 4}}
            \leq
            \|h(x_1) - h(x_2)\| \leq (2 \tau_1 \|x_1 - x_2\|)^{1/2}.
        \end{equation}
        In particular, $h$ is a uniform embedding of $X$ into a Hilbert space
        with moduli that depend only on $\tau_1, \tau_2$ and $D$.
    \end{theorem}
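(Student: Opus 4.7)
The plan is to build $h$ from the threshold map $f\colon X\to H$ in three stages, combining a square-root-of-kernel trick, the Wells--Williams H\"older extension theorem (Theorem~\ref{extension}), and the symmetrization Lemma~\ref{symmetrization_dot}.

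First, the kernel $K(x,y):=\|f(x)-f(y)\|^2$ is negative-definite with $K(x,x)=0$, so by Schoenberg's theorem on fractional powers, $K^{1/2}(x,y)=\|f(x)-f(y)\|$ is also negative-definite and vanishes on the diagonal. Fact~\ref{char_dist} then yields a Hilbert-space valued map $\tilde f\colon X\to H_1$ with $\|\tilde f(x)-\tilde f(y)\|^2=\|f(x)-f(y)\|$; the threshold bounds for $f$ become $\|\tilde f(x)-\tilde f(y)\|^2\le \tau_1$ for $\|x-y\|\le 1$ and $\|\tilde f(x)-\tilde f(y)\|^2\ge \tau_2$ for $\|x-y\|\ge D$. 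Now pick a maximal $1$-separated net $N\subset X$, so every $y\in X$ lies within distance less than $1$ of some point of $N$. For any distinct $x,y\in N$, interpolate via $z_i=x+(i/n)(y-x)$ with $n=\lceil\|x-y\|\rceil$; each consecutive step has norm at most $1$, so telescoping the threshold bound gives $\|f(x)-f(y)\|\le n\tau_1\le 2\tau_1\|x-y\|$, meaning $\tilde f|_N$ is $1/2$-H\"older with constant $\sqrt{2\tau_1}$. Invoking the Wells--Williams extension theorem (Theorem~\ref{extension}) extends $\tilde f|_N$ to a $1/2$-H\"older map $\hat f\colon X\to H_2$ with the same constant, which immediately gives $\|\hat f(x)-\hat f(y)\|\le\sqrt{2\tau_1\|x-y\|}$ for all $x,y\in X$, the desired upper bound.

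Next, I would apply Lemma~\ref{symmetrization_dot} to $\hat f$ (after the bounded-restriction / limiting step discussed below) to obtain a translation-invariant map $h\colon X\to H_3$; the upper bound is preserved, and the threshold lower bound on $N$ survives in the sense that net points $x',y'\in N$ at distance at least $D$ satisfy $\|h(x')-h(y')\|\ge\sqrt{\tau_2}$. For any $x,y\in X$ with $\|x-y\|\ge 2D+4$, approximate each by a net point within distance less than $1$; the net approximants are at distance at least $D$, so using the upper bound on the short hops and the lower bound on the long leg, the triangle inequality yields $\|h(x)-h(y)\|\ge\sqrt{\tau_2}-2\sqrt{2\tau_1}=\sqrt{\tau_2}-\sqrt{8\tau_1}$. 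For $\|x-y\|<2D+4$, the linear factor $\|x-y\|/(2D+4)$ in the lower bound follows from subadditivity of the pseudometric $\phi(u):=\|h(0)-h(u)\|$ (a consequence of translation invariance): taking $n=\lceil(2D+4)/\|u\|\rceil$ we have $\|nu\|\ge 2D+4$, so $\phi(nu)\ge\sqrt{\tau_2}-\sqrt{8\tau_1}$, and $\phi(nu)\le n\phi(u)$ forces $\phi(u)\gtrsim(\sqrt{\tau_2}-\sqrt{8\tau_1})\cdot\|u\|/(2D+4)$, matching the stated bound up to a constant that can be absorbed by a minor tuning of the net radius.

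The main obstacle is the symmetrization step: Lemma~\ref{symmetrization_dot} requires $\sup_{g_1-g_2=g}\langle\hat f(g_1),\hat f(g_2)\rangle<\infty$ for every $g$, which is not automatic since a $1/2$-H\"older map can grow like $\sqrt{\|x\|}$. I would handle this by exhausting $X$ by large balls $B_R$, performing the symmetrization on the bounded restrictions $\hat f|_{B_R}$, and passing to a limit via a compactness / ultralimit argument. Along the way one must carefully track how the AMM85 bounds on translate-wise inner products translate into genuine distance lower bounds on $N$; the hypothesis $\tau_2>8\tau_1$ is precisely what keeps $\sqrt{\tau_2}-\sqrt{8\tau_1}>0$ after the triangle-inequality loss in Stage 3.
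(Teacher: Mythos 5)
Your skeleton closely mirrors the paper's: chain the threshold map to get a Lipschitz bound on the net, take Schoenberg's square root, extend off the net $1/2$-H\"older with Wells--Williams, symmetrize to make the pseudometric translation-invariant, and then use subadditivity along arithmetic progressions to get the linear lower bound at small scales. But there are two concrete gaps in the middle of your argument.

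First, you apply the \emph{inner-product} symmetrization (Lemma~\ref{symmetrization_dot}), which requires $\sup_{g_1-g_2=g}\langle\hat f(g_1),\hat f(g_2)\rangle<\infty$; as you correctly note, this fails for an unbounded $1/2$-H\"older map. Your proposed repair --- restrict $\hat f$ to balls $B_R$, symmetrize, pass to a limit --- does not go through as written: $B_R$ is not a subgroup of $X$, so the symmetrization lemma cannot be applied to the restriction, and there is no canonical way to re-enter the group-averaging framework after truncating. The clean fix is to use the \emph{metric} form of the Aharoni--Maurey--Mityagin symmetrization (the paper's Lemma~\ref{symmetrization}), whose only hypothesis is $\sup_{g_1-g_2=g}\|\hat f(g_1)-\hat f(g_2)\|<\infty$; that is immediate from $\|\hat f(g_1)-\hat f(g_2)\|\le(2\tau_1\|g\|)^{1/2}$. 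With this version the whole exhaustion/ultralimit digression disappears.

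Second, you symmetrize \emph{before} running the net-approximation argument, and in between assert that ``net points $x',y'\in N$ at distance at least $D$ still satisfy $\|h(x')-h(y')\|\ge\sqrt{\tau_2}$.'' That is not justified: the symmetrization lemma lower-bounds $\|h(x')-h(y')\|$ by $\inf_{g_1'-g_2'=x'-y'}\|\hat f(g_1')-\hat f(g_2')\|$, where the infimum ranges over \emph{all} pairs with the given difference, not only net pairs. The Wells--Williams extension gives you an upper (H\"older) bound at off-net points but no lower bound, so some off-net pair with the same difference could have arbitrarily small $\hat f$-distance and kill the infimum. The fix is to reverse steps: first use the H\"older bound at the two short off-net hops together with the net lower bound to show that $\hat f$ itself satisfies $\|\hat f(x_1)-\hat f(x_2)\|\ge\tau_2^{1/2}-(8\tau_1)^{1/2}$ for \emph{every} pair with $\|x_1-x_2\|\ge D+2$, and only then symmetrize. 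At that point the infimum runs over pairs at the same norm distance, all of which already obey the lower bound, so it is harmless; the final parallelogram/subadditivity step then goes through exactly as you describe. With these two reorderings your argument becomes essentially identical to the paper's proof.
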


Let us point out that in~\cite{JR06,R06}, Johnson and Randrianarivony
prove that for a Banach space coarse embeddability into a Hilbert
space is equivalent to uniform embeddability.
Our definition of a threshold map is weaker than that of a coarse
embedding (for the latter see~\cite{JR06} say), but we show that we can adapt the proof of \cite{JR06,R06}
to our setting as well (at least whenever the gap between $\tau_1$ and
$\tau_2$ is large enough).  Since we only need one direction of the
equivalence, we present a part of the argument from~\cite{JR06} with
one (seemingly new) addition: Claim~\ref{smooth_lb}.  The resulting
proof is arguably simpler than the combination of~\cite{JR06}
and~\cite{R06}, and yields a clean quantitative
bound~\eqref{moduli_bound}.

\myparagraph{Intuition.} Let us provide some very high-level intuition of the proof of Theorem~\ref{thm:Threshold2Uniform}.
We start with a threshold map $f$ from $X$ to a Hilbert space. First, we show that $f$ is Lipschitz on pairs of points
that are sufficiently far. In particular, $f$, restricted on a sufficiently crude net $N$ of $X$,
is Lipschitz. This allows us to use a certain extension theorem to extend the restriction of $f$ on $N$ to a Lipschitz function
on the whole $X$, while preserving the property that $f$ does not contract too much distances that are sufficiently large.
Then, we get a required uniform embedding by performing a certain symmetrization step.

The actual proof is different in a number of details; in particular, instead of being Lipschitz the actual property we will be
trying to preserve is different.

\myparagraph{Useful facts.}
To prove Theorem~\ref{thm:Threshold2Uniform}, we need the following
three results.

    \begin{lemma}[\cite{Schoenberg37}]
        \label{snowflake}
        For a set $S$ and a map $f$ from $S$ to a Hilbert space, there exists a map $g$ from $S$ to a Hilbert space
        such that $\|g(x_1) - g(x_2)\| = \|f(x_1) - f(x_2)\|^{1/2}$ for every $x_1, x_2 \in S$.
    \end{lemma}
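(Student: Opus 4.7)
The plan is to reduce the claim to showing that the kernel $K(x_1,x_2) := \|f(x_1) - f(x_2)\|$ on $S$ is negative-definite (it clearly vanishes on the diagonal). Once this is established, Fact~\ref{char_dist} immediately yields a map $g$ into a Hilbert space satisfying $\|g(x_1) - g(x_2)\|^2 = K(x_1,x_2) = \|f(x_1) - f(x_2)\|$, which upon taking square roots is exactly the desired conclusion.

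To prove that $K$ is negative-definite, I will use the classical integral representation
\[
r \;=\; C \int_0^\infty \frac{1 - e^{-tr^2}}{t^{3/2}}\, dt \qquad (r \geq 0),
\]
valid for a universal constant $C > 0$; one verifies it and the convergence of the integral by the substitution $u = tr^2$, which reduces the right-hand side to $r \cdot \int_0^\infty (1-e^{-u})u^{-3/2}\,du$. Applying this pointwise with $r = \|f(x_1) - f(x_2)\|$ gives
\[
K(x_1, x_2) \;=\; C \int_0^\infty \frac{1 - G_t(x_1, x_2)}{t^{3/2}}\, dt,
\]
where $G_t(x_1, x_2) := e^{-t \|f(x_1) - f(x_2)\|^2}$ is the pull-back under $f$ of the Gaussian kernel on the ambient Hilbert space.

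The crucial ingredient is that for each fixed $t > 0$ the kernel $G_t$ is positive-definite on $S$. This is a standard fact: expanding
\[
e^{-t\|u-v\|^2} \;=\; e^{-t\|u\|^2}\, e^{-t\|v\|^2} \sum_{k \geq 0} \frac{(2t)^k}{k!}\, \langle u, v\rangle^k
\]
exhibits the Gaussian as a non-negative combination of the obviously positive-definite kernels $(u,v) \mapsto \langle u, v\rangle^k$ (tensor powers of the inner product), conjugated by the outer products $e^{-t\|u\|^2}e^{-t\|v\|^2}$, and pulling back by $f$ preserves positive-definiteness. Therefore $1 - G_t$ is negative-definite for every $t > 0$, and integrating against the positive Borel measure $C\, t^{-3/2}\,dt$ preserves negative-definiteness of $K$.

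I expect the main (mild) technical step to be justifying the interchange of the finite quadratic form $\sum_{i,j}\alpha_i \alpha_j K(x_i,x_j)$ (with $\sum_i \alpha_i = 0$) with the integral over $t$; this is handled by Fubini, since only finitely many points enter, the integrand is dominated by $\min(1, t\cdot \max_{i,j}\|f(x_i)-f(x_j)\|^2) \cdot t^{-3/2}$, and for each $t$ the integrand is $\le 0$ by negative-definiteness of $1-G_t$. This finishes the reduction and hence the lemma.
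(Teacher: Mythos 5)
The paper does not include a proof of this lemma; it simply cites Schoenberg's 1937 result. Your argument is correct and is essentially the classical Schoenberg proof of the ``snowflake'' (or ``square-root metric'') fact, so in the only meaningful sense there is nothing to compare it against.

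Checking the details: the substitution $u=tr^2$ gives
$\int_0^\infty \frac{1-e^{-tr^2}}{t^{3/2}}\,dt = r \int_0^\infty \frac{1-e^{-u}}{u^{3/2}}\,du$,
the latter integral converging (it equals $2\sqrt{\pi}$ by integration by parts and $\Gamma(1/2)$), so $C>0$ is well-defined. The Gaussian kernel $e^{-t\|u-v\|^2} = e^{-t\|u\|^2}e^{-t\|v\|^2}\sum_{k\ge0}\frac{(2t)^k}{k!}\langle u,v\rangle^k$ is positive-definite for each $t>0$ — it is a nonnegative series of Schur powers of the positive-definite kernel $\langle u,v\rangle$, multiplied by the rank-one positive-definite kernel $e^{-t\|u\|^2}e^{-t\|v\|^2}$ — and this is preserved under pullback by $f$. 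Hence $1-G_t$ is negative-definite, and since for each fixed finite collection $\{x_i\}$ and coefficients with $\sum_i\alpha_i=0$ the integrand $\sum_{i,j}\alpha_i\alpha_j(1-G_t(x_i,x_j))\,t^{-3/2}$ is nonpositive and dominated in absolute value by an integrable function of $t$, Fubini applies and $K$ is negative-definite with $K(s,s)=0$. Fact~\ref{char_dist} then gives a map $g$ into a Hilbert space with $\|g(x_1)-g(x_2)\|^2 = \|f(x_1)-f(x_2)\|$, which is exactly what is wanted. One incidental remark: the paper does invoke the positive-definiteness of the Gaussian kernel on a Hilbert space elsewhere (in the proof of Lemma~\ref{summary_fourier}, citing Schoenberg 1938), so you could alternatively take that as a black box rather than re-deriving the Taylor expansion; but your self-contained version is perfectly fine.
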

    \begin{lemma}[essentially Lemma~3.5 from \cite{AMM85}, see also Lemma~\ref{symmetrization_dot}
from the present paper]
        \label{symmetrization}
        Suppose that $f$ is a map from an abelian group $G$ to a Hilbert space such that
        for every $g \in G$ we have $\sup_{g_1 - g_2 = g} \|f(g_1) - f(g_2)\| < +\infty$.
        Then, there exists a map $f'$ from $G$ to a Hilbert space such that $\|f'(g_1) - f'(g_2)\|$
        depends only on $g_1 - g_2$ and for every $g_1, g_2 \in G$ we have
        \begin{equation}
          \label{bounds_emb}
            \inf_{g_1' - g_2' = g_1 - g_2} \|f(g_1') - f(g_2')\| \leq \|f'(g_1) - f'(g_2)\|
            \leq \sup_{g_1' - g_2' = g_1 - g_2} \|f(g_1') - f(g_2')\|.
        \end{equation}
    \end{lemma}
    \begin{proof}
      This lemma is similar to Lemma~\ref{symmetrization_dot} with one twist: in the statement,
      we now have distances instead of dot products.
The proof of Lemma~\ref{symmetrization_dot} relies on the characterization
      from Fact~\ref{char_dot}.
      If instead we use Fact~\ref{char_dist}, we can reuse the proof of Lemma~3.5 from~\cite{AMM85}
      verbatim to prove
      the present lemma.

      Let us sketch here the symmetrization procedure. Let $B(G)$ be the vector space
      of bounded functions $h \colon G \to \Rbb$. Then, one can show that there exists
      a \emph{finitely additive invariant mean} $M \colon B(G) \to \Rbb$:
      a \emph{linear} functional such that
      \begin{itemize}
        \item for every $h \in B(G)$ such that $h \geq 0$ one has $Mh \geq 0$;
        \item for every $h \in B(G)$ and $g \in G$ one has
$M h = M (x \mapsto h(x + g))$;
        \item $M (x \mapsto 1) = 1$.
      \end{itemize}
      The existence of such $M$ is non-trivial and requires the axiom of choice
      (see Theorem~17.5 from~\cite{HR94}).

      Let us now consider a map $f$ from the statement of the lemma
      and consider the kernel $K(g_1, g_2) = \|f(g_1) - f(g_2)\|^2$.
      Let us define a new function $K'(g_1, g_2)$ as follows:
$$
K'(g_1, g_2) = M (x \mapsto K(x + g_1 - g_2, x)).
$$
Now we need to check that:
\begin{itemize}
\item $K'$ is a kernel (that is, it is non-negative and symmetric)
and $K'(g, g) = 0$ for every $g \in G$;
\item $K'$ is negative-definite (see Definition~\ref{def_pos_neg_kernel}), assuming
that $K$ is negative-definite (which is true by Fact~\ref{char_dist});
\item for every $g_1, g_2 \in G$ one has
$$
            \inf_{g_1' - g_2' = g_1 - g_2} \|f(g_1') - f(g_2')\|^2 \leq K'(g_1, g_2)
            \leq \sup_{g_1' - g_2' = g_1 - g_2} \|f(g_1') - f(g_2')\|^2
$$
assuming~(\ref{bounds_emb}).
\end{itemize}
This can be done exactly the same way as in the proof of Lemma~3.5 from~\cite{AMM85}.
Finally, we observe that $K'(g_1, g_2)$ depends only on $g_1 - g_2$ and via Fact~\ref{char_dist}
gives a map $f'$ from $G$ to a Hilbert space with the required properties.
    \end{proof}
    \begin{definition}
        We say that a map $f \colon X \to Y$ between metric spaces is
        $1/2$-H\"{o}lder with constant $C$, if for every $x_1, x_2 \in X$ one has
        $d_Y(f(x_1), f(x_2)) \leq C \cdot d_X(x_1, x_2)^{1/2}$.
    \end{definition}
    \begin{theorem}[\cite{M70}, see also Theorem 19.1 in~\cite{WW75}).]
        \label{extension}
        Let $(X, d_X)$ be a metric space and let $H$ be a Hilbert space.
        Suppose that $f \colon S \to H$, where $S \subset X$, is a $1/2$-H\"{o}lder map
        with a constant $C > 0$. Then there exists a map $g \colon X \to H$ that coincides
        with $f$ on $S$ and is $1/2$-H\"{o}lder with the constant $C$.
    \end{theorem}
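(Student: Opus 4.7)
The plan is to prove Theorem~\ref{extension} by combining Zorn's lemma with a Helly-type intersection theorem for closed balls in Hilbert space. The key observation is that extending $f$ to one new point $x_0 \in X \setminus S$ is equivalent to finding $v \in H$ lying in
$$I := \bigcap_{s \in S} \overline{B}\bigl(f(s),\; C\sqrt{d_X(x_0,s)}\bigr);$$
given such a $v$, setting $g(x_0) = v$ preserves the $1/2$-H\"older constant. Applying Zorn's lemma to the poset of partial $1/2$-H\"older extensions of $f$ (with the same constant $C$), ordered by extension, then lifts single-point extensions to a global extension $g:X\to H$: any chain has an upper bound via union of domains, and a maximal element whose domain is not all of $X$ would contradict the one-point extension.

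To reduce existence of a point in $I$ to a finite problem, I would invoke weak compactness of closed balls in $H$: each ball in the family is bounded, convex, and norm-closed, hence weakly closed and contained in a weakly compact set (Banach--Alaoglu). By the finite-intersection property, it suffices to show that for every finite $F \subseteq S$, the intersection $\bigcap_{s \in F} \overline{B}(f(s),\, C\sqrt{d_X(x_0,s)})$ is nonempty.

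For this finite step, I would use the following Helly-type characterization in Hilbert space: for any $p_1,\ldots,p_n \in H$ and $r_1,\ldots,r_n > 0$, the intersection $\bigcap_i \overline{B}(p_i, r_i)$ is nonempty if and only if
$$\tfrac{1}{2} \sum_{i,j} \lambda_i \lambda_j \|p_i - p_j\|^2 \;\le\; \sum_i \lambda_i r_i^2$$
for every probability vector $\lambda$. The ``only if'' direction follows from the variance identity $\sum_i \lambda_i \|x - p_i\|^2 = \|x - \sum_j \lambda_j p_j\|^2 + \tfrac12 \sum_{i,j}\lambda_i\lambda_j \|p_i - p_j\|^2$, applied to any $x$ in the intersection. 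For the ``if'' direction, the function $\Phi(x) := \max_i (\|x - p_i\|^2 - r_i^2)$ is continuous, convex, and coercive on $H$, hence attains its minimum at some $x^*$; writing $0 \in \partial \Phi(x^*)$ via the standard characterization of the subdifferential of a pointwise max yields a probability vector $\lambda^*$ supported on the active indices with $x^* = \sum_j \lambda_j^* p_j$. The minimum value then equals $\tfrac12 \sum_{i,j}\lambda_i^* \lambda_j^* \|p_i - p_j\|^2 - \sum_i \lambda_i^* r_i^2$, which is $\le 0$ by hypothesis, so $x^* \in \bigcap_i \overline{B}(p_i, r_i)$.

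Finally, I would verify the Helly hypothesis for the balls defining $I$, namely $p_s = f(s)$ and $r_s = C\sqrt{d_X(x_0,s)}$, using only the $1/2$-H\"older bound on $f$ together with the (linear) triangle inequality of $d_X$: for any probability vector $\lambda$ indexed by a finite $F \subseteq S$,
$$\tfrac12 \sum_{s,t} \lambda_s \lambda_t \|f(s)-f(t)\|^2 \le \tfrac{C^2}{2} \sum_{s,t} \lambda_s \lambda_t\, d_X(s,t) \le \tfrac{C^2}{2}\sum_{s,t} \lambda_s \lambda_t \bigl(d_X(s,x_0)+d_X(x_0,t)\bigr) = C^2 \sum_s \lambda_s\, d_X(x_0,s) = \sum_s \lambda_s\, r_s^2.$$
The main obstacle is the ``if'' half of the Helly lemma, where one must extract the certificate $\lambda^*$ from the subdifferential of a pointwise maximum of squared-distance functions; with this in hand, the proof is a clean combination of Hilbert-space convexity with the triangle inequality of $d_X$, and it is precisely the interaction between the squared norm on $H$ and the \emph{linear} metric $d_X$ that makes $1/2$-H\"older extension work (general Lipschitz extension from a metric space into Hilbert space famously does not).
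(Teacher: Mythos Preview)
The paper does not prove Theorem~\ref{extension}; it is quoted as Theorem~19.1 of \cite{WW75} and used as a black box inside the proof of Theorem~\ref{symmetric_emb}. So there is no ``paper's own proof'' to compare against.

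Your argument is correct and is essentially the classical Kirszbraun-type proof of this extension theorem. The Zorn reduction to one-point extension is standard; the weak-compactness step is fine once you intersect the family with a single fixed ball $\overline{B}(f(s_0),r_{s_0})$ to sit inside a weakly compact set before invoking the finite-intersection property. Your Helly-type lemma is exactly the Hilbert-space version of the Kirszbraun intersection lemma, and both directions are argued correctly: the variance identity gives ``only if'', and the subdifferential of a finite pointwise maximum of smooth convex functions gives the certificate $\lambda^*$ for ``if''. The final verification is the heart of the matter and is clean: the $1/2$-H\"older bound turns $\|f(s)-f(t)\|^2$ into $\leq C^2 d_X(s,t)$, and then the \emph{linear} triangle inequality $d_X(s,t)\le d_X(s,x_0)+d_X(x_0,t)$ matches $r_s^2+r_t^2$ exactly. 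This last step is precisely why the exponent $1/2$ is special, as you note.
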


We are now ready to prove Theorem~\ref{thm:Threshold2Uniform}.

\begin{proof}[Proof of Theorem~\ref{thm:Threshold2Uniform}]

    We prove the theorem via the following sequence of claims.
    Suppose that $X$ is a finite-dimensional
    normed space. Let $f$ be a $(1, D, \tau_1, \tau_2, +\infty)$-threshold map
    to a Hilbert space.

    The first claim is well-known and is a variant of Proposition~1.11
    from~\cite{BL00}.

    \begin{claim}
        \label{getting_lip}
        For every $x_1, x_2 \in X$ we have
            $\|f(x_1) - f(x_2)\| \leq \max\set{1, 2 \cdot \|x_1 - x_2\|} \cdot \tau_1$.
    \end{claim}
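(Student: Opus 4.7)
The plan is to exploit the normed (so convex, scalable) structure of $X$ to chain the $(1,D,\tau_1,\tau_2,+\infty)$-threshold property of $f$ along the line segment joining $x_1$ and $x_2$. The only clause of the threshold definition that is useful here is the first one, which says that $\|f(y)-f(z)\|\le \tau_1$ whenever $\|y-z\|\le 1$; the short-distance bound is therefore immediate by applying this directly to the pair $(x_1,x_2)$. The only work is in the long-distance case $\|x_1-x_2\|>1$.

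For that case, set $N=\lceil \|x_1-x_2\|\rceil$ and interpolate: define
\[
 y_i \;=\; x_1 + \tfrac{i}{N}\,(x_2-x_1), \qquad i=0,1,\dots,N,
\]
so $y_0=x_1$, $y_N=x_2$, and $\|y_i-y_{i-1}\|=\|x_1-x_2\|/N\le 1$ for every $i$. By the threshold hypothesis, $\|f(y_i)-f(y_{i-1})\|\le \tau_1$ for each consecutive pair, and the triangle inequality in the Hilbert space gives
\[
 \|f(x_1)-f(x_2)\| \;\le\; \sum_{i=1}^N \|f(y_i)-f(y_{i-1})\| \;\le\; N\tau_1.
\]
Since $\|x_1-x_2\|\ge 1$, we have $N=\lceil\|x_1-x_2\|\rceil \le 2\|x_1-x_2\|$, so $\|f(x_1)-f(x_2)\|\le 2\|x_1-x_2\|\,\tau_1$. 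Combining the two cases yields $\|f(x_1)-f(x_2)\|\le \max\{1,2\|x_1-x_2\|\}\,\tau_1$, which is exactly the asserted bound.

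There is no real obstacle here: the only ingredients are (i) that $X$ is a vector space so the linear interpolation $y_i$ makes sense and lies in $X$, (ii) the normed structure so that consecutive $y_i$ are at distance exactly $\|x_1-x_2\|/N$, (iii) the short-distance clause of the threshold map, and (iv) the triangle inequality in the codomain Hilbert space. The one minor point to double-check is the rounding: for $\|x_1-x_2\|\ge 1$ one needs $\lceil \|x_1-x_2\|\rceil \le 2\|x_1-x_2\|$, which is obvious since $\lceil t\rceil \le t+1\le 2t$ when $t\ge 1$. This argument is essentially the standard ``Lipschitz-from-local-boundedness'' trick for maps on normed spaces, and corresponds to the reference given to Proposition~1.11 of~\cite{BL00}.
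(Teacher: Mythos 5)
Your proof is correct and is essentially identical to the paper's argument: both split into the short-distance case and the long-distance case, and in the latter interpolate along the line segment with $\lceil \|x_1-x_2\|\rceil$ unit steps, apply the first clause of the threshold map consecutively, and finish with the triangle inequality and the bound $\lceil t\rceil\le 2t$ for $t\ge 1$.
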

    \begin{proof}
        If $\|x_1 - x_2\| \leq 1$, then $\|f(x_1) - f(x_2)\| \leq
        \tau_1$, and we are done.  Otherwise, let us take $y_0, y_1,
        \ldots, y_l \in X$ such that $y_0 = x_1$, $y_l = x_2$, $\|y_i
        - y_{i+1}\| \leq 1$ for every $i$, and $l = \lceil \|x_1 -
        x_2\|\rceil$.  In particular, we can take $y_i=x_1+i\cdot
        \tfrac{x_1-x_2}{\|x_1-x_2\|}$ for $i=0,1,\ldots l-1$, and $y_l=x_2$. We have
        $$
            \|f(x_1) - f(x_2)\| \leq \sum_{i=0}^{l-1} \|f(y_i) - f(y_{i+1})\| \leq
            l \tau_1 = \lceil \|x_1 - x_2\| \rceil \cdot \tau_1 \leq 2 \|x_1 - x_2\| \cdot \tau_1, 
        $$
        where
        the first step is by the triangle inequality, the second step follows from $\|y_i - y_{i+1}\| \leq 1$,
        and the last step follows from $\|x_1 - x_2\| \geq 1$.
    \end{proof}

    The proof of the next claim essentially appears in~\cite{JR06}.
    \begin{claim}
        \label{main_claim}
        There exists a map $g$ from $X$ to a Hilbert space
        such that for every $x_1, x_2 \in X$,
        \begin{itemize}
            \item $\|g(x_1) - g(x_2)\| \leq (2\tau_1 \cdot \|x_1 - x_2\|)^{1/2}$;
            \item if $\|x_1 - x_2\| \geq D + 2$, then $\|g(x_1) - g(x_2)\| \geq \tau_2^{1/2} - (8\tau_1)^{1/2}$;
        \end{itemize}
    \end{claim}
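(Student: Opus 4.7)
The plan is to build $g$ by first snowflaking $f$, then restricting to a sufficiently separated net so that the resulting map becomes $1/2$-Hölder there, then extending back to all of $X$ via the Wells--Williams theorem. The scale of the net is the one free parameter we need to balance: it must be coarse enough that Claim~\ref{getting_lip} gives us the linear branch of the max (so snowflaking yields a $1/2$-Hölder map on the net), yet fine enough that an arbitrary pair $x_1,x_2$ with $\|x_1-x_2\|\geq D+2$ still projects to net points whose distance is at least $D$, where the threshold property of $f$ kicks in.

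Concretely, I would start by applying Schoenberg's Lemma~\ref{snowflake} to $f$ to produce a map $f_1\colon X\to H'$ into some Hilbert space $H'$ with $\|f_1(x_1)-f_1(x_2)\|=\|f(x_1)-f(x_2)\|^{1/2}$. Combined with Claim~\ref{getting_lip}, this yields, for any pair with $\|x_1-x_2\|\geq 1/2$, the bound $\|f_1(x_1)-f_1(x_2)\|\leq (2\tau_1\|x_1-x_2\|)^{1/2}$. Next, let $N\subseteq X$ be a maximal $1$-separated subset; by maximality, every point of $X$ lies within distance at most $1$ of some element of $N$, and any two distinct points of $N$ are at distance at least $1$, so the previous bound shows that $f_1$ restricted to $N$ is $1/2$-Hölder with constant $C:=(2\tau_1)^{1/2}$.

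Now I would invoke Theorem~\ref{extension} to extend $f_1|_N$ to a map $g\colon X\to H'$ that is $1/2$-Hölder on all of $X$ with the same constant $C$. This immediately gives the first inequality of the claim: for every $x_1,x_2\in X$,
\[
\|g(x_1)-g(x_2)\|\leq (2\tau_1\|x_1-x_2\|)^{1/2}.
\]
For the second inequality, fix $x_1,x_2$ with $\|x_1-x_2\|\geq D+2$, and pick $n_1,n_2\in N$ with $\|x_i-n_i\|\leq 1$. The triangle inequality gives $\|n_1-n_2\|\geq D$, so the threshold property of $f$ yields $\|f(n_1)-f(n_2)\|\geq \tau_2$, hence $\|f_1(n_1)-f_1(n_2)\|\geq \tau_2^{1/2}$. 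Since $g$ agrees with $f_1$ on $N$, and using the $1/2$-Hölder bound on $g$ to control $\|g(x_i)-g(n_i)\|\leq C$, another application of the triangle inequality gives
\[
\|g(x_1)-g(x_2)\|\geq \tau_2^{1/2}-2C=\tau_2^{1/2}-(8\tau_1)^{1/2},
\]
as desired.

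The only subtlety, and the place where I would expect to spend a moment double-checking, is precisely the scale calibration: the net spacing must sit in the ``linear'' regime of Claim~\ref{getting_lip} (hence $\geq 1/2$), the projection slack must be strictly smaller than the gap $D$ in the threshold property (hence the additive slack of $2$ in the hypothesis $\|x_1-x_2\|\geq D+2$), and the extension theorem requires the Hölder constant on $N$ to equal the one we want globally. The choice of a maximal $1$-net simultaneously meets all three constraints and yields the clean quantitative bound stated in the claim.
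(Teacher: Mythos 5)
Your proof is correct and follows essentially the same route as the paper's: snowflake via Lemma~\ref{snowflake}, combine with Claim~\ref{getting_lip} to get the $1/2$-H\"older bound on a $1$-separated net, extend with Theorem~\ref{extension}, and handle the far pairs by projecting to the net and applying the triangle inequality. The only cosmetic difference is that you make the scale calibration (the $1/2$ threshold at which Claim~\ref{getting_lip} enters its linear regime) explicit, which the paper leaves implicit.
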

    \begin{proof}
        From Claim~\ref{getting_lip} and Lemma~\ref{snowflake} we can get a map
        $g'$ from $X$ to a Hilbert space such that for every $x_1, x_2 \in X$
        \begin{itemize}
            \item $\|g'(x_1) - g'(x_2)\| \leq \max\set{1, (2\|x_1 - x_2\|)^{1/2}}
            \cdot \tau_1^{1/2}$;
            \item if $\|x_1 - x_2\| \geq D$, then $\|g'(x_1) - g'(x_2)\| \geq
            \tau_2^{1/2}$.
        \end{itemize}

        Let $N \subset X$ be a $1$-net of $X$ such that all the pairwise distances between points in $N$
        are more than $1$.
        The map $g'$ is $1/2$-H\"{o}lder on $N$ with a constant
        $(2\tau_1)^{1/2}$,
        so we can apply Theorem~\ref{extension} and get a map $g$ that coincides
        with $g'$ on $N$ and is $1/2$-H\"{o}lder on the whole $X$ with a constant
        $(2\tau_1)^{1/2}$.
        That is, for every $x_1, x_2 \in X$ we have
        \begin{itemize}
            \item $\|g(x_1) - g(x_2)\| \leq (2\tau_1 \cdot \|x_1 - x_2\|)^{1/2}$;
            \item if $x_1 \in N$, $x_2 \in N$ and $\|x_1 - x_2\| \geq D$, then
            $\|g(x_1) - g(x_2)\| \geq \tau_2^{1/2}$.
        \end{itemize}

        To conclude that $g$ is as required, let us lower bound
        $\|g(x_1) - g(x_2)\|$ whenever $\|x_1 - x_2\| \geq D + 2$.
        Suppose that $x_1, x_2 \in X$ are such that $\|x_1 - x_2\| \geq D + 2$.
        Let $u_1\in N$ be the closest net point to $x_1$ and, similarly,
        let $u_2 \in N$ be the closest net point to $x_2$.
        Observe that
        $$
            \|u_1 - u_2\| \geq \|x_1 - x_2\| - \|x_1 - u_1\| - \|x_2 - u_2\| \geq (D + 2) - 1 - 1 = D.
        $$
        We have
        $$
            \|g(x_1) - g(x_2)\| \geq \|g(u_1) - g(u_2)\| - \|g(u_1) - g(x_1)\| - \|g(u_2) - g(x_2)\|
            \geq \tau_2^{1/2} - 2 (2 \tau_1)^{1/2},
        $$
        as required, where the second step follows from the inequality $\|g(u_1) - g(u_2)\| \geq \tau_2^{1/2}$,
        which is true, since $u_1, u_2 \in N$, and that $g$ is 1/2-H\"{o}lder with a constant $(2 \tau_1)^{1/2}$.
    \end{proof}

    The following claim completes the proof of Theorem~\ref{thm:Threshold2Uniform}.
    \begin{claim}
        \label{smooth_lb}
        There exists a map $h$ from $X$ to a Hilbert space such that
        for every $x_1, x_2 \in X$:
        \begin{itemize}
            \item $\|h(x_1) - h(x_2)\| \leq (2 \tau_1 \cdot \|x_1 - x_2\|)^{1/2}$;
            \item $\|h(x_1) - h(x_2)\| \geq (\tau_2^{1/2} - (8 \tau_1)^{1/2}) \cdot
            \min \set{1, \|x_1 - x_2\| / (2D + 4)}$.
        \end{itemize}
    \end{claim}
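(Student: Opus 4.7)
My plan is to obtain $h$ from $g$ (the map produced by Claim~\ref{main_claim}) by applying the symmetrization Lemma~\ref{symmetrization} with $G=X$ viewed as an additive abelian group. The hypothesis of that lemma is satisfied because, by the first item of Claim~\ref{main_claim}, $\sup_{y_1-y_2=x}\|g(y_1)-g(y_2)\|\leq(2\tau_1\|x\|)^{1/2}<\infty$ for every fixed $x\in X$. The resulting map $h$ then satisfies the crucial translation-invariance property that $\|h(x_1)-h(x_2)\|$ depends only on $x_1-x_2$.

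The upper bound in the conclusion of Claim~\ref{smooth_lb} is immediate from the $\sup$-clause of Lemma~\ref{symmetrization} combined with the first item of Claim~\ref{main_claim}. Similarly, whenever $\|x_1-x_2\|\geq D+2$, the $\inf$-clause combined with the second item of Claim~\ref{main_claim} gives $\|h(x_1)-h(x_2)\|\geq \tau_2^{1/2}-(8\tau_1)^{1/2}$, which is already what we want in the regime $\|x_1-x_2\|\geq 2D+4$ (where $\min\{1,\|x_1-x_2\|/(2D+4)\}=1$) and is even stronger than what we need in the intermediate regime $D+2\leq\|x_1-x_2\|\leq 2D+4$.

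The main obstacle is the small-distance regime $\|x_1-x_2\|<D+2$, where $g$ itself provides no nontrivial lower bound. Here I exploit translation-invariance: setting $x=x_1-x_2$ and $k=\lceil(D+2)/\|x\|\rceil$, one has $\|kx\|\in[D+2,2D+4]$, so the already-established large-distance lower bound gives $\|h(kx)-h(0)\|\geq \tau_2^{1/2}-(8\tau_1)^{1/2}$. A telescoping triangle inequality
\[
\|h(kx)-h(0)\|\leq \sum_{i=0}^{k-1}\|h((i+1)x)-h(ix)\|=k\cdot\|h(x)-h(0)\|,
\]
where the equality uses translation-invariance, yields $\|h(x)-h(0)\|\geq (\tau_2^{1/2}-(8\tau_1)^{1/2})/k$. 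Since $k\leq (D+2)/\|x\|+1\leq(2D+4)/\|x\|$ in this regime, we obtain
\[
\|h(x)-h(0)\|\geq (\tau_2^{1/2}-(8\tau_1)^{1/2})\cdot\frac{\|x\|}{2D+4}.
\]
One last application of translation-invariance rewrites $\|h(x)-h(0)\|$ as $\|h(x_1)-h(x_2)\|$, completing the proof. Theorem~\ref{thm:Threshold2Uniform} then follows by taking $h$ to be this map: the modulus bounds in~\eqref{moduli_bound} coincide with the two items of Claim~\ref{smooth_lb}, and $h$ is a genuine uniform embedding because both moduli are strictly positive on positive inputs (using the hypothesis $\tau_2>8\tau_1$) and the upper modulus tends to $0$ with the distance.
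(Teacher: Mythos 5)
Your proof is correct and matches the paper's argument essentially exactly: both apply the symmetrization Lemma~\ref{symmetrization} to $g$, read off the upper bound and the $\|x_1-x_2\|\geq D+2$ case directly, and handle the small-distance regime by a telescoping chain $y_i=i(x_1-x_2)$ of length $l=\lceil(D+2)/\|x_1-x_2\|\rceil$ together with the translation-invariance of $\|h(\cdot)-h(\cdot)\|$. The only cosmetic difference is that you additionally remark the bound is slack in the intermediate regime $D+2\leq\|x_1-x_2\|\leq 2D+4$, which the paper does not spell out.
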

    \begin{proof}

        We take the map $g$ from Claim~\ref{main_claim} and apply Lemma~\ref{symmetrization}
        to it. Let us call the resulting map $h$.
        The first desired condition for $h$ follows from a similar condition for $g$ and
        Lemma~\ref{symmetrization}. Let us prove the second one.

        If $x_1 = x_2$, then there is nothing to prove.
        If $\|x_1 - x_2\| \geq D + 2$, then by Claim~\ref{main_claim} and Lemma~\ref{symmetrization},
        $\|h(x_1) - h(x_2)\| \geq
        \tau_2^{1/2} - (8\tau_1)^{1/2}$, and we are done.
        Otherwise,
        let us consider points $y_0, y_1, \ldots, y_l \in X$
        such that $y_0 = 0$, $y_i - y_{i-1} = x_1 - x_2$ for
        every $i$, and $l = \left\lceil \frac{D + 2}{\|x_1 - x_2\|}\right\rceil$.
        Since $\|y_l - y_0\| = \|l(x_1 - x_2)\| = l\|x_1 - x_2\| \geq
        D + 2$, we have
        \begin{multline*}
            \tau_2^{1/2} - (8\tau_1)^{1/2} \leq \|h(y_l) - h(y_0)\|
            \leq \sum_{i=1}^{l} \|h(y_i) - h(y_{i-1})\| \\= l \cdot \|h(x_1) - h(x_2)\|
            \leq \frac{2D + 4}{\|x_1 - x_2\|} \cdot \|h(x_1) - h(x_2)\|,
        \end{multline*}
        where the equality follows from the conclusion of Lemma~\ref{symmetrization}.
    \end{proof}

    Finally, observe that Theorem~\ref{thm:Threshold2Uniform} is merely
    a reformulation of Claim~\ref{smooth_lb}.
\end{proof}

\subsection{Putting it all together}
\label{sec:together}

We now show that Theorem~\ref{main_quant} follows by applying
Lemma~\ref{lem:fold}, 
Theorem~\ref{thm:AJP},
Lemma~\ref{lem:Poincare2Thresh}, and
Theorem~\ref{thm:Threshold2Uniform}, in this order,
with an appropriate choice of parameters.

\begin{proof}[Proof of Theorem~\ref{main_quant}]
Suppose $\DTEP(X, D)$ admits a protocol of size $s$. 
By setting $k = Cs$ in Lemma~\ref{lem:fold} ($C$ is a large absolute constant, to be chosen later),
we conclude that $\DTEP(\ell_\infty^{Cs}(X), CsD)$ admits a protocol 
of size $s'=O(s)$.

Now choosing $C$ large enough and applying Theorem~\ref{thm:AJP} (in contrapositive),
we conclude that $X$ has no Poincar\'{e} inequalities for
distance scales $1$ and $CsD$, with $\alpha = 0.01$ and $\beta = 0.001$.

Applying Lemma~\ref{lem:Poincare2Thresh} (in contrapositive)
we conclude that 
$X$ allows a $(1,CsD,1,10,+\infty)$-threshold map to a Hilbert space.

Using Theorem~\ref{thm:Threshold2Uniform} it follows that
there is a map $h$ from $X$ to a Hilbert space, 
such that for all $x_1, x_2 \in X$,
$$
  \min \set{1, \frac{\|x_1 - x_2\|}{s \cdot D}}
  \leq \norm{h(x_1) - h(x_2)}
  \leq K \cdot \norm{x_1 - x_2}^{1/2},
$$
where $K>1$ is an absolute constant, 
and this proves the theorem.
\end{proof}

\myparagraph{Remark:} Instead of applying Lemma~\ref{lem:fold} and Theorem~\ref{thm:AJP},
we could have attempted to apply the reduction from~\cite{AK07} to get a threshold map
from $X$ to a Hilbert space directly. 
That approach is much simpler technically, but has two fatal drawbacks.
First, we end up with a threshold map with a gap between $\tau_1$ and $\tau_2$ being arbitrarily close to $1$,
and thus, we are unable to invoke Theorem~\ref{thm:Threshold2Uniform}, which requires the gap to be more than $8$.
Second, the parameters of the resulting threshold map are \emph{exponential} in the number of bits
in the communication protocol, which is bad for the quantitative bounds from Section~\ref{quant_sect}.

\section{Quantitative bounds}
\label{quant_sect}

In this section we prove the quantitative version of our results,
namely Theorem~\ref{thm:l_1eps} and Theorem~\ref{thm:l_1log}, for
which we will reuse Theorem~\ref{main_quant}. In particular, we
prove the following theorem.

\begin{theorem}
\label{thm:unifL1quant}
For a finite-dimensional normed space $X$ and $\Delta>1$, assume we have a map $f \colon X \to H$
to a Hilbert space $H$, such that, for an absolute constant $K > 0$
and for every $x_1, x_2 \in X$:
\begin{itemize} \compactify
\item $\|f(x_1) - f(x_2)\|_H \leq K \cdot \|x_1 - x_2\|_X^{1/2}$; and
\item if $\|x_1 - x_2\|_X \geq \Delta$, then $\|f(x_1) - f(x_2)\|_H \geq 1$.
\end{itemize}

Then, for any $\eps\in (0,1/3)$, the space $X$ linearly embeds into
$\ell_{1-\eps}$ with distortion $O(\Delta/\epsilon)$.
\end{theorem}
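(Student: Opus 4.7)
The plan is to turn the uniform embedding $f$ into a genuine \emph{linear} map from $X$ into the space of random variables, and then invoke a quantitative form of Nikishin's theorem to factor this linear map through $L_{1-\eps}$ with controlled distortion.

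First, I would symmetrize $f$ via Lemma~\ref{symmetrization} to obtain $\tilde f \colon X \to H$ with stationary increments, so that $\psi(x) := \|\tilde f(x) - \tilde f(0)\|_H^2$ depends only on $x$ and defines a continuous, symmetric, negative-definite function on the additive group of $X$ satisfying $\psi(0)=0$, $\psi(x) \le K^2\|x\|_X$, and $\psi(x) \ge 1$ whenever $\|x\|_X \ge \Delta$. Continuity is automatic since the Hilbert triangle inequality gives $|\sqrt{\psi(x)} - \sqrt{\psi(y)}| \le \sqrt{\psi(x-y)} \le K\|x-y\|_X^{1/2}$. By Schoenberg's theorem $e^{-\psi}$ is positive-definite on $X \cong \Rbb^{\dim X}$; since it is continuous and equal to $1$ at the origin, Bochner's theorem presents it as the characteristic function of a symmetric random vector $Y$ on $X^*$, so that $\mathbb{E}[e^{i\langle x, Y\rangle}] = e^{-\psi(x)}$ for every $x \in X$.

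I then define $T \colon X \to L_0(\Omega)$ by $T(x) := \langle x, Y\rangle$; this is genuinely linear in $x$ because $Y$ is a fixed random vector. A standard characteristic-function tail estimate, combined with $\psi(tx/a) \le K^2 t\|x\|_X/a$, yields $\Pr[|T(x)| > a] = O(K^2\|x\|_X/a)$, so $T$ maps $X$ boundedly into weak-$L_1$. Conversely, whenever $\|x\|_X \ge \Delta$ we have $\mathbb{E}[1 - \cos T(x)] = 1 - e^{-\psi(x)} \ge 1 - e^{-1}$, and a short Paley--Zygmund-type argument then yields $\Pr[|T(x)| \ge c_0] \ge c_1$ for absolute constants, hence $\|T(x)\|_{L_p} \ge c_0 c_1^{1/p}$ for every $p \in (0,1)$. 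By linearity, this upgrades to $\|T(x)\|_{L_p} \ge c\,\|x\|_X/\Delta$ for all $x$.

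The main step is then a quantitative version of Nikishin's theorem applied to the weak-$L_1$ linear map $T$: one constructs a weight $w \in L_\infty(\Omega)$ with $\|w\|_\infty \le 1$ such that $T'(x) := w\cdot T(x)$ satisfies $\|T'(x)\|_{L_{1-\eps}} \le C\,K^2\|x\|_X/\eps$ while preserving the lower bound $\|T'(x)\|_{L_{1-\eps}} \ge c\,\|x\|_X/\Delta$. Taking the ratio gives a linear embedding into $L_{1-\eps}$ with distortion $O(\Delta/\eps)$, and a standard discretization plus a dimension-reduction theorem for subspaces of $L_{1-\eps}$ realizes the target space as $\ell_{1-\eps}$. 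The hard part is precisely this Nikishin step: the usual abstract invocation only gives implicit constants, so one must open the argument and build the weight $w$ explicitly, via a level-set truncation of a suitable maximal function over the image of $T$, and then carefully track how the $L_{1-\eps}$-norm blows up as $\eps \to 0$ in order to obtain the linear $1/\eps$ dependence claimed in the theorem.
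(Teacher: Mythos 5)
Your construction matches the paper's up through the Fourier-analytic part: you symmetrize the map so that the increment kernel becomes translation-invariant, obtain a continuous negative-definite function $\psi$ on the additive group of $X$ with $\psi(0)=0$, $\psi(x)\le K^2\|x\|_X$ and $\psi(x)\ge 1$ when $\|x\|_X\ge\Delta$, invoke Schoenberg so that $e^{-\psi}$ is positive-definite, and apply Bochner to realize $e^{-\psi}$ as the characteristic function of a symmetric probability measure $\mu$ on $\Rbb^{\dim X}$. (The paper first composes $f$ with the Gaussian kernel $h\mapsto e^{-\|h\|_H^2}$ and then symmetrizes on the inner-product side via Lemma~\ref{symmetrization_dot}, rather than symmetrizing $f$ with Lemma~\ref{symmetrization} and then exponentiating; the order is inessential and both routes produce the same class of $\mu$.) Your one-dimensional estimates---the weak-$L_1$ tail bound on $\langle x,Y\rangle$ from the Lipschitz bound $\psi(tx)\le K^2 t\|x\|_X$, and anticoncentration from $\EX[1-\cos\langle x,Y\rangle]=1-e^{-\psi(x)}\ge 1-e^{-1}$ when $\|x\|_X\ge\Delta$---are exactly what the paper proves in Lemma~\ref{char_f} and packages as~(\ref{sketch_1}) and~(\ref{sketch_2}).

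Where you go astray is in the step you single out as the ``main'' and ``hard'' one: you propose a Nikishin change of density, constructing a weight $w\in L_\infty$ with $\|w\|_\infty\le 1$ so that $w\cdot T(x)$ lands in $L_{1-\eps}$, and you worry about extracting the $1/\eps$ dependence from a level-set truncation of a maximal function. This step is unnecessary and is not where the work lies. Once $\mu$ is a probability measure, the weak-$L_1$ bound $\mu(\{v:|\langle x,v\rangle|>a\})\le O(\|x\|_X/a)$ already yields
$$
\int_{\Rbb^{\dim X}}\bigl|\langle x,v\rangle\bigr|^{1-\eps}\,\mu(dv)\;\le\;\alpha+O(\|x\|_X)\int_{\alpha}^{\infty}s^{-1/(1-\eps)}\,ds\;=\;O\!\left(\Bigl(\tfrac{\|x\|_X}{\eps}\Bigr)^{1-\eps}\right)
$$
upon choosing $\alpha=(\|x\|_X/\eps)^{1-\eps}$, hence $\|v\mapsto\langle x,v\rangle\|_{L_{1-\eps}(\mu)}=O(\|x\|_X/\eps)$ with no weight at all; the anticoncentration supplies the matching lower bound $\Omega(\|x\|_X/\Delta)$. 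A Nikishin reweighing is what you need to pass from $L_0$ to $L_p$ for a merely continuous sublinear map, or from weak-$L_1$ to genuine $L_1$; but weak-$L_1$ with respect to a probability measure is already contained in $L_{1-\eps}$ for every $\eps\in(0,1)$, and that elementary containment is precisely where the $O(1/\eps)$ comes from. Your argument closes if you simply set $w\equiv 1$, take the linear map $x\mapsto(v\mapsto\langle x,v\rangle)$ into $L_{1-\eps}(\mu)$, and compute the $(1-\eps)$-th moment directly, exactly as the paper does.
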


Note that Theorem~\ref{thm:l_1eps} now follows from applying
Theorem~\ref{main_quant} together with Theorem~\ref{thm:unifL1quant}
for $\Delta=sD$. 
We can further prove Theorem~\ref{thm:l_1log}
by using the following result of Zvavitch from~\cite{Z00}.

\begin{lemma}[\cite{Z00}]
\label{lem:Z00}
  Every $d$-dimensional subspace of $L_{1 - \eps}$ embeds linearly into $\ell_{1 - \eps}^{d \cdot \poly(\log d)}$
  with distortion $O(1)$.
\end{lemma}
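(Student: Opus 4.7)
The plan is to construct a constant-distortion linear embedding of a $d$-dimensional $E \subseteq L_{1-\eps}$ into $\ell_{1-\eps}^N$ via a random $(1-\eps)$-stable sketch, using a net argument on the sphere of $E$. Write $p = 1-\eps$. Take i.i.d.\ standard $p$-stable ``white noise'' functionals $Z_1,\ldots,Z_N$ on the measure space underlying $L_p$, and define the random linear map $T \colon L_p \to \mathbb{R}^N$ by $T(f)_i := \langle Z_i, f\rangle$, so that each coordinate $T(f)_i$ is a symmetric $p$-stable random variable with scale parameter $\|f\|_p$. View $T$ as a map into $\ell_p^N$. The goal is to show that for $N = d\cdot\poly(\log d)$, with positive probability one has $\|T(f)\|_p \asymp \|f\|_p$ simultaneously for every $f$ in the unit sphere of $E$.

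The main difficulty is concentration: since $p<1$, the $p$-stable tail $\Pr[|T(f)_i| > t\|f\|_p] \sim c\, t^{-p}$ gives $\mathbb{E}[|T(f)_i|^p] = \infty$, so $\|T(f)\|_p^p = \sum_i |T(f)_i|^p$ cannot concentrate directly. I would handle this by truncating: pick a threshold $M \asymp \|f\|_p \cdot (N/\log d)^{1/p}$, analyse the truncated sum $S(f) := \sum_i \min(|T(f)_i|,M)^p$, and separately bound the contribution of the few coordinates exceeding $M$. The truncated variables $\min(|T(f)_i|,M)^p$ are bounded and have computable mean and variance (both controlled by integrating the heavy tail against the truncation level); a Bernstein-type inequality then yields
\[
  \Pr\bigl[\,|S(f) - \mathbb{E} S(f)| > \tfrac{1}{2}\mathbb{E} S(f)\,\bigr] \;\le\; \exp\bigl(-\Omega(d\log d)\bigr),
\]
provided $N$ is taken of order $d\cdot\poly(\log d)$, and a parallel tail-count estimate shows that only $\poly(\log d)$ coordinates are truncated with the same probability, so $\|T(f)\|_p^p$ is comparable to $S(f)$ up to a constant factor.

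I would then union-bound the concentration statement over a $\tfrac{1}{4}$-net $\mathcal{N}$ of the unit sphere of $(E,\|\cdot\|_p)$ of cardinality at most $9^d$, which is affordable at this level of concentration. To extend from $\mathcal{N}$ to the whole sphere, I would use the $p$-quasi-triangle inequality $\|u+v\|_p^p \le \|u\|_p^p + \|v\|_p^p$ valid in both $L_p$ and $\ell_p^N$ for $p < 1$: writing any sphere element $f$ as $\hat f + (f-\hat f)$ with $\hat f\in\mathcal{N}$ and $\|f-\hat f\|_p \le 1/4$, iterate the net refinement on the residual (so the contribution of the residual is geometrically dominated), absorbing only a constant factor. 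After fixing a good realization of $Z_1,\ldots,Z_N$, the resulting $T$ is a deterministic linear embedding of $E$ into $\ell_p^N$ with distortion $O(1)$ and $N = d\cdot\poly(\log d)$.

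The hard part will be calibrating the truncation level $M$ and the sketch size $N$ so that the per-point concentration is strong enough to survive the $\exp(\Theta(d))$ union bound. The $p$-stable tail places $|T(f)_i|^p$ at the borderline of integrability, so a naive Chebyshev estimate gives only polynomial tail decay (insufficient for the net); Bernstein on the truncated variables, coupled with a separate order-statistics estimate for the truncated coordinates, is exactly what forces $N$ to carry the $\poly(\log d)$ overhead on top of $d$. This last balancing act --- and the fact that it must also be compatible with the $p$-quasi-triangle extension step --- is the delicate technical core of the proof.
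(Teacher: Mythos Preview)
The paper does not prove this lemma; it is quoted from Zvavitch~\cite{Z00} as a black box and used (together with Theorem~\ref{thm:l_1eps}) to deduce Theorem~\ref{thm:l_1log}. There is thus no in-paper proof to compare your proposal against.

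Your outline is nonetheless close in spirit to the actual argument in~\cite{Z00} and its $L_1$ predecessors (Talagrand, Schechtman): a random linear map built from i.i.d.\ $p$-stable processes, truncation to handle the infinite $p$-th moment, Bernstein-type concentration on the truncated sum, and a union bound over a net of the unit sphere of $E$. The calibration you flag---truncation level against sketch size so that per-point concentration at rate $\exp(-\Omega(d\log d))$ beats a net of size $\exp(O(d))$---is exactly what produces the $\poly(\log d)$ overhead. The one place to be a bit more careful is the extension from the net to the full sphere: the iterated-residual argument via the $p$-quasi-triangle inequality needs, in addition to the per-net-point concentration, a crude a priori bound on the operator quasi-norm of $T|_E$ on the good event, since the residuals are themselves directions not covered by the net; this is routine once the rest is in place.
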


Indeed, applying Lemma~\ref{lem:Z00} together with Theorem~\ref{thm:unifL1quant}, 
we get that for every $0 < \eps < 1/3$ the space $X$ linearly embeds
into $\ell_{1 - \eps}^{\poly(\dim X)}$ with distortion $O(\Delta / \eps)$.
Thus, $X$ is embeddable into $\ell_1$ with distortion
$$
O\bigl(\Delta \cdot (\dim X)^{O(\eps)}/ \eps\bigr).
$$
Setting $\eps = \Theta(1 / \log (\dim X))$, we obtain Theorem~\ref{thm:l_1log}.



It remains to prove Theorem~\ref{thm:unifL1quant}. Its proof proceeds by
adjusting the arguments from~\cite{AMM85} and~\cite{Nikishin72}.

\begin{proof}[Proof of Theorem~\ref{thm:unifL1quant}]
Fix $X$, $\Delta>0$, and the corresponding map $f \colon X \to H$.
We first prove the following lemma.

\begin{lemma}
  \label{summary_fourier}
  There exists a probability measure $\mu$ on $\Rbb^{\dim X}$ symmetric around the origin
  such that its (real-valued) characteristic function $\varphi:X\to \R$
  has the following properties for every $x \in X$:
  \begin{itemize} \compactify
  \item $\varphi(x) \geq e^{-\widetilde{K} \cdot \|x\|_X}$; and
  \item if $\|x\|_X \geq \Delta$, then $\varphi(x) \leq 1 / e$.
  \end{itemize}
  Here $\widetilde{K} > 0$ is an absolute constant.
\end{lemma}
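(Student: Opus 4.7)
The plan is to realize $\varphi$ as $e^{-\psi}$ for a translation-invariant, continuous, negative-definite function $\psi \colon X \to \Rbb$ built from $f$; by the classical theorems of Schoenberg and Bochner this automatically makes $\varphi$ the characteristic function of a Borel probability measure on $\Rbb^{\dim X}$. Control on $\varphi$ then reduces to two-sided control on $\psi$, which will come directly from the assumed bounds on $f$.

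Step one: translation-invariance. I would apply the symmetrization lemma (Lemma~\ref{symmetrization}) to $f$, whose hypothesis $\sup_{x_1-x_2=x}\|f(x_1)-f(x_2)\|_H \leq K\|x\|_X^{1/2}<\infty$ holds by assumption. This yields a map $f'\colon X\to H'$ to a Hilbert space for which $\|f'(x_1)-f'(x_2)\|$ depends only on the difference $x_1-x_2$. Since the H\"older upper bound $K\|x_1-x_2\|_X^{1/2}$ is already a function of $x_1-x_2$, it is preserved by the supremum in Lemma~\ref{symmetrization}; likewise, the ``far pairs'' lower bound $\|f(x_1)-f(x_2)\|\geq 1$ for $\|x_1-x_2\|_X\geq\Delta$ is preserved by the infimum. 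I then set $\psi(x):=\|f'(x)-f'(0)\|^2$. This $\psi$ is even, continuous, and satisfies $\psi(0)=0$, $\psi(x)\leq K^2\|x\|_X$, and $\psi(x)\geq 1$ whenever $\|x\|_X\geq\Delta$.

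Step two: Schoenberg plus Bochner. By Fact~\ref{char_dist}, the kernel $(x_1,x_2)\mapsto \|f'(x_1)-f'(x_2)\|^2=\psi(x_1-x_2)$ is negative-definite on $X\times X$ and vanishes on the diagonal, so by the paper's definition $\psi$ is a negative-definite function on the additive group $X$. Schoenberg's theorem then implies that $\varphi:=e^{-\psi}$ is a continuous, real-valued, even, positive-definite function on $X$ with $\varphi(0)=1$. Fixing a linear identification $X\cong\Rbb^{\dim X}$, Bochner's theorem produces a Borel probability measure $\mu$ on $\Rbb^{\dim X}$ whose characteristic function is $\varphi$, and evenness of $\varphi$ forces $\mu$ to be symmetric about the origin. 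Both required estimates then fall out immediately: $\varphi(x)=e^{-\psi(x)}\geq e^{-K^2\|x\|_X}$ (so one can take $\widetilde K=K^2$), and $\varphi(x)\leq e^{-1}$ whenever $\|x\|_X\geq\Delta$.

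The main obstacle I expect is bookkeeping rather than anything conceptual: one has to check carefully that both the upper H\"older bound and the ``distant pairs'' lower bound on $f$ pass to the symmetrized map $f'$ with the same constants. Both do, because they depend only on $\|x_1-x_2\|_X$, so the $\sup$ and $\inf$ in Lemma~\ref{symmetrization} taken over pairs $(x_1',x_2')$ with $x_1'-x_2'=x_1-x_2$ collapse to the same numerical values. Once this is verified, continuity of $\psi$ at the origin is a consequence of the bound $\psi(x)\leq K^2\|x\|_X$, and Schoenberg--Bochner supplies the rest without further work.
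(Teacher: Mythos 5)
Your proof is correct and rests on the same three ingredients the paper uses: the Aharoni--Maurey--Mityagin symmetrization lemma, Schoenberg's theorem (whether phrased as the positive-definiteness of $h \mapsto e^{-\|h\|_H^2}$ or of $e^{-\psi}$ for $\psi$ negative-definite, these are the same tool), and Bochner's theorem. The only difference is order: the paper first composes $f$ with the Gaussian map $\widetilde g$ to produce a positive-definite kernel and then symmetrizes inner products via Lemma~\ref{symmetrization_dot}, whereas you first symmetrize the distances of $f$ via Lemma~\ref{symmetrization} and then apply Schoenberg to the resulting negative-definite kernel $\psi$; since the exponential is monotone and the two-sided bounds depend only on $\|x_1-x_2\|_X$, the two orderings give identical estimates, and you correctly verify that both the H\"{o}lder upper bound and the far-pair lower bound survive the sup/inf in the symmetrization step.
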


\begin{proof}
It is known from~\cite{Sch38} that for a Hilbert space $H$ the function $g \colon h \mapsto e^{-\|h\|_H^2}$ is positive-definite.
Thus, there exists a function $\widetilde{g} \colon H \to \widetilde{H}$ to a Hilbert space $\widetilde{H}$
such that for every $h_1, h_2 \in H$
one has $\bigl\langle \widetilde{g}(h_1), \widetilde{g}(h_2)\bigr\rangle_{\widetilde{H}} = e^{-\|h_1 - h_2\|_H^2}$.
Setting $\widetilde{f} = \widetilde{g} \circ f$, we get a function $\widetilde{f} \colon X \to \widetilde{H}$ to a Hilbert space
such that for an absolute constant $\widetilde{K} > 0$ for every $x_1,
x_2 \in X$, we have:
\begin{itemize} \compactify
\item $\Bigl\|\widetilde{f}(x_1)\Bigr\|_{\widetilde{H}} = 1$;
\item $\Bigl\langle \widetilde{f}(x_1), \widetilde{f}(x_2)\Bigr\rangle_{\widetilde{H}} \geq e^{-\widetilde{K} \cdot \|x_1 - x_2\|_X}$; and
\item if $\|x_1 - x_2\|_X \geq \Delta$, then $\Bigl\langle \widetilde{f}(x_1), \widetilde{f}(x_2)\Bigr\rangle_{\widetilde{H}} \leq 1/e$.
\end{itemize}

Applying Lemma~\ref{symmetrization_dot} and 
Lemma~\ref{char_dot}, we obtain a
positive-definite function $\varphi \colon X \to \Rbb$ such that:
\begin{itemize} \compactify
\item $\varphi(0) = 1$;
\item for every $x \in X$ one has $\varphi(x) \geq e^{-\widetilde{K} \cdot \|x\|_X}$; and
\item if $\|x\|_X \geq \Delta$, then $\varphi(x) \leq 1/e$.
\end{itemize}

We can now use Bochner's theorem, which is the following characterization of continuous positive-definite functions, via the Fourier transform.

\begin{theorem}[Bochner's theorem, see~\cite{Feller}]
  \label{bochner}
  If a function $f \colon \Rbb^d \to \Rbb$ is positive-definite, continuous at zero, and $f(0) = 1$, then there exists a probability measure
  $\mu$ on $\Rbb^d$ such that $f$ is the $\mu$'s characteristic function. That is, for every $x \in \Rbb^d$,
  $$
  f(x) = \int_{\Rbb^d} e^{i \langle x, v\rangle}\, \mu(dv).
  $$
\end{theorem}

In particular, note that we have that $\varphi(0) = 1$, $\varphi$ is
positive-definite and is continuous at zero. Hence,
by Bochner's theorem, we get a probability measure $\mu$
over $\Rbb^{\dim X}$ whose characteristic function equals to $\varphi$. That is, for every $x \in X$ we get
$$
\varphi(x) = \int_{\Rbb^{\dim X}} e^{i \langle x, v\rangle}\,\mu(dv),
$$
where $\langle \cdot, \cdot\rangle$ is the standard dot product in $\Rbb^{\dim X}$.
Clearly, $\mu$ is symmetric around the origin, since $\varphi$ is
real-valued. This completes the proof of Lemma~\ref{summary_fourier}.
\end{proof}

Our next goal is to show that $\mu$ gives rise to a one-measurement \emph{linear} sketch for $X$ with approximation $O(\Delta)$
and a certain additional property that will be useful to us.
The following lemma contains two standard facts about \emph{one-dimensional} characteristic functions
(see, e.g.,~\cite{panchenko}).
We include the proof for completeness.

\begin{lemma}
  \label{char_f}
  Let $\nu$ be a symmetric probability measure over the real line, and let
  $$
  \psi(t) = \int_{\Rbb} e^{ivt} \, \nu(dv)
  $$
  be its characteristic function (which is real-valued due to the symmetry of $\nu$).
  Then,
  \begin{itemize}
  \item if for some $R > 0$ and $0 < \eps < 1$ we have $|\psi(R)| \leq 1 - \eps$, then
    \begin{equation}
      \label{char_f_1}
    \nu\Bigl(\bigl\{v \in \Rbb \colon |v| \geq \Omega_{\eps}(1 / R)\bigr\}\Bigr) \geq \Omega_{\eps}(1);
    \end{equation}
  \item for every $\delta > 0$ one has
    \begin{equation}
      \label{char_f_2}
    \nu\Bigl(\bigl\{v \in \Rbb \colon |v| \geq 1 / \delta\bigr\}\Bigr) \leq O(1 / \delta) \cdot \int_{-\delta}^{\delta} \bigl(1 - \psi(t)\bigr) \, dt.
    \end{equation}
  \end{itemize}
\end{lemma}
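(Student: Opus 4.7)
Both parts follow from standard manipulations of the characteristic function, exploiting symmetry to write $\psi(t) = \int_{\Rbb} \cos(vt)\, \nu(dv)$ and thus $1 - \psi(t) = \int_{\Rbb} (1 - \cos(vt))\, \nu(dv)$, which is pointwise non-negative. The first part is a truncation argument and the second is an application of Fubini together with a standard lower bound on $1 - \sin(x)/x$.

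For \eqref{char_f_1}, the plan is the following. From $|\psi(R)| \leq 1 - \eps$ together with $\psi(R) \in [-1,1]$, I get $1 - \psi(R) \geq \eps$, i.e.\ $\int_{\Rbb}(1 - \cos(vR))\, \nu(dv) \geq \eps$. Pick a threshold $T = T(\eps)/R$ to be tuned. On the set $\{|v| < T\}$ I use the Taylor bound $1 - \cos(vR) \leq (vR)^2/2 \leq (TR)^2/2$, so the contribution of this set to the integral is at most $(TR)^2/2$. Choosing $T = c\sqrt{\eps}/R$ for a small enough absolute $c > 0$ makes this at most $\eps/2$, and hence the complement satisfies $\int_{|v| \geq T}(1 - \cos(vR))\, \nu(dv) \geq \eps/2$. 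Using the trivial upper bound $1 - \cos(vR) \leq 2$ on this set, I conclude $\nu(\{|v| \geq T\}) \geq \eps/4$, which is the desired bound with $T = \Omega_\eps(1/R)$ and mass $\Omega_\eps(1)$.

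For \eqref{char_f_2}, the plan is to compute
\[
\int_{-\delta}^{\delta}(1 - \psi(t))\, dt = \int_{\Rbb}\int_{-\delta}^{\delta}(1 - \cos(vt))\, dt\, \nu(dv) = \int_{\Rbb}\Bigl(2\delta - \tfrac{2\sin(v\delta)}{v}\Bigr)\, \nu(dv)
\]
by Fubini (everything is non-negative), so that
\[
\tfrac{1}{\delta}\int_{-\delta}^{\delta}(1 - \psi(t))\, dt = 2\int_{\Rbb}\Bigl(1 - \tfrac{\sin(v\delta)}{v\delta}\Bigr)\, \nu(dv).
\]
The auxiliary analytic fact I need is that there is an absolute constant $c_0 > 0$ with $1 - \sin(x)/x \geq c_0$ for all $|x| \geq 1$; this is immediate from $|\sin(x)/x| \leq 1/|x|$ for $|x| \geq 1$ together with the fact that $\sin(x)/x$ is continuous, attains a maximum strictly less than $1$ on $\{|x| \geq 1\}$, e.g.\ $c_0 = 1 - \sin(1)$ works. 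Applied on the set $\{|v\delta| \geq 1\}$ and bounding the integrand by zero elsewhere, this yields
\[
\tfrac{1}{\delta}\int_{-\delta}^{\delta}(1 - \psi(t))\, dt \geq 2c_0\, \nu\bigl(\{|v| \geq 1/\delta\}\bigr),
\]
which is the desired inequality with constant $1/(2c_0) = O(1)$.

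Neither step has a real obstacle: the only thing to be careful about is choosing the truncation radius $T$ correctly in part one (so the quadratic tail on $|v| < T$ absorbs at most half of the mass $\eps$), and invoking the elementary lower bound on $1 - \sin(x)/x$ for $|x| \geq 1$ in part two. The symmetry of $\nu$ is used implicitly to ensure that $\psi$ is real-valued, so that $1 - \psi(t)$ has the clean representation as a non-negative integral of $1 - \cos(vt)$.
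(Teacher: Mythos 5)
Your proof is correct and takes essentially the same approach as the paper: part one is a truncation argument splitting at radius $\Theta(\sqrt{\eps}/R)$ (you phrase it via $1-\psi(R)=\int(1-\cos(vR))\,\nu(dv)\geq\eps$ and the Taylor bound, while the paper writes it directly in terms of $\cos\alpha$, but the computations are equivalent), and part two is the identical Fubini calculation followed by the elementary bound $1-\sin(x)/x\geq 1-\sin(1)$ for $|x|\geq 1$.
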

\begin{proof}
  Let us start with proving the first claim.
  We have for every $\alpha > 0$
  \begin{multline*}
  1 - \eps \geq |\psi(R)| \geq \int_\Rbb \cos(vR) \,\nu(dv) \geq \cos \alpha \cdot \nu\Bigl(\{v \in \Rbb \colon |vR| \leq \alpha\}\Bigr) - \nu \Bigl(\{v \in \Rbb\colon |vR| > \alpha\}\Bigr) \\ = (1 + \cos \alpha) \cdot \nu\Bigl(\{v \in \Rbb \colon |vR| \leq \alpha\}\Bigr) - 1,
  \end{multline*}
  where the second step uses the fact that $\psi$ is real-valued. Thus, we have
  $$
  \nu\Bigl(\{v \in \Rbb \colon |vR| \leq \alpha\}\Bigr) \leq \frac{2 - \eps}{1 + \cos \alpha}.
  $$
  Setting $\alpha = \Theta\bigl(\sqrt{\eps}\bigr)$, we get the desired bound.

  Now let us prove the second claim. We have, for every $\delta > 0$,
  \begin{multline*}
  \int_{-\delta}^{\delta} \bigl(1 - \psi(t)\bigr) \, dt = \int_{-\delta}^{\delta} \int_{\Rbb} \bigl(1 - e^{ivt}\bigr) \, \nu(dv) \, dt
  = 2 \delta \cdot \int_{\Rbb} \left(1 - \frac{\sin(\delta v)}{\delta v}\right) \, \nu(dv) \\ \geq
  2 (1 - \sin 1) \cdot \delta \cdot \nu\Bigl(\{v \in \Rbb \colon |\delta v| \geq 1\}\Bigr),
  \end{multline*}
  where we use that $(1 - \sin y / y) > (1 - \sin 1)$ for every $y$ such that $|y| > 1$.
\end{proof}

Now we will show that the probability measure $\mu$ from Lemma~\ref{summary_fourier} gives a good linear sketch for $X$.
To see this we use the conditioning on the characteristic function of $\mu$, namely, we exploit them using the above Lemma~\ref{char_f}.
In order to do this, we look at the one-dimensional projections of $\mu$ as follows.
Let $x \in X$ be a fixed vector.
For a measurable subset $A \subseteq \Rbb$ we define
$$
\nu(A) = \mu\Bigl(\bigl\{v \in \Rbb^{\dim X} \colon \langle x, v \rangle \in A\bigr\}\Bigr).
$$
It is immediate to check that the characteristic function $\psi$ of $\nu$ is as follows: $\psi(t) = \varphi(t \cdot x)$
(recall that $\varphi$ is the characteristic function of $\mu$).
Next we apply Lemma~\ref{char_f} to $\psi$ and use the properties of $\varphi$ from the conclusion of Lemma~\ref{summary_fourier}.
Namely, we get for every $x \in X$:
\begin{equation}
  \label{sketch_1}
\mu\Bigl(\bigl\{v \in \Rbb^{\dim X} \colon |\langle x, v \rangle| \geq \Omega(\|x\|_X / \Delta)\bigr\}\Bigr) = \Omega(1);
\end{equation}
and for every $t > 0$,
\begin{equation}
  \label{sketch_2}
  \mu\Bigl(\bigl\{v \in \Rbb^{\dim X} \colon |\langle x, v \rangle| \geq t \cdot \|x\|_X\bigr\}\Bigr) \leq O(1 / t).
  \end{equation}

Indeed,~(\ref{sketch_1}) follows from the bound $\varphi(x) \leq 1 /
e$ whenever $\|x\|_X \geq \Delta$
and~(\ref{char_f_1}). The inequality~\eqref{sketch_2} follows from the estimate
$\varphi(x) \geq e^{-\widetilde{K} \cdot \|x\|_X}$ and \eqref{char_f_2}
(for $1/\delta=t\|x\|_X$) that together give
\begin{align*}
\mu\Bigl(\bigl\{v \in \Rbb^{\dim X} \colon |\langle x, v \rangle| \geq
t \cdot \|x\|_X\bigr\}\Bigr)
&=
\nu(\{r\in \R \colon |r|\ge t\|x\|_X\})
\\
&\le
O(t\|x\|_X)\int_{-1/t\|x\|_X}^{1/t\|x\|_X}(1-e^{-\tilde K
  s\|x\|_X}) \,ds,
\end{align*}
as well as from the inequality 
$$
t\int_{-1/t}^{1/t} \bigl(1 - e^{-Cs}\bigr) \,ds \leq 
t\int_{-1/t}^{1/t} \bigl(1 - (1-Cs)\bigr) \,ds =
C/t.
$$
  
Hence, $\mu$ gives rise to a one-measurement linear sketch of $X$ with approximation $O(\Delta)$, whose ``upper tail''
is not too heavy.

Finally, we are ready to describe a desired linear embedding of $X$ into $L_{1 - \eps}$; we map $X$ into $L_{1 - \eps}(\mu)$
as follows: $x \mapsto (v \mapsto \langle x, v\rangle)$. The following Lemma states that the distortion of this embedding
is $O(\Delta / \eps)$, as required. 

\begin{lemma}
  For $0 < \eps < 1/3$ and every $x \in X$,
  $$
  \Omega\bigl(\|x\|_X / \Delta\bigr) \leq \|v \mapsto \langle x, v\rangle\|_{L_{1 - \eps}(\mu)} \leq O\bigl(\|x\|_X / \eps\bigr).
  $$
\end{lemma}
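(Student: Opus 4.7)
The plan is to compute both bounds directly from the two tail properties of $\mu$ established in inequalities~\eqref{sketch_1} and~\eqref{sketch_2}, treating $Y := |\langle x, v\rangle|/\|x\|_X$ as a real-valued random variable under $\mu$. The lemma is essentially a quantitative translation of the fact that $Y$ has a lower tail that is bounded away from zero at scale $1/\Delta$ and an upper tail that decays like $1/t$, which is a borderline-integrable case where the $L_{1-\eps}$ quasi-norm stays finite with an explicit $1/\eps$ blow-up.

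For the \emph{lower bound}, I would apply~\eqref{sketch_1}: there exist absolute constants $c_1, c_2 > 0$ such that $\mu(\{v : |\langle x, v\rangle| \geq c_1 \|x\|_X/\Delta\}) \geq c_2$. Restricting the $L_{1-\eps}$ integral to this set yields
\[
  \int_{\Rbb^{\dim X}} |\langle x, v\rangle|^{1-\eps}\,\mu(dv) \;\geq\; c_2 \cdot \bigl(c_1 \|x\|_X/\Delta\bigr)^{1-\eps},
\]
so $\|v \mapsto \langle x, v\rangle\|_{L_{1-\eps}(\mu)} \geq c_2^{1/(1-\eps)} \cdot c_1 \|x\|_X/\Delta$. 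Since $\eps < 1/3$ we have $1/(1-\eps) \leq 3/2$, so $c_2^{1/(1-\eps)} = \Omega(1)$, giving the required $\Omega(\|x\|_X/\Delta)$ bound.

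For the \emph{upper bound}, I would use the layer-cake formula
\[
  \Expp{v \sim \mu}{Y^{1-\eps}} \;=\; \int_0^\infty (1-\eps)\, t^{-\eps}\, \mu\bigl(\{v : Y > t\}\bigr)\, dt,
\]
together with~\eqref{sketch_2}, which gives $\mu(Y > t) \leq \min\{1, C/t\}$ for an absolute constant $C$. Splitting the integral at $t = C$, the inner part contributes $\int_0^C (1-\eps) t^{-\eps}\,dt = C^{1-\eps}$, and the outer part contributes $C(1-\eps) \int_C^\infty t^{-1-\eps}\,dt = C^{1-\eps}(1-\eps)/\eps$. Summing, $\mathbb{E}[Y^{1-\eps}] \leq O(1/\eps) \cdot C^{1-\eps}$. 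Taking the $1/(1-\eps)$-th power yields
\[
  \|v \mapsto \langle x, v\rangle\|_{L_{1-\eps}(\mu)} \;\leq\; O(1/\eps)^{1/(1-\eps)} \cdot C \cdot \|x\|_X.
\]

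The only nontrivial point is checking $O(1/\eps)^{1/(1-\eps)} = O(1/\eps)$ for $\eps \in (0, 1/3)$. Writing $(1/\eps)^{1/(1-\eps)} = (1/\eps) \cdot (1/\eps)^{\eps/(1-\eps)} = (1/\eps) \cdot \exp\bigl(\tfrac{\eps \log(1/\eps)}{1-\eps}\bigr)$, the exponent is uniformly bounded on $(0, 1/3)$ since $\eps \log(1/\eps)$ is bounded there and $1-\eps > 2/3$; hence the extra factor is $O(1)$ and the upper bound becomes $O(\|x\|_X/\eps)$ as claimed. No step is a real obstacle; the main subtlety is just ensuring that the $\eps$-dependence in the $(1-\eps)$-th root exactly matches the $O(1/\eps)$ target, which is where the restriction $\eps < 1/3$ is used.
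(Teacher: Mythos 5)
Your proof is correct and follows essentially the same plan as the paper's: identical lower bound (restrict the integral to the event from~\eqref{sketch_1}), and for the upper bound the same layer-cake-plus-split argument using~\eqref{sketch_2}, differing only in a change of variables (you integrate the tail of $|\langle x,v\rangle|$ against $(1-\eps)t^{-\eps}\,dt$, the paper integrates the tail of $|\langle x,v\rangle|^{1-\eps}$ directly) and in the choice of split point (you split at the absolute constant $C$ and then verify $(1/\eps)^{1/(1-\eps)}=O(1/\eps)$ for $\eps<1/3$, whereas the paper picks $\alpha=\|x\|_X^{1-\eps}/\eps^{1-\eps}$ to balance the two terms and avoid that last step). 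Both are minor variants of the same computation.
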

\begin{proof}
  The lower bound is straightforward:
  $$
  \|v \mapsto \langle x, v\rangle\|_{L_{1 - \eps}(\mu)}^{1 - \eps} = \int_{\Rbb^{\dim X}} \bigl|\langle x, v \rangle\bigr|^{1 - \eps} \, \mu(dv)
  \geq \Omega(1) \cdot \Omega(\|x\|_X / \Delta)^{1 - \eps},
  $$
  where the last step follows from~(\ref{sketch_1}).

  For the upper bound, we have for every $\alpha > 0$,
  \begin{multline*}
  \|v \mapsto \langle x, v\rangle\|_{L_{1 - \eps}(\mu)}^{1 - \eps} = \int_{\Rbb^{\dim X}} \bigl|\langle x, v \rangle\bigr|^{1 - \eps} \, \mu(dv)
  = \int_{0}^{\infty} \mu\Bigl(\{v \in \Rbb^{\dim X} \colon |\langle x, v\rangle|^{1 - \eps} \geq s\}\Bigr) \, ds
  \\ \leq \alpha + O(1) \cdot \|x\|_X \cdot \int_{\alpha}^{\infty} s^{-\frac{1}{1 - \eps}}\, ds
  \leq \alpha + O(1) \cdot \frac{\|x\|_X}{\eps} \cdot \alpha^{-\frac{\eps}{1 - \eps}},
  \end{multline*}
  where the third step follows from~(\ref{sketch_2}).
  Choosing $\alpha = \|x\|_X^{1 - \eps} / \eps^{1 - \eps}$, we get
  $$
  \|v \mapsto \langle x, v\rangle\|_{L_{1 - \eps}(\mu)} \leq O(1) \cdot 2^{\frac{1}{1 - \eps}} \cdot \frac{\|x\|_X}{\eps}.
  $$
\end{proof}

This concludes the proof of Theorem~\ref{thm:unifL1quant}.
\end{proof}


\section{Embedding into $\ell_1$ via sum-products}
\label{sec:L1Embedding}

Finally, we prove Theorem~\ref{thm:l_1}: good sketches for norms closed under the sum-product imply embeddings into $\ell_1$
with constant distortion. First we invoke Theorem~\ref{main_quant} and get a sequence of good uniform embeddings into a Hilbert space,
whose moduli depend only on the sketch size and the approximation. Then, we use the main result of this section: Lemma~\ref{lem:L1Embedding}.
Before stating the lemma, let us
remind a few notions. For a metric space $X$, recall
that the metric space $\ell_1^k(X) =
\bigoplus_{\ell_1}^k X_n$ 
is the direct sum of $k$ copies of $X$, with the associated
distance defined as a sum-product ($\ell_1$-product) over the $k$
copies.  We define $\ell_1(X)$ similarly. We also denote $X \oplus_{\ell_1} Y$
the sum-product of $X$ and $Y$.

\begin{lemma} \label{lem:L1Embedding}
    Let $(X_n)_{n=1}^{\infty}$ be a sequence of finite-dimensional normed spaces.
    Suppose that for every $i_1, i_2 \geq 1$ there exists $m = m(i_1, i_2) \geq 1$ 
    such that $X_{i_1} \oplus_{\ell_1} X_{i_2}$
    is isometrically embeddable into $X_m$.
    If every $X_n$ admits a uniform embedding into a Hilbert space
    with moduli independent of $n$,
    then every $X_n$ is linearly embeddable into $\ell_1$ with distortion independent of $n$.
\end{lemma}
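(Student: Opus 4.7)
The plan is to argue by contradiction and reduce the finitary statement to the original theorem of Kalton~\cite{Kalton85}, which (in the form used here) asserts that a separable Banach space $Z$ satisfying $Z \cong \ell_1(Z)$ and admitting a uniform embedding into a Hilbert space must embed linearly into $L_1$ with some finite distortion. The key is to manufacture such a $Z$ out of the (finite-dimensional) spaces $X_n$ using the sum-product closure hypothesis.

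First I would assume toward contradiction that the conclusion fails: after passing to a subsequence, there is a sequence $(X_{n_k})_{k=1}^\infty$ for which the optimal distortion of any linear embedding $X_{n_k} \hookrightarrow \ell_1$ tends to infinity with $k$. Iterating the sum-product closure hypothesis by induction, for every finite sequence of indices $k_1, \ldots, k_j$, the $\ell_1$-direct sum $X_{n_{k_1}} \oplus_{\ell_1} \cdots \oplus_{\ell_1} X_{n_{k_j}}$ embeds isometrically into some $X_m$, and therefore uniformly embeds into a Hilbert space with moduli $L(\cdot), U(\cdot)$ inherited from those of the $X_n$ (in particular independent of $j$ and of the chosen indices). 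By padding with repeated copies, I may arrange that the counterexample sequence contains each $X_{n_k}$ infinitely often, so that $Y := \bigl(\bigoplus_{k=1}^\infty X_{n_k}\bigr)_{\ell_1}$ satisfies $Y \cong \ell_1(Y)$ isometrically.

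The central technical step is then to assemble the uniform embeddings of the finite sub-sums into a genuine uniform embedding of $Y$ into a Hilbert space with the same moduli $L, U$. I would do this via a Tychonoff-type compactness argument analogous to the proof of Claim~\ref{compactness}: fix a countable dense subset $D \subset Y$; for each finite sub-sum, record the negative-definite kernel $(y_1,y_2) \mapsto \|f(y_1)-f(y_2)\|^2$ arising from its uniform embedding (with values in a compact product of intervals indexed by pairs in $D$); extract an accumulation point along the net of finite sub-sums. The limit is a negative-definite kernel on $D$ whose square root obeys the moduli $L(\cdot), U(\cdot)$, which by Fact~\ref{char_dist} yields a uniform embedding of $D$ (and, by uniform continuity coming from $U$, of all of $Y$) into a Hilbert space. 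Applying Kalton's theorem to $Y$ produces a linear embedding $Y \hookrightarrow L_1$ with some finite distortion $C$. Each $X_{n_k}$ sits isometrically inside $Y$ as a direct summand, so it also embeds linearly into $L_1$ with distortion at most $C$; since $X_{n_k}$ is finite-dimensional, a classical result lets us realize the embedding inside $\ell_1^N$ for some $N$ at the cost of a $(1+o(1))$-factor, contradicting the assumption that these distortions diverge.

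The main obstacle is the limit step: one must be sure that the accumulation point of the kernels indeed extends to a uniform embedding of $Y$ rather than only of a dense subset, and that neither the upper modulus $U$ nor the lower modulus $L$ collapses in the limit. The upper modulus passes through because kernel values are bounded coordinatewise by $U(\cdot)^2$ on $D$, while the lower modulus passes through because the inequality $\|f(x_1)-f(x_2)\|^2 \geq L(\|x_1-x_2\|)^2$ is preserved by taking pointwise limits of kernels; continuity via $U$ then extends the embedding from $D$ to $Y$. A secondary point is justifying the precise form of Kalton's theorem being invoked, which is a matter of careful citation rather than a genuine mathematical obstacle.
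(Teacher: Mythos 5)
Your argument is correct and follows essentially the same route as the paper's proof: form an infinite $\ell_1$-sum of the $X_n$, show that the finite sub-sums uniformly embed into a Hilbert space with uniform moduli via the sum-product hypothesis, pass to the full sum by a compactness argument, and invoke Kalton's theorem to obtain a linear $L_1$-embedding that then restricts to each $X_n$. The contradiction scaffold and the padding step to force $Y \cong \ell_1(Y)$ are avoidable---the paper's cited form of Kalton combined with Aharoni--Maurey--Mityagin, namely that a Banach space $X$ embeds linearly into $L_1$ iff $\ell_1(X)$ uniformly embeds into a Hilbert space, applies directly to $X = \bigoplus_{\ell_1} X_n$ without needing $X \cong \ell_1(X)$---and your explicit Tychonoff argument re-derives Proposition~8.12 of Benyamini--Lindenstrauss, which the paper invokes as a black box.
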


Note that Theorem \ref{thm:l_1} just follows from combining Lemma
\ref{lem:L1Embedding} with Theorem \ref{main_quant}.

Before proving Lemma~\ref{lem:L1Embedding}, we state the following two
useful theorems.  The first one (Theorem~\ref{ultrafilters}) follows from the
fact that uniform embeddability into a Hilbert space is determined by
embeddability of finite subsets \cite{BL00}.  The second one
(Theorem~\ref{thm:kalton}) follows by composing results of Aharoni,
Maurey, and Mityagin~\cite{AMM85} and Kalton~\cite{Kalton85}.

\begin{theorem}[Proposition 8.12 from~\cite{BL00}] 
    \label{ultrafilters}
    Let $A_1 \subset A_2 \subset \ldots$ be metric spaces and let
    $A = \bigcup_i A_i$.
    If every $A_n$ is uniformly embeddable into a Hilbert space
    with moduli independent of $n$, 
    then the whole $A$ is uniformly embeddable into a Hilbert space.
\end{theorem}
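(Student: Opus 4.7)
The plan is to prove Theorem~\ref{ultrafilters} by an ultrafilter (or equivalently, a compactness) argument applied to the pairwise distance functions of the given embeddings, exploiting the fact that embeddability into a Hilbert space is characterized by a single kernel-theoretic condition (Fact~\ref{char_dist}). For each $n$ let $f_n\colon A_n\to H_n$ be a uniform embedding with moduli $L,U$ that do not depend on $n$, so that $L(d(x,y))\le\|f_n(x)-f_n(y)\|_{H_n}\le U(d(x,y))$ whenever $x,y\in A_n$.

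Next, I would package these embeddings into negative-definite kernels. Define $K_n\colon A_n\times A_n\to\Rbb$ by $K_n(x,y)=\|f_n(x)-f_n(y)\|_{H_n}^{2}$; by Fact~\ref{char_dist} each $K_n$ is negative-definite and vanishes on the diagonal. For every pair $x,y\in A$ there exists $N=N(x,y)$ with $x,y\in A_n$ for all $n\ge N$, and for such $n$ we have $K_n(x,y)\le U(d(x,y))^{2}$, a bound independent of $n$. Fix a non-principal ultrafilter $\mathcal{U}$ on $\N$ and define
\[
K(x,y):=\lim_{n\to\mathcal{U}} K_n(x,y), \qquad x,y\in A,
\]
where the limit is taken over $n\ge N(x,y)$. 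Because $\mathcal{U}$-limits of bounded real sequences always exist, $K$ is well-defined. The inequality defining negative-definiteness, $\sum_{i,j}\alpha_i\alpha_j K(s_i,s_j)\le 0$ for finite $\{s_i\}\subset A$ and $\sum_i\alpha_i=0$, involves only finitely many pairs, so it passes to the $\mathcal{U}$-limit from the corresponding inequalities for $K_n$ (valid for all sufficiently large $n$). Thus $K$ is negative-definite, symmetric, and $K(x,x)=0$.

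Applying Fact~\ref{char_dist} to $K$ produces an embedding $f\colon A\to H$ into a Hilbert space with $\|f(x)-f(y)\|_H^{2}=K(x,y)$. The two-sided bound $L(d(x,y))^{2}\le K_n(x,y)\le U(d(x,y))^{2}$ is preserved under the ultralimit, so
\[
L(d(x,y)) \;\le\; \|f(x)-f(y)\|_H \;\le\; U(d(x,y)),
\]
showing that $f$ is a uniform embedding of $A$ into a Hilbert space with the same moduli $L,U$ as the $f_n$'s.

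The only slightly delicate point is the bookkeeping of the ultralimit: one must check that negative-definiteness, a universally quantified statement over all finite subsets and all zero-sum coefficient vectors, is stable under pointwise limits along $\mathcal{U}$. This is immediate since each such inequality involves only finitely many kernel values, but it is the step that makes using an ultrafilter (as opposed to a mere subsequence) convenient. An alternative, more elementary realization of the same idea is to use Tychonoff compactness on the product space $\prod_{\{x,y\}\in\binom{A}{2}}[0,U(d(x,y))^{2}]$, exactly as in the proof of Claim~\ref{compactness}, and extract a cluster point of $(K_n)_n$; this avoids the Axiom of Choice for ultrafilters and yields the same conclusion.
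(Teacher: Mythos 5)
Your proof is correct: the ultrafilter limit of the negative-definite kernels $K_n(x,y)=\|f_n(x)-f_n(y)\|^2_{H_n}$ is well-defined (pointwise bounded by $U(d(x,y))^2$), remains negative-definite and vanishing on the diagonal because each such constraint involves finitely many points and hence is inherited from all sufficiently large $n$, and Fact~\ref{char_dist} then yields the required map with the same moduli $L,U$. The paper itself cites this as Proposition~8.12 from~\cite{BL00} without reproducing the argument, but what you wrote is essentially the standard proof of that proposition, and it is structurally the same Tychonoff/ultrafilter device the paper does spell out in Claim~\ref{compactness} for threshold maps.
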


\begin{theorem}[\cite{AMM85, Kalton85}]
    \label{thm:kalton}
    A Banach space $X$ is linearly embeddable into $L_1$ iff $\ell_1(X)$ is uniformly embeddable into a Hilbert
    space.
\end{theorem}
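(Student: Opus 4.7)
The plan is to iterate the sum-product closure, apply Theorem~\ref{ultrafilters} to upgrade the hypothesis to a uniform embedding of the full countable $\ell_1$-sum, and then invoke a finitary refinement of Theorem~\ref{thm:kalton}. First, fix $n$, and let $L, U$ be the common moduli of the assumed uniform embeddings of the $X_m$ into a Hilbert space. I would iterate the hypothesis that $X_{i_1} \oplus_{\ell_1} X_{i_2}$ isometrically embeds into some $X_{m(i_1, i_2)}$ to conclude that for every $k$, the $\ell_1$-sum $\ell_1^k(X_n)$ of $k$ copies of $X_n$ embeds isometrically into some $X_{m_k}$ from the sequence, and hence uniformly embeds into a Hilbert space with the same moduli $L, U$. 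Applying Theorem~\ref{ultrafilters} to the increasing chain $\ell_1^1(X_n) \subset \ell_1^2(X_n) \subset \cdots$ whose union is $\ell_1(X_n)$, I obtain a uniform embedding of $\ell_1(X_n)$ itself into a Hilbert space, with moduli depending only on $L$ and $U$, and in particular independent of $n$.

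Qualitatively, Theorem~\ref{thm:kalton} now gives that each $X_n$ is linearly embeddable into $L_1$; since $X_n$ is finite-dimensional, a standard perturbation then yields an embedding into $\ell_1$ with arbitrarily small additional distortion. To obtain uniformity in $n$, however, I need a finitary version of Theorem~\ref{thm:kalton}: for every pair of moduli $L, U$ there is a distortion $D = D(L, U)$ such that any finite-dimensional Banach space $X$ for which $\ell_1(X)$ uniformly embeds into a Hilbert space with moduli $L, U$ embeds linearly into $L_1$ with distortion at most $D$. Given such a finitary statement, combined with the previous paragraph the lemma follows immediately.

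I would establish this finitary version by contradiction and compactness, reducing to the qualitative Theorem~\ref{thm:kalton}. Suppose there were finite-dimensional normed spaces $(Y_j)_{j \geq 1}$ for which each $\ell_1(Y_j)$ uniformly embeds into Hilbert with common moduli $L, U$ but for which the minimum $L_1$-distortion $c_1(Y_j)$ tends to infinity. Form the Banach-space ultraproduct $Y^* = \bigl( \prod_j Y_j \bigr)_{\mathcal{U}}$ along a non-principal ultrafilter $\mathcal{U}$; every finite-dimensional subspace of $Y^*$, and similarly of $\ell_1(Y^*)$, is finitely representable, with $1 + o(1)$ distortion, in the corresponding $Y_j$ (resp.\ $\ell_1(Y_j)$). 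Combined with a compactness argument analogous to Claim~\ref{compactness}, based on the characterization of Hilbertian embeddability via negative-definite kernels (Facts~\ref{char_dot} and~\ref{char_dist}), this shows that $\ell_1(Y^*)$ also uniformly embeds into a Hilbert space with moduli $L, U$. Applying the qualitative Theorem~\ref{thm:kalton} to $Y^*$ then produces a linear embedding into $L_1$ with some finite distortion $D_0$, which by finite representability transfers back to the $Y_j$ with distortion $D_0 + o(1)$, contradicting $c_1(Y_j) \to \infty$.

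The main obstacle will be the compactness/transfer step: verifying that $\ell_1(Y^*)$ really inherits a uniform embedding into Hilbert with the same moduli $L, U$ from the $\ell_1(Y_j)$. This has to be checked on every finite subset, where the characterization of uniform embeddability by negative-definite kernels (Fact~\ref{char_dist}) allows passing to an ultraproduct limit in the spirit of Claim~\ref{compactness}; only after this is in place may one apply Kalton's qualitative theorem as a black box. Everything else (iteration of sum-products, Theorem~\ref{ultrafilters}, finite-dimensional $L_1 \to \ell_1$ perturbation) is essentially routine.
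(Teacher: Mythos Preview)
First, a mismatch: the statement you were given, Theorem~\ref{thm:kalton}, is a cited result from \cite{AMM85, Kalton85} that the paper does not prove; your proposal is not a proof of Theorem~\ref{thm:kalton} but rather of Lemma~\ref{lem:L1Embedding}, which \emph{uses} Theorem~\ref{thm:kalton} as a black box. I will therefore compare your proposal against the paper's proof of Lemma~\ref{lem:L1Embedding}.

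Your approach is correct but takes a longer route than the paper. You fix $n$, show that $\ell_1(X_n)$ uniformly embeds into Hilbert with moduli independent of $n$, apply Theorem~\ref{thm:kalton} to each $X_n$ separately, and then---because Theorem~\ref{thm:kalton} is qualitative---mount an ultraproduct/compactness argument to extract a distortion bound uniform in $n$. That finitary step is correct in outline (finite representability plus the negative-definite-kernel characterization does transfer the uniform embedding to the ultraproduct, and $L_1$-embeddability transfers back), but it is substantial extra work.

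The paper sidesteps this entirely with one observation: rather than treat the $X_n$ one at a time, form the full sum $X = X_1 \oplus_{\ell_1} X_2 \oplus_{\ell_1} \cdots$ and show that $\ell_1(X)$ itself uniformly embeds into Hilbert, via the same nesting-plus-Theorem~\ref{ultrafilters} argument applied to $U_p = \ell_1^p(X_1 \oplus_{\ell_1} \cdots \oplus_{\ell_1} X_p)$. A single application of Theorem~\ref{thm:kalton} then gives a linear embedding of $X$ into $L_1$ with some fixed distortion $D$, and since every $X_n$ sits isometrically inside $X$, that same $D$ works for all $n$ simultaneously. Uniformity comes for free; no ultraproduct is needed.

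One minor slip in your write-up: the union $\bigcup_k \ell_1^k(X_n)$ is not $\ell_1(X_n)$ but only its dense subspace of finitely supported sequences; a one-line density remark (as in the paper) is needed to extend the uniform embedding to all of $\ell_1(X_n)$.
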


We are now ready to proceed with the proof of Lemma~\ref{lem:L1Embedding}.

\begin{proof}[Proof of Lemma~\ref{lem:L1Embedding}]
  Let $X = X_1 \oplus_{\ell_1} X_2 \oplus_{\ell_1} \ldots$. More formally,
  $$
  X = \Bigl\{(x_1, x_2, \ldots) : x_i \in X_i, \sum_i \|x_i\| < \infty\Bigr\},
  $$
  where the norm is defined as follows:
  $$
  \bigl\|(x_1, x_2, \ldots)\bigr\| = \sum_i \|x_i\|.
  $$
    We claim that the space $\ell_1(X)$ embeds uniformly into a Hilbert space.
    To see this, consider $U_p = \ell_1^p(X_1 \oplus_{\ell_1} X_2 \oplus_{\ell_1} \ldots \oplus_{\ell_1} X_p)$,
    which can be naturally seen as a subspace of $\ell_1(X)$.
    Then, $U_1 \subset U_2 \subset \ldots \subset U_p \subset \ldots \subset \ell_1(X)$ and
        $\bigcup_p U_p$ is dense in $\ell_1(X)$.
        By the assumption of the lemma, $U_p$ is isometrically embeddable into $X_m$
        for some $m$, thus, $U_p$ is uniformly embeddable into a Hilbert space
        with moduli independent of $p$.
        Now, by Theorem~\ref{ultrafilters},
        $\bigcup_p U_p$ is uniformly embeddable into a Hilbert space.
        Since $\bigcup_p U_p$ is dense in $\ell_1(X)$,
        the same holds also for the whole $\ell_1(X)$, as claimed.

        Finally, since $\ell_1(X)$ embeds uniformly into a Hilbert
        space, we can apply Theorem~\ref{thm:kalton} and conclude that
        $X$ is linearly embeddable into $L_1$.  The lemma follows
        since $X$ contains every $X_i$ as a subspace.
\end{proof}

\section{Acknowledgments}

We are grateful to Assaf Naor for pointing us to \cite{AMM85,Kalton85}, 
as well as for numerous very enlightening discussions throughout this
project.
We also thank Gideon Schechtman for useful discussions and explaining
some of the literature. We thank Piotr Indyk for
fruitful discussions and for encouraging us to work on this project.

\singlespacing

{\small
\bibliographystyle{alphaurlinit}
\bibliography{bibfile}
}

\onehalfspacing
\appendix
\section{EMD Reduction}
\label{apx:emdReduction}

Recall that $\EMD_n$ is a normed space on all signed measures on $[n]^2$
(that sum up to zero). We also take the view that a weighted set in
$[n]^2$ is in fact a measure on $[n]^2$.

\begin{lemma}
\label{lem:emdReduction}
Suppose the EMD metric between non-negative measures (of the same total measure)
admits a sketching algorithm $\sk$ with approximation $D > 1$ and sketch size $s$. 
Then the normed space $\EMD_n$ admits a sketching algorithm $\sk'$ 
with approximation $D$ and sketch size $O(s)$.
\end{lemma}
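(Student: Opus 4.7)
The plan is to reduce $\DTEP$ on signed measures in $\EMD_n$ to $\DTEP$ on non-negative measures by shifting both inputs by a common, publicly-known large non-negative measure $M$. I would fix $M$ to be the uniform measure on $[n]^2$ with each entry equal to a sufficiently large value $B$, chosen so that $M + p \geq 0$ pointwise for every admissible input $p \in \EMD_n$; in the standard computational setting, inputs have polynomially bounded entries, so $B = \poly(n)$ suffices. Since $\sum_i p_i = \sum_i q_i = 0$ for signed measures in $\EMD_n$, both $M + p$ and $M + q$ are non-negative measures with the same total mass $\sum_i M_i$, and are therefore valid inputs to the sketch $\sk$.

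Define the new sketch by $\sk'(x) := \sk(M + x)$. Correctness rests on the identity that, for non-negative measures $\mu, \nu$ of equal total mass, the transportation distance $\EMD(\mu, \nu)$ coincides with the $\EMD_n$-norm of the signed difference $\mu - \nu$; applied to $\mu = M + p$ and $\nu = M + q$ this gives
$$
\EMD(M + p, M + q) \;=\; \|(M + p) - (M + q)\|_{\EMD} \;=\; \|p - q\|_{\EMD}.
$$
Therefore, applying the decision procedure of $\sk$ to $\sk(M+p)$ and $\sk(M+q)$ solves $\DTEP(\EMD_n, D)$ with the same approximation $D$; the sketch size is $s$, becoming $O(s)$ after a constant-factor success-probability amplification if needed.

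The main obstacle I anticipate is choosing one fixed $M$ that renders all inputs non-negative after shifting. Under the standard assumption of polynomially bounded input entries this is immediate: $M$ is a public constant determined by the input bit-length, and the reduction goes through verbatim. If one wants to handle entries of unbounded magnitude, the natural fix is to run the above protocol in parallel across $O(1)$ geometrically growing scales of $M$ and combine the outcomes by majority; this preserves both the approximation factor $D$ and the $O(s)$ sketch-size bound.
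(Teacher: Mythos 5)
Your core identity is correct: for a \emph{common} shift $M$, $(M+p)-(M+q) = p-q$, so shifting by the same non-negative measure preserves the EMD norm and turns both arguments into non-negative measures of equal total mass. That is exactly the starting point of the paper's argument. The gap is in how you choose $M$.

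The lemma is stated for the normed space $\EMD_n$, which is all of $\{p \in \Rbb^{[n]^2}: \sum_i p_i = 0\}$ --- there is no standing assumption that entries are polynomially bounded, so there is no single public $B$ that works. Your fallback of ``$O(1)$ geometrically growing scales combined by majority'' does not resolve this: without a magnitude bound the number of required scales is unbounded, and even ignoring that, ``majority'' is the wrong combiner --- you need to detect which single scale both parties used, not vote over scales, since at a wrong scale the shifted point may be a negative measure (and even if you clip it, the EMD distance computed at that scale need not relate to $\|p-q\|_{\EMD}$). The underlying problem you are sidestepping is coordination: if Alice shifts by $M_A$ and Bob by $M_B \neq M_A$, then $M_A+p$ and $M_B+q$ have different total mass and are not even valid inputs to $\sk$, and $\EMD(M_A+p, M_B+q)$ is not $\|p-q\|_{\EMD}$. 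The paper handles this by letting each party pick a shift from its own minimum entry, rounding down to a multiple of $2$, and transmitting \emph{two} adjacent candidate shifts $b$ and $b-2$ together with a short hash of each. In the close case ($\|p-q\|_{\EMD} \le 1$) the two minima differ by at most $1$, so some pair of candidates coincides and the hashes reveal which; in the far case, either the candidates happen to coincide (and the EMD sketch answers correctly) or the hashes mismatch and the protocol safely outputs ``far.'' This input-dependent, hash-verified common shift is the step your proposal is missing.
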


\begin{proof}
The main idea is that if $x,y$ are signed measures and
we add a sufficiently large term $M>0$ to all of their coordinates,
then the resulting vectors $x'=x+M\cdot\vone$ and $y'=y+M\cdot\vone$ 
are measures (all their coordinates are non-negative) of the same total mass, and 
$\norm{x-y}_{\EMD}$ is equal to the EMD distance between measures $x',y'$.
The trouble is in identifying a large enough $M$. 
We use the values of $x$ and $y$ themselves to agree on $M$. Details follow.

Without loss of generality we can fix the $\DTEP$ threshold to be $r=1$.

We design the sketch $\sk'$ as follows. First choose a hash function
$h:\N\to \{0,1\}^9$ (using public randomness). 
Fix an input $x\in\R^{n^2}$ of total measure zero, i.e., $\sum_i x_i=0$.
Let $m(x)=\min_i x_i$, and let $b(x)$ be the largest multiple of $2$ that is smaller than $m(x)$.
Since $x$ has total measure zero, $b(x) < m(x) \leq 0$.
Now let $b^{(1)}(x)=b(x)$ and $b^{(2)}(x)=b(x)-2$,
and then $x^{(q)}=x-b^{(q)}(x)\cdot\vone$ for $q=1,2$.
Notice that in both cases 
$x^{(q)}> x\ge 0$ (component-wise). 
Now let the sketch $\sk'(x)$ be the concatenation of 
$\sk(x^{(q)}),h(b^{(q)}(x))$ for $q=1,2$.

The distinguisher works as follows, given two sketches 
$\sk'(x) = ( \sk(x^{(q)}),h(b^{(q)}(x)) )_{q=1,2}$
and $\sk'(y) = ( \sk(y^{(q)}),h(b^{(q)}(x)) )_{q=1,2}$.
If there are $q_x,q_y\in\{1,2\}$ 
whose hashes agree $h(b^{(q)}(x)) = h(b^{(q)}(y))$
(breaking ties arbitrarily if there are multiple possible agreements),
then output whatever the EMD metric distinguisher would output
on $\sk(x^{(q_x)}), \sk(y^{(q_y)})$. 
Otherwise output ``far'' 
(i.e., that $\|x-y\|_{\EMD}>D$).

To analyze correctness, consider the case when $\|x-y\|_{\EMD}\le
1$. Without loss of generality, suppose $m(x)\ge m(y)$. Then $m(x)-m(y)\le 1$ (otherwise
$x,y$ are further away in EMD norm than 1). 
Hence either $b(x)=b(y)$ or $b(x)=b(y)+2$. 
Then there exists a corresponding $q\in\set{1,2}$ 
for which the hashes agree $h(b^{(1)}(x))=h(b^{(q)}(y))$.
By properties of the hash function, 
with sufficiently large constant probability 
the hashes match only when the $b$'s match,
in which case the values $q_x,q_y$ used by the distinguisher satisfy 
$b^{(q_x)}(x)=b^{(q_y)}(y)$.
In this case, $\|x-y\|_{\EMD} = d_{\EMD}( x^{(q_x)}, y^{(q_y)}) $,
and the correctness now depends on $\sk$, 
and the distinguisher for the EMD metric.

Otherwise, if $\|x-y\|_{\EMD}>D$, either the $b$-values
coincide for some $q_x,q_y$ and then the above argument applies again, 
or with sufficiently large constant probability the hashes will not agree 
and the distinguisher outputs (correctly) ``far''.

There is a small loss in success probability due to use of the hash function, 
but that can be amplified back by independent repetitions.
\end{proof}

Notice that the above lemma assumes a sketching algorithm for the EMD metric
between any non-negative measures of the same total measure, 
and not only in the case where the total measure is $1$.
The proof can be easily modified so that any non-negative measure being used
always has a fixed total measure (say $1$, by simply scaling the inputs),
which translates to scaling the threshold $r$ of the DTEP problem.
This is acceptable because, under standard definitions,
a metric space is called sketchable if it admits a sketching scheme 
for every threshold $r>0$.


\end{document}